\newcommand{\vertiii}[1]{{\left\vert\kern-0.25ex\left\vert\kern-0.25ex\left\vert #1 
    \right\vert\kern-0.25ex\right\vert\kern-0.25ex\right\vert}}
\newcommand{\ket}[1]{|#1\rangle}
\newtheorem{thm}{Theorem}
\newtheorem*{thm*}{Theorem}
\newtheorem{cor}{Corollary}
\newtheorem{lem}{Lemma}
\newtheorem{prop}{Proposition}
\theoremstyle{definition}
\newtheorem{defn}{Definition}
\theoremstyle{remark}
\newtheorem{rem}{Remark}
\newcommand{\tr}{\mathrm{Tr}}
\newcommand{\cN}{\mathcal{N}}
\newcommand{\cB}{\mathcal{B}}
\newcommand{\cP}{\mathcal{P}}
\newcommand{\cH}{\mathcal{H}}
\newcommand{\id}{\mathrm{id}}
\begin{document}
\title{Quantum concentration inequalities}
\author[1]{Giacomo De Palma\thanks{giacomo.depalma@sns.it}}
\author[2]{Cambyse Rouz\'e\thanks{cambyse.rouze@tum.de}}
\affil[1]{Scuola Normale Superiore, 56126 Pisa, Italy }
\affil[2]{Technische Universit\"at M\"unchen, 85748 Garching, Germany}

\maketitle
\begin{abstract}
We establish Transportation Cost Inequalities (TCIs) with respect to the quantum Wasserstein distance by introducing quantum extensions of well-known classical methods: First, we generalize the Dobrushin uniqueness condition to prove that Gibbs states of 1D commuting Hamiltonians satisfy a TCI at any positive temperature and provide conditions under which this first result can be extended to non-commuting Hamiltonians. Next, using a non-commutative version of Ollivier's coarse Ricci curvature, we prove that high temperature Gibbs states of commuting Hamiltonians on arbitrary hypergraphs $H=(V,E)$ satisfy a TCI with constant scaling as $O(|V|)$. Third, we argue that the temperature range for which the TCI holds can be enlarged by relating it to recently established modified logarithmic Sobolev inequalities. Fourth, we prove that the inequality still holds for fixed points of arbitrary reversible local quantum Markov semigroups on regular lattices, albeit with slightly worsened constants, under a seemingly weaker condition of local indistinguishability of the fixed points. Finally, we use our framework to prove Gaussian concentration bounds for the distribution of eigenvalues of quasi-local observables and argue the usefulness of the TCI in proving the equivalence of the canonical and microcanonical ensembles and an exponential improvement over the weak Eigenstate Thermalization Hypothesis.
\end{abstract}

\section{Introduction}

Given a random variable $X$ of law $\mu$ taking values on a metric space $(\Omega,d_\Omega)$ and a function $f:\Omega\to \mathbb{R}$, a concentration of measure inequality quantifies the probability that the random variable $f(X)$ deviates from its mean or its median. Since the early age of the theory, concentration inequalities have seen many new methods, refinements and exciting applications to various areas of mathematics
\cite{raginsky2012concentration,boucheron2013concentration,kontorovich2017concentration}. Among the different classes of concentration inequalities, Gaussian concentration is arguably the most standard one: the measure $\mu$ is said to be 
\textit{sub-Gaussian} if there exist constants $K,\kappa>0$ such that, for all $A\subseteq \Omega$ with $\mu(A)\ge1/2$ we have for any $r\ge0$
\begin{align}\label{eq:Gaussconc}
    \mu\left(\{x\in \Omega:\,d_\Omega(x,A)<r\} \right)\ge 1-Ke^{-\kappa r^2}\,.
\end{align}
In her seminal work \cite{marton1986simple}, Marton made the beautiful observation that the above behavior can be obtained as a consequence of a \textit{transportation cost} inequality: if there exists $c>0$ such that, for any probability measure $\nu<\!<\mu$, 
\begin{align}\label{eq:wassers}\tag{$\operatorname{TC}(c)$}
    W_1(\mu,\nu)\le \sqrt{c \,S(\nu\|\mu)}\,,
\end{align}
then \eqref{eq:Gaussconc} holds with constants $\kappa=\frac{1}{c}$ and $K=1$ for all $r>\sqrt{c\ln 2}$. Here, $S(\nu\|\mu)$ refers to the relative entropy between the measures $\nu$ and $\mu$, whereas the quantity $W_1(\mu,\nu)$ in \eqref{eq:wassers} is the Wasserstein distance between the two measures $\mu,\nu$:
\begin{align*}
    W_1(\mu,\nu):=\sup_{\|f\|_L\le 1}\,\big|\mathbb{E}_\mu[f]-\mathbb{E}_\nu[f]\big|\,.
\end{align*}
Later, \cite{bobkov1999exponential} proved that transportation cost inequalities are in fact equivalent to the property of sub-Gaussianity: more precisely, \eqref{eq:wassers} holds if and only if for all Lipschitz functions $f:\Omega\to\mathbb{R}$,
\begin{align*}
    \mathbb{P}_\mu\Big(\big|f(X)-\mathbb{E}_\mu[f(X)]\big|>t  \Big)\le 2\,e^{-\frac{t^2}{c}}\,,\forall t\ge 0\,.
\end{align*}
One of the main advantages of transportation cost inequalities is their tensorization property: assume that $\mu$ satisfies $\operatorname{TC}(c)$, then $\mu^{\otimes n}$ satisfies $\operatorname{TC}(nc)$ for all $n\in\mathbb{N}$, where the set $\Omega^n$ is provided with the metric 
\begin{align*}
    d_n(x^n,y^n):=\sum_{i=1}^n\,d_{\Omega}(x_i,y_i)\,.
\end{align*}
Perhaps the simplest example of that sort is given by taking $\Omega^n=[d]^n$ endowed with the Hamming distance $d_H$. In the case $n=1$, the corresponding Wasserstein distance reduces to the total variation, and $\operatorname{TC}(1/2)$ holds for any measure $\mu$, since it simply reduces to Pinsker's inequality. For $n\ge 1$, $\mu^{\otimes n}$ satisfies $\operatorname{TC}(n/2)$.

While the theory of concentration inequalities for i.i.d.~random variables is by now well understood, things become more challenging when the random variables are allowed to depend on each other \cite{kontorovich2017concentration,dobrushin1970prescribing}. One way to extend concentration bounds to weakly dependent random variables is to assume that their joint law $\mu$ satisfies the so-called Dobrushin uniqueness condition \cite{dobrushin1970prescribing}. Dobrushin's uniqueness condition plays an important role in the study of Gibbs measures in the one-phase region, however it often turns out to be a very strong requirement on the measure $\mu$. More recently, Marton gave an attempt at extending the i.i.d.~theory beyond the mere Gibbs setting \cite{marton2019logarithmic}. Her main result consists in a logarithmic Sobolev inequality for a generic measure $\mu$ - well known to imply transportation cost inequalities - under the so-called Dobrushin--Shlosman mixing condition \cite{dobrushin1987completely}, the latter condition being weaker than Dobrushin's uniqueness condition. As mentioned in \cite{kulske2003concentration}, such paths to establish Gaussian concentration suffer from the difficulty of deriving explicit constants. Moreover, the result of \cite{marton2019logarithmic} also relies on the crucial assumption that the measure $\mu$ has full support.

Recently, concentration inequalities have attracted much attention in the communities of random matrix theory, quantum information theory and operator algebras \cite{Junge2014,Tropp2015,Huang2021,Rouz2019,de2020quantum,DePalma2021,Golse2016,cole2021quantum,carlen2017gradient}. In \cite{de2020quantum}, a quantum Wasserstein distance of order 1  (or quantum $W_1$ distance) was defined on the set of the quantum states of $n$ qudits with the property that it strictly reduces to the classical Wasserstein distance on $[d]^n$ for states that are diagonal in the computational basis.
This quantum generalization of the Wasserstein distance is based on the notion of neighboring states.
Two quantum states of $n$ qudits are neighboring if they differ only in one qudit, \emph{i.e.}, if they coincide after that qudit is discarded.
The quantum $W_1$ distance is then that induced by the maximum norm that assigns distance at most one to every couple of neighboring states \cite[Definition 4]{de2020quantum}.
Such norm is called quantum $W_1$ norm and is denoted with $\left\|\cdot\right\|_{W_1}$.
The quantum $W_1$ norm proposed in Ref. \cite{de2020quantum} admits a dual formulation in terms of a quantum generalization of the Lipschitz constant.
Denoting with $\mathcal{O}_n$ the set of the observables of $n$ qudits, the Lipschitz constant of the observable $H\in\mathcal{O}_n$ is defined as \cite[Section V]{de2020quantum} 
\begin{align}
    \|H\|_L:=2\max_{i\in [n]}\min\left\{\left\|H- H^{(i)}\right\|_\infty:H^{(i)}\in\mathcal{O}_n\text{ does not act on the $i$-th qudit}\right\}\,.
\end{align}
Then, the quantum $W_1$ distance between the states $\rho$ and $\omega$ can also be expressed as \cite[Section V]{de2020quantum}
\begin{align}
  \left\|\rho-\omega\right\|_{W_1}  =\max\left\{\tr\left[\left(\rho-\omega\right)H\right]:\,H\in\mathcal{O}_n,\,\|H\|_L\le 1\right\}\,.
\end{align}
Moreover, in \cite{de2020quantum} it was showed that $\operatorname{TC}(n/2)$ holds for any tensor product $\omega=\omega_1\otimes\ldots \otimes \omega_n$ of quantum states, hence extending Marton's original inequality with the exact same constant: for any state $\rho$ of $n$ qudits, 
\begin{align*}
    \left\|\rho-\omega_1\otimes\ldots\otimes \omega_n\right\|_{W_1}\le \sqrt{\frac{n}{2}\,S(\rho\|\omega_1\otimes\ldots\otimes \omega_n)}\,,
\end{align*}
where $S(\rho\|\omega):=\tr[\rho\,(\ln\rho-\ln\omega)]$ denotes Umegaki's relative entropy between the states $\rho$ and $\omega$. 

\paragraph{Main results:}
In this paper, we prove that any of the following conditions implies a transportation cost inequality:
\begin{itemize}
    \item[(i)] A non-commutative Dobrushin uniqueness condition (\autoref{transportationcostdobrushin});
    \item[(ii)] A generalization of Ollivier's coarse Ricci curvature bound (\autoref{Lipcontract});
    \item[(iii)] A modified logarithmic Sobolev inequality condition (\autoref{MLSI});
    \item[(iv)] A condition of local indistinguishability of the state (\autoref{localindistinguishability}).
\end{itemize}

Each of these methods comes with its strengths and weaknesses:
\begin{itemize}
    \item[(i)] The non-commutative Dobrushin uniqueness condition implies a nontrivial TCI at any temperature (see \autoref{rem:eta}), but the scaling of the constant $c$ with the number of subsystems is optimal only in one dimension (see \autoref{rem:1D}).
    \item[(ii)] The coarse Ricci curvature bound provides TC inequalities for essentially any geometry, but it is only valid above a threshold temperature that depends on the locality of the Hamiltonian. Furthermore, such threshold temperature is in practice strictly larger than the true critical temperature (see \autoref{prop:curvaturebetac}).
    \item[(iii)] Quantum modified logarithmic Sobolev inequalities are typically more difficult to prove than their classical counterparts, and are currently only proven to hold in specific cases. However, for one-dimensional systems, a recently derived modified logarithmic Sobolev inequality \cite{bardet2021entropy,bardet2021rapid} provides us with TC (up to polylogarithmic overhead) at any positive temperature.
    \item[(iv)] The condition of local indistinguishability of the state for regular lattices. Although the condition can be checked for classical systems, we do not yet have a way to prove it in the quantum setting.
\end{itemize}

We conclude the article with two natural applications of our bounds. First, we derive Gaussian concentration bounds for a large class of Lipschitz observables whenever the state $\omega$ is that of a commuting Hamiltonian at large enough temperature (\autoref{concentration}). Second, we argue on the use of the transportation cost inequality in proving the equivalence between the microcanonical and the canonical ensembles and an exponential improvement over the weak Eigenstate Thermalization Hypothesis (\autoref{ETH}).

\section{Notations and basic definitions}

Given a finite set $V$, we denote by $\cH_V=\bigotimes_{v\in V}\cH_v$ the Hilbert space of $n=|V|$ qudits (\emph{i.e.}, $\cH_v\equiv \mathbb{C}^d$ for all $v\in V$) and by $\cB_V$ the algebra of linear operators on $\cH_V$. $\mathcal{O}_V$ corresponds to the space of self-adjoint linear operators on $\cH_V$, whereas $\mathcal{O}^T_V\subset \mathcal{O}_V$ is the subspace of traceless self-adjoint linear operators. $\mathcal{O}_V^+$ denotes the cone of positive semidefinite linear operators on $\cH_V$ and $\mathcal{S}_V\subset \mathcal{O}_V^+$ denotes the set of quantum states. We denote by $\cP_V$ the set of probability measures on $[d]^V$. For any subset $A\subseteq V$, we use the standard notations $\mathcal{O}_A, \mathcal{S}_A\ldots$ for the corresponding objects defined on subsystem $A$. Given a state $\rho\in\mathcal{S}_V$, we denote by $\rho_A$ its marginal onto the subsystem $A$. For any $X\in\mathcal{O}_V$, we denote by $\|X\|_1$ its trace norm. The identity on $\mathcal{O}_{v}$, $v\in V$, is denoted by $\mathbb{I}_v$.

Given two states $\rho,\,\omega\in \mathcal{S}_V$ such that $\operatorname{supp}(\rho)\subseteq \operatorname{supp}(\omega)$, their quantum relative entropy is defined as \cite{nielsen2002quantum,wilde2013quantum,holevo2012quantum}
\begin{align}
    S(\rho\|\omega)=\tr\big[\rho\,(\ln\rho-\ln\omega) \big]\,.
\end{align}
Whenever $\rho=\rho_{AB}$ is a bipartite state and $\omega=\rho_A\otimes\rho_B$, their relative entropy reduces to the mutual information 
\begin{align}
    I(A;B)_\rho:=S(\rho_{AB}\|\rho_A\otimes \rho_B)\,.
\end{align}
In the next sections, we also utilize the measured relative entropy \cite{donald1986relative,petz1986sufficient,hiai1991proper,berta2017variational}
\begin{align}
    S_{\mathbb{M}}(\rho\|\omega):=\sup_{(\mathcal{X},M)}\,S(P_{\rho,M}\|P_{\sigma,M})\,,
\end{align}
where the supremum above is over all positive operator valued measures $M$ that map the input quantum state to a probability distribution on a finite set $\mathcal{X}$ with probability mass function given by $P_{\rho,M}(x)=\tr\rho M(x)$.

In this paper, we study inequalities relating the $W_1$ distance between two states to their relative entropy. More precisely, for a fixed state $\omega\in \mathcal{S}_V$, we are interested in upper bounding the best constant $C(\omega)>0$ such that, for all $\rho\in \mathcal{S}_V$ with $\operatorname{supp}(\rho)\subseteq\operatorname{supp}(\omega)$, 
\begin{align}\label{eq:TC(c)}
    \left\|\rho-\omega\right\|_{W_1}\le \,\sqrt{C(\omega)\,S(\rho\|\omega)}\,.
\end{align}
In general, given a constant $c\ge C(\omega)$, we refer to the above inequality for $C(\omega)$ replaced by $c$ as a \textit{transportation cost inequality}, denoted by $\operatorname{TC}(c)$. As mentioned in the introduction, the following holds \cite[Theorem 2]{de2020quantum}:
\begin{prop}\label{prop:Martoniid}
For any product state $\omega\in\mathcal{S}_V$,
\begin{equation}
C(\omega) \le \frac{|V|}{2}\,.
\end{equation}
\end{prop}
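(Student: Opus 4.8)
The plan is to mimic the classical tensorization argument for transportation cost inequalities, using the chain rule for relative entropy on one side and a telescoping/triangle-inequality decomposition of the $W_1$ norm on the other. Write $\omega = \omega_1 \otimes \cdots \otimes \omega_n$ with $n = |V|$, and fix an arbitrary state $\rho$ with $\operatorname{supp}(\rho) \subseteq \operatorname{supp}(\omega)$. The idea is to interpolate between $\rho$ and $\omega$ by successively replacing marginals. Concretely, for $k = 0, \dots, n$ I would define the hybrid state $\rho^{(k)}$ that agrees with $\rho$ on the first $k$ qudits (in a suitable conditional sense) and equals the product $\omega$ on the remaining $n-k$ qudits; for instance $\rho^{(k)} := \rho_{[k]} \otimes \omega_{k+1} \otimes \cdots \otimes \omega_n$, so that $\rho^{(0)} = \omega$ and $\rho^{(n)} = \rho$. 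By the triangle inequality for $\left\|\cdot\right\|_{W_1}$,
\begin{align*}
\left\|\rho - \omega\right\|_{W_1} \le \sum_{k=1}^n \left\|\rho^{(k)} - \rho^{(k-1)}\right\|_{W_1}\,.
\end{align*}

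Next I would bound each term $\left\|\rho^{(k)} - \rho^{(k-1)}\right\|_{W_1}$. The two states $\rho^{(k)}$ and $\rho^{(k-1)}$ differ only on the $k$-th qudit after tracing out everything else appropriately — more precisely, they have the same marginal on $[k-1]$ and the same tensor structure on $\{k+1,\dots,n\}$, differing only in how the $k$-th qudit is attached. Because of the neighboring-states structure underlying the $W_1$ norm, such a difference should be controlled by a trace-distance-type quantity on the single qudit $k$ conditioned on $[k-1]$. The cleanest route is: since $\rho^{(k)}$ and $\rho^{(k-1)}$ are neighboring (they coincide upon discarding qudit $k$), by definition of the $W_1$ norm $\left\|\rho^{(k)} - \rho^{(k-1)}\right\|_{W_1} \le \frac{1}{2}\left\|\rho^{(k)} - \rho^{(k-1)}\right\|_1$, and then apply quantum Pinsker to get $\le \sqrt{\tfrac12 S(\rho^{(k)}\|\rho^{(k-1)})}$. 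One then identifies $S(\rho^{(k)}\|\rho^{(k-1)})$ with a conditional mutual-information-like term: because both states share the $[k-1]$ marginal and the trailing product factors, the relative entropy collapses to $S\big(\rho_{[k]} \big\| \rho_{[k-1]} \otimes \omega_k\big)$, which by the chain rule equals $S(\rho_k \| \omega_k \,|\, \text{given } [k-1])$ in the sense of a conditional relative entropy, and summing these over $k$ telescopes via the chain rule for relative entropy to exactly $S(\rho \| \omega)$.

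Finally I would combine the pieces with Cauchy–Schwarz: from $\left\|\rho-\omega\right\|_{W_1} \le \sum_{k=1}^n \sqrt{\tfrac12\, s_k}$ where $\sum_k s_k \le S(\rho\|\omega)$, Cauchy–Schwarz gives $\left\|\rho-\omega\right\|_{W_1} \le \sqrt{n} \cdot \sqrt{\tfrac12 \sum_k s_k} \le \sqrt{\tfrac{n}{2} S(\rho\|\omega)}$, which is exactly $C(\omega) \le n/2 = |V|/2$.

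The step I expect to be the main obstacle is making rigorous the claim that $\left\|\rho^{(k)} - \rho^{(k-1)}\right\|_{W_1}$ reduces cleanly to a single-qudit conditional quantity and that the corresponding relative entropies telescope. The subtlety is that $\rho^{(k)}$ as I defined it keeps the full marginal $\rho_{[k]}$ but forgets correlations between qudit $k$ and the rest; one must check that the pair $(\rho^{(k)}, \rho^{(k-1)})$ genuinely differs in only one qudit in the precise sense used in the definition of the $W_1$ norm, and that the chain rule for Umegaki relative entropy applies to this particular nested family (it does, since $\omega$ is a product, which is exactly why the product assumption is essential). A cleaner alternative, which I would fall back on if the telescoping is awkward, is to invoke the recursive/tensorization property of $\operatorname{TC}$ established for the classical case together with the single-qudit base case $C(\omega_i) \le 1/2$ (which is quantum Pinsker composed with $\left\|\cdot\right\|_{W_1} \le \frac12\left\|\cdot\right\|_1$ on one qudit), and then run an induction on $n$ using a conditional version of the $W_1$ triangle inequality.
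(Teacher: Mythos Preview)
Your proposal is correct, and the concerns you flag are not real obstacles: with $\rho^{(k)}=\rho_{[k]}\otimes\omega_{k+1}\otimes\cdots\otimes\omega_n$ one has $\mathrm{Tr}_k(\rho^{(k)}-\rho^{(k-1)})=0$ exactly, so \autoref{prop2} gives $\|\rho^{(k)}-\rho^{(k-1)}\|_{W_1}=\tfrac12\|\rho_{[k]}-\rho_{[k-1]}\otimes\omega_k\|_1$, and the telescoping $\sum_{k=1}^n S(\rho_{[k]}\|\rho_{[k-1]}\otimes\omega_k)=S(\rho\|\omega)$ is an elementary computation using $\ln\omega=\sum_k\ln\omega_k$, valid precisely because $\omega$ is a product. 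Cauchy--Schwarz then closes the argument with the sharp constant $|V|/2$.

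Regarding comparison with the paper: the paper does not give its own proof of this proposition but simply quotes it as \cite[Theorem~2]{de2020quantum}. Your argument is the natural quantum adaptation of Marton's classical tensorization proof, and it is essentially the argument one finds in that reference (and also the product-state specialization of the recovery-map machinery the present paper develops in \autoref{lem:recov}, where for product $\omega$ the recovery map $\Phi_{A_1^{i-1}\to A_1^i}$ reduces to tensoring by $\omega_{A_i}$). So there is no genuinely different route here: you have reconstructed the standard proof.
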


In the next sections, we aim at recovering the linear dependence of the constant $C(\omega)$ on the size $n=|V|$ of the system under various measures of independence.

We will need the following properties of the quantum $W_1$ distance:

\begin{prop}[{\cite[Proposition 2]{de2020quantum}}]\label{prop2}
The quantum $W_1$ distance coincides with the trace distance for quantum states that differ in only one site, \emph{i.e.}, for any $X\in\mathcal{O}_V^T$ such that $\mathrm{Tr}_vX = 0$ for some $v\in V$ we have
\begin{equation}
\left\|X\right\|_{W_1} = \frac{1}{2}\left\|X\right\|_1\,.
\end{equation}
\end{prop}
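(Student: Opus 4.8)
The plan is to establish the two inequalities $\left\|X\right\|_{W_1}\ge\tfrac12\left\|X\right\|_1$ and $\left\|X\right\|_{W_1}\le\tfrac12\left\|X\right\|_1$ separately, exploiting the two equivalent descriptions of the $W_1$ norm recalled above: on the one hand it is the \emph{largest} norm on $\mathcal{O}_V^T$ assigning norm at most $1$ to every difference of neighboring states, and on the other hand it admits the dual (Lipschitz) formulation $\left\|X\right\|_{W_1}=\max\{\tr[XH]:\,H\in\mathcal{O}_V,\,\left\|H\right\|_L\le1\}$. I would point out at the outset that the lower bound $\left\|X\right\|_{W_1}\ge\tfrac12\left\|X\right\|_1$ holds for \emph{every} $X\in\mathcal{O}_V^T$ with no locality assumption, so the actual content of the proposition — and the only place the hypothesis $\tr_vX=0$ enters — is the matching upper bound.

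For the lower bound I would argue by maximality: since $\tfrac12\left\|\cdot\right\|_1$ is a norm on $\mathcal{O}_V^T$ and every difference of neighboring states obeys $\tfrac12\left\|\rho-\sigma\right\|_1\le\tfrac12(\left\|\rho\right\|_1+\left\|\sigma\right\|_1)=1$, the defining maximality of $\left\|\cdot\right\|_{W_1}$ forces $\tfrac12\left\|X\right\|_1\le\left\|X\right\|_{W_1}$. Equivalently, in the dual picture one can plug in $H=\tfrac12\,\mathrm{sgn}(X)$, the sign operator of $X$, which has $\left\|H\right\|_\infty\le\tfrac12$ and hence $\left\|H\right\|_L\le2\left\|H\right\|_\infty\le1$ — taking $H^{(i)}=0$ in the minimum defining $\left\|\cdot\right\|_L$ — while $\tr[XH]=\tfrac12\left\|X\right\|_1$.

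For the upper bound I would exhibit $X$ explicitly as a scaled difference of neighboring states (if $X=0$ both sides vanish, so assume $X\ne0$). Write the Jordan decomposition $X=X_+-X_-$ with $X_+X_-=0$; since $\tr X=0$, both parts have the same trace $t:=\tr X_+=\tr X_-=\tfrac12\left\|X\right\|_1>0$, so $\rho:=X_+/t$ and $\sigma:=X_-/t$ are quantum states with $X=t(\rho-\sigma)$. Here the hypothesis does its work: $\tr_v\rho-\tr_v\sigma=\tr_vX/t=0$, so $\rho$ and $\sigma$ agree once the $v$-th qudit is discarded, i.e.\ they are neighboring; hence $\left\|\rho-\sigma\right\|_{W_1}\le1$ by construction of the norm, and $\left\|X\right\|_{W_1}=t\left\|\rho-\sigma\right\|_{W_1}\le t=\tfrac12\left\|X\right\|_1$. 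One may instead argue dually: for $H^{(v)}=\mathbb{I}_v\otimes G$ not acting on site $v$ one has $\tr[XH^{(v)}]=\tr[(\tr_vX)\,G]=0$, so for any admissible $H$, $\tr[XH]=\tr[X(H-H^{(v)})]\le\left\|X\right\|_1\left\|H-H^{(v)}\right\|_\infty$; optimizing over $H^{(v)}$ and using $\min_{H^{(v)}}\left\|H-H^{(v)}\right\|_\infty\le\tfrac12\left\|H\right\|_L\le\tfrac12$ gives $\tr[XH]\le\tfrac12\left\|X\right\|_1$.

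I do not expect a genuine obstacle here: the statement is essentially a calibration lemma, and the one thing to get right is that $\tr_vX=0$ is \emph{precisely} the condition turning $X/t$ into a difference of states that coincide off site $v$ (equivalently, the condition making $\tr[XH^{(v)}]$ vanish for all $H^{(v)}$ not acting on $v$). The factor $2$ and the minimization over operators not acting on site $i$ in the definition of $\left\|\cdot\right\|_L$ are exactly what is needed for this identification to be tight, and thus for the upper and lower bounds to meet.
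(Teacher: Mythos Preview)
Your proof is correct; both the primal (Jordan decomposition) and dual (Lipschitz) arguments for each inequality go through exactly as you describe. Note, however, that the present paper does not actually prove this proposition: it is quoted without proof as \cite[Proposition 2]{de2020quantum}, so there is no ``paper's own proof'' to compare against here --- your argument is the standard calibration and is essentially what one finds in the cited reference.
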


\begin{prop}[{\cite[Proposition 5]{de2020quantum}}]\label{prop5}
The quantum $W_1$ distance between two quantum states that differ only in the region $A\subseteq V$ is at most $2\left|A\right|$ times their trace distance, \emph{i.e.}, for any $X\in\mathcal{O}_V^T$ such that $\mathrm{Tr}_AX=0$ we have
\begin{equation}
\left\|X\right\|_{W_1} \le \left|A\right|\left\|X\right\|_1\,.
\end{equation}
\end{prop}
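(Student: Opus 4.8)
The plan is to reduce the bound to \autoref{prop2} by decomposing $X$ into a sum of $|A|$ operators, each traceless on a single site of $A$, with trace norms controlled by $\|X\|_1$. Enumerate $A=\{a_1,\dots,a_m\}$ with $m=|A|$, and for a site $v\in V$ let $E_v$ be the quantum channel that traces out the $v$-th qudit and replaces it by the maximally mixed state, $E_v(Z)=\frac{\mathbb{I}_v}{d}\otimes \mathrm{Tr}_v Z$. Writing $A_k=\{a_1,\dots,a_k\}$ and $E_{A_k}=E_{a_k}\circ\cdots\circ E_{a_1}$ (so that $E_{A_0}=\mathrm{id}$, the $E_{a_i}$ commute since they act on disjoint sites, and $E_{A_k}(Z)=\frac{\mathbb{I}_{A_k}}{d^{k}}\otimes\mathrm{Tr}_{A_k}Z$), the hypothesis $\mathrm{Tr}_A X=0$ gives $E_{A_m}(X)=0$. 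We may therefore telescope
\begin{equation*}
X=\sum_{k=1}^m\big(E_{A_{k-1}}(X)-E_{A_k}(X)\big)=\sum_{k=1}^m Y_k,\qquad Y_k:=\big(\mathrm{id}-E_{a_k}\big)\big(E_{A_{k-1}}(X)\big),
\end{equation*}
where the rewriting of $Y_k$ uses $E_{A_k}=E_{a_k}\circ E_{A_{k-1}}$ together with $a_k\notin A_{k-1}$.

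Next I would check that each $Y_k$ meets the hypotheses of \autoref{prop2}: it is self-adjoint and traceless by construction, and since $E_{a_k}$ leaves the marginal on site $a_k$ unchanged (it only overwrites that site with the maximally mixed state), $\mathrm{Tr}_{a_k}Y_k=\mathrm{Tr}_{a_k}E_{A_{k-1}}(X)-\mathrm{Tr}_{a_k}E_{a_k}\big(E_{A_{k-1}}(X)\big)=0$. Hence \autoref{prop2} yields $\|Y_k\|_{W_1}=\tfrac12\|Y_k\|_1$. To bound $\|Y_k\|_1$, observe that each $E_v$ is positive and trace-preserving, hence a contraction in trace norm on self-adjoint operators; so $\|E_{A_{k-1}}(X)\|_1\le\|X\|_1$, and by the triangle inequality $\|Y_k\|_1\le 2\|E_{A_{k-1}}(X)\|_1\le 2\|X\|_1$.

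Finally, the triangle inequality for the norm $\|\cdot\|_{W_1}$ gives
\begin{equation*}
\|X\|_{W_1}\le\sum_{k=1}^m\|Y_k\|_{W_1}=\sum_{k=1}^m\tfrac12\|Y_k\|_1\le\sum_{k=1}^m\|X\|_1=|A|\,\|X\|_1,
\end{equation*}
which is the claim. The argument is short; the one point that needs care is verifying that the telescoping increments $Y_k$ are individually traceless on the single site $a_k$ — this is precisely where the ordered composition of the replace-by-maximally-mixed channels is used. Equivalently, one could run the same argument in the primal, neighboring-states formulation of $\|\cdot\|_{W_1}$, since each $Y_k$ is, up to normalization, a difference of two states that agree away from site $a_k$.
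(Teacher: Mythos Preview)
Your proof is correct. The paper does not prove \autoref{prop5} directly (it is cited from \cite{de2020quantum}), but it does prove the slight generalization \autoref{lem:XA}, and your argument is essentially the same as that proof: a telescoping decomposition along an ordering of the sites of $A$, using \autoref{prop2} on each increment $Y_k=(\mathrm{id}-E_{a_k})E_{A_{k-1}}(X)$ (the paper phrases it as an induction peeling off one site at a time, you write it as a single sum, but the content is identical).
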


\begin{prop}[Tensorization {\cite[Proposition 4]{de2020quantum}}]\label{prop4}
The quantum $W_1$ distance is additive with respect to the tensor product, \emph{i.e.}, let $A,\,B$ be disjoint subsets of $V$.
Then, for any $\rho_A,\,\sigma_A\in\mathcal{S}_A$ and any $\rho_B,\,\sigma_B\in\mathcal{S}_B$ we have
\begin{equation}
\left\|\rho_A\otimes\rho_B - \sigma_A\otimes\sigma_B\right\|_{W_1} = \left\|\rho_A - \sigma_A\right\|_{W_1} + \left\|\rho_B - \sigma_B\right\|_{W_1}\,.
\end{equation}
\end{prop}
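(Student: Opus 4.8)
The plan is to prove the two inequalities separately; since $A$ and $B$ are disjoint we may identify the ambient system with $A\cup B$ and work throughout with the dual (Lipschitz) formulation $\|\rho-\omega\|_{W_1}=\max\{\operatorname{Tr}[(\rho-\omega)H]:H\in\mathcal{O}_{A\cup B},\,\|H\|_L\le1\}$. The only genuinely $W_1$-specific input I would isolate is a bookkeeping lemma for how $\|\cdot\|_L$ behaves under two elementary operations: (a) forming $H_A\otimes\mathbb{I}_B+\mathbb{I}_A\otimes H_B$ out of observables $H_A\in\mathcal{O}_A$ and $H_B\in\mathcal{O}_B$; and (b) applying a partial expectation such as $\Phi_B\colon X\mapsto\operatorname{Tr}_B[(\mathbb{I}_A\otimes\rho_B)X]$, which is the Heisenberg dual of the channel $Y\mapsto Y\otimes\rho_B$, hence a unital completely positive map $\mathcal{O}_{A\cup B}\to\mathcal{O}_A$ and in particular an operator-norm contraction. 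I would also record that the maxima in the dual formula are attained, e.g.\ by restricting to traceless $H$, on which $\|\cdot\|_L$ is a genuine norm with compact unit ball.

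\emph{Superadditivity ($\ge$).} Pick $H_A,H_B$ attaining $\|\rho_A-\sigma_A\|_{W_1}=\operatorname{Tr}[(\rho_A-\sigma_A)H_A]$ and $\|\rho_B-\sigma_B\|_{W_1}=\operatorname{Tr}[(\rho_B-\sigma_B)H_B]$ with $\|H_A\|_L\le1$ and $\|H_B\|_L\le1$, and set $H:=H_A\otimes\mathbb{I}_B+\mathbb{I}_A\otimes H_B$. For $v\in A$ and any $H_A^{(v)}\in\mathcal{O}_A$ not acting on $v$, the operator $H_A^{(v)}\otimes\mathbb{I}_B+\mathbb{I}_A\otimes H_B$ does not act on $v$ and differs from $H$ by $(H_A-H_A^{(v)})\otimes\mathbb{I}_B$, of operator norm $\|H_A-H_A^{(v)}\|_\infty$; minimizing over $H_A^{(v)}$ shows the $v$-contribution to $\|H\|_L$ is bounded by that to $\|H_A\|_L$, and symmetrically for $v\in B$, so $\|H\|_L\le\max\{\|H_A\|_L,\|H_B\|_L\}\le1$. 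Using $\operatorname{Tr}\rho_A=\operatorname{Tr}\sigma_A=\operatorname{Tr}\rho_B=\operatorname{Tr}\sigma_B=1$ one computes $\operatorname{Tr}[(\rho_A\otimes\rho_B-\sigma_A\otimes\sigma_B)H]=\operatorname{Tr}[(\rho_A-\sigma_A)H_A]+\operatorname{Tr}[(\rho_B-\sigma_B)H_B]=\|\rho_A-\sigma_A\|_{W_1}+\|\rho_B-\sigma_B\|_{W_1}$, and the left-hand side is at most $\|\rho_A\otimes\rho_B-\sigma_A\otimes\sigma_B\|_{W_1}$ by the dual formula.

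\emph{Subadditivity ($\le$).} Pick $H\in\mathcal{O}_{A\cup B}$ with $\|H\|_L\le1$ attaining $\|\rho_A\otimes\rho_B-\sigma_A\otimes\sigma_B\|_{W_1}$, and split $\rho_A\otimes\rho_B-\sigma_A\otimes\sigma_B=(\rho_A-\sigma_A)\otimes\rho_B+\sigma_A\otimes(\rho_B-\sigma_B)$. For the first summand, $\operatorname{Tr}[((\rho_A-\sigma_A)\otimes\rho_B)H]=\operatorname{Tr}_A[(\rho_A-\sigma_A)\Phi_B(H)]$; since $\Phi_B$ is unital, completely positive, hence an operator-norm contraction, and sends every $H^{(v)}$ ($v\in A$) not acting on $v$ to an operator not acting on $v$, one gets $\|\Phi_B(H)\|_L\le\|H\|_L\le1$ and therefore $\operatorname{Tr}_A[(\rho_A-\sigma_A)\Phi_B(H)]\le\|\rho_A-\sigma_A\|_{W_1}$. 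Treating the second summand symmetrically with $\Phi_A\colon X\mapsto\operatorname{Tr}_A[(\sigma_A\otimes\mathbb{I}_B)X]$ gives the bound $\|\rho_B-\sigma_B\|_{W_1}$, and adding the two yields $\|\rho_A\otimes\rho_B-\sigma_A\otimes\sigma_B\|_{W_1}\le\|\rho_A-\sigma_A\|_{W_1}+\|\rho_B-\sigma_B\|_{W_1}$. (Alternatively, in the min formulation of $\|\cdot\|_{W_1}$ compatible with \autoref{prop2}, I would take optimal site-decompositions $\rho_A-\sigma_A=\sum_{i\in A}X^{(i)}$ and $\rho_B-\sigma_B=\sum_{j\in B}Z^{(j)}$ with $\operatorname{Tr}_iX^{(i)}=\operatorname{Tr}_jZ^{(j)}=0$, and observe that $\sum_{i\in A}X^{(i)}\otimes\rho_B+\sum_{j\in B}\sigma_A\otimes Z^{(j)}$ is a feasible decomposition of $\rho_A\otimes\rho_B-\sigma_A\otimes\sigma_B$ over the sites of $A\cup B$ with the same total weight, by multiplicativity of the trace norm under tensor products.)

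I expect the main obstacle to be exactly the two Lipschitz-constant estimates in the bookkeeping lemma: that $\|H_A\otimes\mathbb{I}_B+\mathbb{I}_A\otimes H_B\|_L\le\max\{\|H_A\|_L,\|H_B\|_L\}$, and that a partial expectation against a fixed state does not increase $\|\cdot\|_L$. Both rest on the two simple facts that ``not acting on site $v$'' is preserved by tensoring with an identity and by $\Phi_B$, and that the operator norm is non-increasing under these unital completely positive maps; this is the one place where the structure of $\|\cdot\|_{W_1}$ (states differing in a single site) genuinely enters and so should be written out with care. The remaining ingredients — the two algebraic identities for $\operatorname{Tr}[(\rho_A\otimes\rho_B-\sigma_A\otimes\sigma_B)H]$ and the existence of optimizers — are routine.
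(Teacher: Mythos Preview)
The paper does not give its own proof of this proposition: it is quoted verbatim from \cite[Proposition 4]{de2020quantum} and simply cited. There is therefore nothing in the present paper to compare your argument against.

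Your proof is correct. Both inequalities go through exactly as you describe via the dual (Lipschitz) formulation, and the two bookkeeping facts you isolate --- that $\|H_A\otimes\mathbb{I}_B+\mathbb{I}_A\otimes H_B\|_L\le\max\{\|H_A\|_L,\|H_B\|_L\}$ and that a partial expectation against a fixed state contracts $\|\cdot\|_L$ --- are correctly argued from the observation that ``not acting on site $v$'' is preserved under tensoring with identities and under unital CP maps of this form. The alternative primal argument you sketch for the $\le$ direction (taking optimal single-site decompositions of $\rho_A-\sigma_A$ and $\rho_B-\sigma_B$ and tensoring them with $\rho_B$ and $\sigma_A$ respectively) is essentially the route taken in the original reference \cite{de2020quantum}, which works throughout with the primal definition of $\|\cdot\|_{W_1}$; your dual approach is an equally clean and perhaps slightly more transparent alternative.
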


\begin{prop}[{\cite[Proposition 13]{de2020quantum}}]\label{prop13}
Let $\Phi:\mathcal{O}_V\to\mathcal{O}_V$ be a quantum channel.
For any $v\in V$, let $A_v\subseteq V$ be the light-cone of the site $v$, \emph{i.e.}, the minimum subset of $V$ such that $\mathrm{Tr}_{A_i}\Phi(X) = 0$ for any $X\in\mathcal{O}_V$ such that $\mathrm{Tr}_vX=0$.
Then, $\Phi$ can expand the quantum $W_1$ distance by at most twice the size of the largest light-cone, \emph{i.e.}, for any $X\in\mathcal{O}_V^T$ we have
\begin{equation}
\left\|\Phi(X)\right\|_{W_1} \le 2\max_{v\in V}\left|A_v\right|\left\|X\right\|_{W_1}\,.
\end{equation}
\end{prop}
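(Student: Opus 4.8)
The plan is to reduce the statement to two structural facts about the quantum $W_1$ norm that are already available: its primal description, coming directly from the definition as the norm whose unit ball is the convex hull of the differences of neighboring states, and the support bound of \autoref{prop5}. The definition of the light-cone $A_v$ is precisely what makes these two ingredients interact.

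Concretely, I would first take an optimal $W_1$-decomposition of the given $X\in\mathcal{O}_V^T$: by the definition of $\|\cdot\|_{W_1}$ together with \autoref{prop2}, one has $\|X\|_{W_1}=\tfrac12\min\big\{\sum_{v\in V}\|X^{(v)}\|_1\big\}$, the minimum running over decompositions $X=\sum_{v\in V}X^{(v)}$ with $X^{(v)}\in\mathcal{O}_V^T$ and $\mathrm{Tr}_v X^{(v)}=0$, and a minimizer exists by compactness in finite dimension. Applying the channel termwise gives $\Phi(X)=\sum_{v\in V}\Phi(X^{(v)})$; each $\Phi(X^{(v)})$ is traceless since $\Phi$ is trace preserving, and --- this is the one place the hypothesis enters --- it satisfies $\mathrm{Tr}_{A_v}\Phi(X^{(v)})=0$ straight from the definition of the light-cone, because $\mathrm{Tr}_v X^{(v)}=0$. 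Hence \autoref{prop5}, applied with region $A_v$, gives $\|\Phi(X^{(v)})\|_{W_1}\le|A_v|\,\|\Phi(X^{(v)})\|_1$.

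To finish I would combine trace-norm contractivity of the channel, $\|\Phi(X^{(v)})\|_1\le\|X^{(v)}\|_1$, the triangle inequality for $\|\cdot\|_{W_1}$, and $|A_v|\le\max_{w\in V}|A_w|$, to obtain
\[
\|\Phi(X)\|_{W_1}\;\le\;\sum_{v\in V}\|\Phi(X^{(v)})\|_{W_1}\;\le\;\sum_{v\in V}|A_v|\,\|X^{(v)}\|_1\;\le\;\max_{w\in V}|A_w|\sum_{v\in V}\|X^{(v)}\|_1\;=\;2\max_{w\in V}|A_w|\,\|X\|_{W_1}\,.
\]
I do not anticipate a genuine obstacle; the only point needing care is that the optimal $W_1$-decomposition must be taken into operators that become traceless after a single partial trace (so that the light-cone condition can be invoked piece by piece), which is exactly the primal form of the $W_1$ norm. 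Alternatively the same argument runs verbatim on a decomposition $X=\|X\|_{W_1}\sum_k\lambda_k(\rho_k-\sigma_k)$ into differences of neighboring states, bounding $\|\Phi(\rho_k)-\Phi(\sigma_k)\|_{W_1}\le|A_{i_k}|\,\|\Phi(\rho_k)-\Phi(\sigma_k)\|_1\le 2\max_{w\in V}|A_w|$ via \autoref{prop5} and $\|\rho-\sigma\|_1\le2$.
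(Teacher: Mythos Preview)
The paper does not prove this proposition itself; it is quoted verbatim from \cite[Proposition~13]{de2020quantum} without proof. Your argument is correct and is exactly the proof given in that reference: take an optimal primal decomposition $X=\sum_v X^{(v)}$ with $\mathrm{Tr}_v X^{(v)}=0$, use the light-cone hypothesis to get $\mathrm{Tr}_{A_v}\Phi(X^{(v)})=0$, apply the support bound of \autoref{prop5} and trace-norm contractivity of $\Phi$, and sum.
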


\begin{prop}[{\cite[Proposition 15]{de2020quantum}}]\label{prop15}
For any $H\in\mathcal{O}_V$ and any $v\in V$ we have
\begin{equation}
\left\|H - \mathbb{I}_v\otimes\frac{1}{d}\,\mathrm{Tr}_vH\right\|_\infty \le \left\|H\right\|_L\,.
\end{equation}
\end{prop}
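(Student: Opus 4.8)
The plan is to compare the replacement operator $\mathbb{I}_v\otimes\frac1d\mathrm{Tr}_vH$ against an arbitrary observable $H^{(v)}\in\mathcal{O}_V$ that does not act on the $v$-th qudit, and to show that passing from $H^{(v)}$ to the normalized partial trace costs at most a factor of two in operator norm. Taking then the best such $H^{(v)}$ and unwinding the definition of $\|H\|_L$ yields the bound.

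First I would introduce the map $\mathcal{E}_v:\mathcal{O}_V\to\mathcal{O}_V$, $\mathcal{E}_v(X):=\mathbb{I}_v\otimes\frac1d\mathrm{Tr}_vX$, and record two elementary properties. (a) $\mathcal{E}_v$ fixes every observable not acting on site $v$: if $H^{(v)}=\mathbb{I}_v\otimes\widetilde H$ with $\widetilde H\in\mathcal{O}_{V\setminus\{v\}}$, then $\mathrm{Tr}_v(\mathbb{I}_v\otimes\widetilde H)=d\,\widetilde H$, so $\mathcal{E}_v(H^{(v)})=H^{(v)}$. (b) $\mathcal{E}_v$ is a unital quantum channel — explicitly $\mathcal{E}_v(X)=\frac1d\sum_{k,\ell=1}^d\big((\ketbra{\ell}{k})_v\otimes\mathbb{I}\big)\,X\,\big((\ketbra{k}{\ell})_v\otimes\mathbb{I}\big)$, where $\mathbb{I}$ here denotes the identity on $V\setminus\{v\}$ — and in particular a positive unital map, hence an operator-norm contraction on self-adjoint operators: $\|\mathcal{E}_v(X)\|_\infty\le\|X\|_\infty$ for all $X\in\mathcal{O}_V$. (Indeed, $-\|X\|_\infty\mathbb{I}\le X\le\|X\|_\infty\mathbb{I}$ is preserved by any positive unital map.)

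Next, for any admissible $H^{(v)}$, property (a) and linearity give $\mathbb{I}_v\otimes\frac1d\mathrm{Tr}_vH-H^{(v)}=\mathcal{E}_v\big(H-H^{(v)}\big)$, and hence by the triangle inequality together with (b),
\begin{align*}
\Big\|H-\mathbb{I}_v\otimes\tfrac1d\mathrm{Tr}_vH\Big\|_\infty
&\le\big\|H-H^{(v)}\big\|_\infty+\Big\|\mathcal{E}_v\big(H-H^{(v)}\big)\Big\|_\infty
\le 2\,\big\|H-H^{(v)}\big\|_\infty\,.
\end{align*}
Taking the infimum over all $H^{(v)}\in\mathcal{O}_V$ that do not act on the $v$-th qudit, the right-hand side becomes $2\min_{H^{(v)}}\|H-H^{(v)}\|_\infty$, which is bounded above by twice the maximum of this quantity over all sites, i.e.\ by $\|H\|_L$ by definition of the quantum Lipschitz constant. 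This is exactly the claimed inequality.

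The argument has essentially no hard part: the only step meriting a remark is (b), namely that the normalized partial trace over site $v$ — the conditional expectation onto the algebra of operators that do not act on site $v$ — is a positive unital map and therefore contracts the operator norm. Given this, the statement follows from nothing more than the triangle inequality and the definition of $\|H\|_L$.
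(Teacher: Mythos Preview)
Your argument is correct. The paper itself does not give a proof of this proposition---it is simply quoted from \cite{de2020quantum}---so there is no in-paper proof to compare against; your route (write the normalized partial trace as a unital CP conditional expectation $\mathcal{E}_v$, use $\mathcal{E}_v(H^{(v)})=H^{(v)}$ and contractivity of $\|\cdot\|_\infty$ under $\mathcal{E}_v$, then optimize over $H^{(v)}$) is the standard one and matches the argument in the original reference.
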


\begin{prop}[{\cite[Corollary 1]{de2020quantum}}]\label{cor1}
For any $\rho,\,\sigma\in\mathcal{S}_V$,
\begin{equation}\label{eq:tproduct}
\left\|\rho - \sigma\right\|_{W_1} \ge \frac{1}{2}\sum_{v\in V}\left\|\rho_v - \sigma_v\right\|_1\,,
\end{equation}
and equality holds whenever both $\rho$ and $\sigma$ are product states.
\end{prop}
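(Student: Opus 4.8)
The plan is to derive the lower bound from the dual (Lipschitz-constant) formulation of the quantum $W_1$ distance by exhibiting an explicit test observable, and to obtain the equality case by iterating the tensorization property of \autoref{prop4} together with the single-site identity of \autoref{prop2}.

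For the inequality, for each $v\in V$ write the Jordan decomposition $\rho_v-\sigma_v=\Delta_v^+-\Delta_v^-$ into positive and negative parts, let $P_v^\pm$ be the orthogonal projections onto their supports, and set $h_v:=\tfrac12\,(P_v^+-P_v^-)\in\mathcal{O}_v$. Then $\|h_v\|_\infty\le\tfrac12$ and $\tr[(\rho_v-\sigma_v)\,h_v]=\tfrac12\|\rho_v-\sigma_v\|_1$. Define the global test observable $H:=\sum_{v\in V}h_v\otimes\mathbb{I}_{V\setminus v}\in\mathcal{O}_V$. The crucial step is the bound $\|H\|_L\le1$: for every site $i\in V$, the observable $H^{(i)}:=\sum_{v\in V\setminus\{i\}}h_v\otimes\mathbb{I}_{V\setminus v}$ does not act on the $i$-th qudit, and $H-H^{(i)}=h_i\otimes\mathbb{I}_{V\setminus i}$ satisfies $\|H-H^{(i)}\|_\infty=\|h_i\|_\infty\le\tfrac12$; hence each inner minimum in the definition of $\|\cdot\|_L$ is at most $\tfrac12$, so $\|H\|_L\le1$. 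Plugging $H$ into the dual formula $\|\rho-\sigma\|_{W_1}=\max\{\tr[(\rho-\sigma)K]:K\in\mathcal{O}_V,\ \|K\|_L\le1\}$ and using linearity of the trace gives
\[
\|\rho-\sigma\|_{W_1}\ \ge\ \tr[(\rho-\sigma)H]\ =\ \sum_{v\in V}\tr[(\rho_v-\sigma_v)\,h_v]\ =\ \frac12\sum_{v\in V}\|\rho_v-\sigma_v\|_1\,,
\]
which is the asserted bound.

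For the equality statement, assume $\rho=\bigotimes_{v\in V}\rho_v$ and $\sigma=\bigotimes_{v\in V}\sigma_v$. Applying \autoref{prop4} inductively over the sites of $V$ yields $\|\rho-\sigma\|_{W_1}=\sum_{v\in V}\|\rho_v-\sigma_v\|_{W_1}$, where each summand is now a $W_1$ distance on the single-qudit system at $v$. On that one-site system $\rho_v-\sigma_v$ is traceless and its partial trace over the unique site vanishes automatically, so \autoref{prop2} gives $\|\rho_v-\sigma_v\|_{W_1}=\tfrac12\|\rho_v-\sigma_v\|_1$; summing over $v$ produces equality in \eqref{eq:tproduct}.

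The only point requiring genuine care is the verification that the sum of local sign-type observables has Lipschitz constant at most one, but this is immediate from the definition of $\|\cdot\|_L$ since deleting the $i$-th summand removes every operator acting on site $i$; the remaining manipulations are routine, so I do not anticipate a serious obstacle.
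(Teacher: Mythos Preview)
Your proof is correct. The lower bound via the dual (Lipschitz) formulation with the test observable $H=\sum_{v}h_v$ is clean and the verification $\|H\|_L\le 1$ is exactly right; the equality case is the natural induction on \autoref{prop4} combined with \autoref{prop2}.

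Note, however, that the paper does not actually prove this statement: \autoref{cor1} is quoted verbatim from \cite[Corollary~1]{de2020quantum} and no argument is supplied here, so there is no ``paper's own proof'' to compare against. In the source paper the result is obtained along essentially the same lines you use---the inequality from the dual side via a sum of single-site observables with operator norm at most $\tfrac12$, and the product-state equality from tensorization---so your write-up matches the intended argument.
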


\begin{thm}[$W_1$ continuity of the entropy {\cite[Theorem 1]{de2020quantum}}]\label{thm1}
For any $\rho,\,\sigma\in\mathcal{S}_V$ we have
\begin{equation}
    \left|S(\rho) - S(\sigma)\right| \le g\left(\left\|\rho-\sigma\right\|_{W_1}\right) + \left\|\rho-\sigma\right\|_{W_1}\ln\left(d^2\left|V\right|\right)\,,
\end{equation}
where for any $t\ge0$
\begin{equation}
g(t) = \left(t+1\right)\ln\left(t+1\right) - t\ln t\,.
\end{equation}
\end{thm}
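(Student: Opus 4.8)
The plan is to combine a single-site Alicki--Fannes--Winter (AFW) estimate with a convexity argument in the spirit of Winter's refinement, adapted to the geometry of the quantum $W_1$ norm.

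\emph{The single-site case.} First I would settle the case in which $\rho$ and $\sigma$ differ only in one site, i.e.\ $\mathrm{Tr}_v\rho=\mathrm{Tr}_v\sigma$ for some $v\in V$. Then the entropy chain rule gives $S(\rho)-S(\sigma)=S(v\mid V\setminus v)_\rho-S(v\mid V\setminus v)_\sigma$, and the AFW inequality for the conditional entropy of the $d$-dimensional system $v$ yields $|S(\rho)-S(\sigma)|\le 2\varepsilon\ln d+(1+\varepsilon)h\!\left(\tfrac{\varepsilon}{1+\varepsilon}\right)$ with $\varepsilon=\tfrac12\|\rho-\sigma\|_1$. A short computation shows $(1+\varepsilon)h\!\left(\tfrac{\varepsilon}{1+\varepsilon}\right)=g(\varepsilon)$, and by \autoref{prop2} we have $\varepsilon=\|\rho-\sigma\|_{W_1}$; since $2\ln d\le\ln(d^2|V|)$ this is exactly the asserted bound in this case. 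One also records the elementary fact, used below, that $\|X\|_1\le 2\|X\|_{W_1}$ for traceless $X$ (because every $H$ with $\|H\|_\infty\le\tfrac12$ has $\|H\|_L\le1$).

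\emph{Reducing the general case.} For arbitrary $\rho,\sigma$ set $t=\|\rho-\sigma\|_{W_1}$. By the very definition of the $W_1$ norm — whose unit ball is the (compact) convex hull of differences of neighboring states — there is a decomposition $\rho-\sigma=\sum_j c_j(\mu_j-\nu_j)$ with $c_j>0$, $\sum_j c_j=t$, and each $(\mu_j,\nu_j)$ a pair of states differing in a single site. Put $\bar\mu=\tfrac1t\sum_j c_j\mu_j$ and $\bar\nu=\tfrac1t\sum_j c_j\nu_j$; then $\bar\mu-\bar\nu=\tfrac1t(\rho-\sigma)$, so $\|\bar\mu-\bar\nu\|_{W_1}=1$. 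The positive operator $\Omega:=\tfrac1{1+t}\bigl(\sigma+\sum_j c_j\mu_j\bigr)=\tfrac1{1+t}\bigl(\rho+\sum_j c_j\nu_j\bigr)$ is a state, and it admits the two convex decompositions $\Omega=\tfrac1{1+t}\sigma+\tfrac{t}{1+t}\bar\mu=\tfrac1{1+t}\rho+\tfrac{t}{1+t}\bar\nu$. Applying concavity of the von Neumann entropy — the lower bound $S(\Omega)\ge\tfrac1{1+t}S(\rho)+\tfrac{t}{1+t}S(\bar\nu)$ together with the upper bound $S(\Omega)\le\tfrac1{1+t}S(\sigma)+\tfrac{t}{1+t}S(\bar\mu)+\tfrac{g(t)}{1+t}$, the binary-entropy overhead $H\!\left(\tfrac1{1+t},\tfrac{t}{1+t}\right)$ being exactly $g(t)/(1+t)$ — and multiplying by $1+t$ gives $S(\rho)-S(\sigma)\le t\bigl(S(\bar\mu)-S(\bar\nu)\bigr)+g(t)$. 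Running the same computation with $\rho$ and $\sigma$ exchanged yields $|S(\rho)-S(\sigma)|\le t\,|S(\bar\mu)-S(\bar\nu)|+g(t)$. Since $g$ is increasing, it therefore suffices to establish the \emph{distance-one entropy estimate}: $|S(\alpha)-S(\beta)|\le\ln(d^2|V|)$ whenever $\|\alpha-\beta\|_{W_1}\le1$.

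\emph{The main obstacle.} The distance-one estimate is where the real work lies, since the trivial bound is only $|V|\ln d$ and one must gain the improvement from $d^{|V|}$ to $d^2|V|$; this cannot be done through trace-norm or spectral information alone and must use the transport structure. A first line of attack is a chain rule with a carefully chosen ordering: decomposing $\alpha-\beta$ again into single-site pairs supported on an ``active set'' $A\subseteq V$, order the sites of $V$ so that those in $A$ come last, so that every conditional-entropy term indexed by a non-active site cancels between $\alpha$ and $\beta$; each surviving term is then bounded by the single-site AFW estimate with an $\varepsilon$ equal to the portion of the transport plan already accumulated, and the surviving $\ln d$- and $g$-contributions must be summed against the weights $c_j$ so as to contribute only $\ln|A|\le\ln|V|$ rather than linearly in $|A|$ — getting the right dependence here, rather than the naive $|A|\ln d$, is the crux. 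A second, analytic line uses the integral representation $S(\alpha)-S(\beta)=\int_0^\infty\mathrm{Tr}\bigl[\ln(\alpha+xI)-\ln(\beta+xI)\bigr]\,dx$: for $x$ above a threshold the integrand is bounded by $\|\alpha-\beta\|_{W_1}$ times the quantum Lipschitz constant of the resolvent $(\,\cdot\,+xI)^{-1}$, which decays like $x^{-2}$ by operator-Lipschitz estimates together with \autoref{prop15}, while for $x$ below the threshold — where the $W_1$ bound degenerates on small eigenvalues — one must argue by hand, and optimizing the threshold is the delicate point. Finally, any regularity assumptions introduced along the way (for instance replacing $\rho,\sigma$ by full-rank approximants so that $\ln$ is well behaved) are removed at the end by continuity of both sides of the inequality, and the bound for $S(\sigma)-S(\rho)$ follows from the same argument with the two states interchanged.
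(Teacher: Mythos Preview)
This theorem is not proved in the present paper; it is quoted from \cite[Theorem~1]{de2020quantum}. Your Winter-style strategy and your reduction to the distance-one estimate $|S(\bar\mu)-S(\bar\nu)|\le\ln(d^2|V|)$ are correct, but you then declare this estimate to be ``where the real work lies'' and sketch two elaborate programmes (an ordered chain rule, a resolvent integral) without completing either. In fact the bound is immediate from what you already have in hand: since $\bar\mu=\tfrac1t\sum_j c_j\mu_j$ and $\bar\nu=\tfrac1t\sum_j c_j\nu_j$, one more application of the concavity upper and lower bounds gives
\[
S(\bar\mu)-S(\bar\nu)\;\le\;\sum_j\tfrac{c_j}{t}\bigl(S(\mu_j)-S(\nu_j)\bigr)+H\bigl(\tfrac{c_1}t,\tfrac{c_2}t,\ldots\bigr)\;\le\;2\ln d+\ln|V|\,,
\]
where $S(\mu_j)-S(\nu_j)=S(v_j\mid V\setminus v_j)_{\mu_j}-S(v_j\mid V\setminus v_j)_{\nu_j}\le 2\ln d$ is just the range of a $d$-dimensional conditional entropy (no AFW needed), and the Shannon term is at most $\ln|V|$ \emph{provided} the decomposition has at most $|V|$ pairs. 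This last point is the one ingredient genuinely missing from your argument: you must use the site-indexed form of the $W_1$ norm --- one pair $(\mu_v,\nu_v)$ per vertex $v\in V$, as in \cite[Section~III]{de2020quantum} --- rather than an arbitrary Carath\'eodory decomposition of the $W_1$ unit ball, which could a priori have exponentially many terms and would ruin the $\ln|V|$.

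Equivalently, and this is how the proof in \cite{de2020quantum} is organised, skip the aggregation into $\bar\mu,\bar\nu$ altogether and apply concavity directly to the $(|V|{+}1)$-component mixture $\Omega=\tfrac1{1+t}\sigma+\sum_v\tfrac{c_v}{1+t}\mu_v=\tfrac1{1+t}\rho+\sum_v\tfrac{c_v}{1+t}\nu_v$: the Shannon entropy of the weights is at most $\tfrac{g(t)+t\ln|V|}{1+t}$ by the grouping inequality, and $\sum_v c_v\bigl(S(\mu_v)-S(\nu_v)\bigr)\le 2t\ln d$, giving $S(\rho)-S(\sigma)\le g(t)+t\ln(d^2|V|)$ in one stroke.
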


\section{Dobrushin uniqueness condition}\label{transportationcostdobrushin}
In this section, we consider a spin chain and prove the transportation cost inequality under a quantum generalization of Dobrushin's uniqueness condition \cite{dobrushin1970prescribing}.
Such condition is formulated in terms of the conditional probability distributions of the state of a subset of $V$ conditioned on the state of a second disjoint subset of $V$.
Therefore, formulating a quantum version of Dobrushin's uniqueness condition requires a quantum counterpart of the conditional probability distribution.
In the classical setting, given two random variables $X$ and $Y$ taking values in finite sets and with joint probability distribution $\omega_{XY}$, the conditional probability distribution $\omega_{Y|X}$ of $Y$ given $X$ with probability mass function
\begin{equation}
    \omega_{Y|X=x}(y) = \frac{\omega_{XY}(x,y)}{\omega_X(x)}
\end{equation}
represents the knowledge that we have on $Y$ when we know only the value of $X$.
We can associate to such conditional distribution the stochastic map $\Phi_{X\to XY}$ that has as input a probability distribution $p_X$ for $X$ and as output the joint probability distribution of $XY$ with probability mass function given by
\begin{equation}\label{eq:Phistoch}
    \Phi_{X\to XY}(p_X)(x,y) = \omega_{Y|X=x}(y)\,p_X(x) = \frac{\omega_{XY}(x,y)}{\omega_X(x)}\,p_X(x)\,. 
\end{equation}
In the quantum setting, we consider a bipartite quantum system $AB$ and a joint quantum state $\omega_{AB}$ of $AB$.
The quantum counterpart of the stochastic map \eqref{eq:Phistoch} is called quantum recovery map \cite{junge2018universal,sutter2017multivariate} and its action on a quantum state $\rho_A$ of $A$ is
\begin{equation}\label{eq:Phirecov}
\Phi_{A\to AB}(\rho_A) = \int_\mathbb{R}\omega_{AB}^{\frac{1-it}{2}}\,\omega_A^\frac{it-1}{2}\,\rho_A\,\omega_A^{-\frac{1+it}{2}}\,\omega_{AB}^{\frac{1+it}{2}}\,d\mu_0(t)\,,
\end{equation}
where $\mu_0$ is the probability distrbution on $\mathbb{R}$ with density
\begin{equation}
    d\mu_0(t) = \frac{\pi\,dt}{2\left(\cosh(\pi t)+1\right)}\,.
\end{equation}
We stress that \eqref{eq:Phirecov} reduces to \eqref{eq:Phistoch} whenever $\rho_A$, $\omega_A$ and $\omega_{AB}$ commute.
If $A$ is in the state $\omega_A$, the recovery map $\Phi_{A\to AB}$ recovers the joint state $\omega_{AB}$, \emph{i.e.}, $\Phi_{A\to AB}(\omega_A) = \omega_{AB}$.
The relevance of the recovery map comes from the recoverability theorem \cite{sutter2017multivariate}, which states that $\Phi_{A\to AB}$ can recover a generic joint state $\rho_{AB}$ from its marginal $\rho_A$ if removing the subsystem $B$ does not significantly decrease the relative entropy between $\rho$ and $\omega$.
More precisely, for any quantum state $\rho_{AB}$ of $AB$ we have
\begin{equation}\label{eq:recovery}
S(\rho_{AB}\|\omega_{AB}) - S(\rho_A\|\omega_A) \ge S_{\mathbb{M}}(\rho_{AB}\|\Phi_{A\to AB}(\rho_A))\,.
\end{equation}

We consider the setting where $V$ is partitioned as
\begin{equation}
    V = A_1\sqcup\ldots\sqcup A_m\,.
\end{equation}
For any $i\in[m]$, we denote with $A_1^i$ the union $A_1\sqcup\ldots\sqcup A_i$.
The recoverability theorem implies the following \autoref{lem:recov}, which we will employ several times:
\begin{lem}\label{lem:recov}
For any $\rho,\,\omega\in\mathcal{S}_V$ we have
\begin{align}\label{eq:ST1}
    &S(\rho\|\omega) \ge \frac{1}{2m}\left(\sum_{i=1}^m\left\|\rho_{A_1^i} - \Phi_{A_1^{i-1}\to A_1^i}(\rho_{A_1^i})\right\|_1\right)^2\,,\\
    &\left(\frac{1}{2m}\sum_{i=1}^m\left\|\rho_{A_1^i} - \Phi_{A_1^{i-1}\to A_1^i}(\rho_{A_1^i})\right\|_1\right)^2 \le 1-\exp\left(-\frac{S(\rho\|\omega)}{m}\right)\,,\label{eq:ST2}
\end{align}
where $\Phi_{A_1^{i-1}\to A_1^i}$ are the recovery maps associated to $\omega$.
\end{lem}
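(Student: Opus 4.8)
The plan is to upgrade the recoverability estimate \eqref{eq:recovery} into a chain rule along the nested family $A_1^0\subset A_1^1\subset\dots\subset A_1^m=V$, and then to convert each resulting measured relative entropy into a statement about trace distances. For each $i\in[m]$ I would apply \eqref{eq:recovery} with $A=A_1^{i-1}$ and $B=A_i$, so that $AB=A_1^i$ and $\Phi_{A_1^{i-1}\to A_1^i}$ is the recovery map built from $\omega_{A_1^i}$ and $\omega_{A_1^{i-1}}$, acting on the marginal $\rho_{A_1^{i-1}}=\tr_{A_i}\rho_{A_1^i}$. This gives, for every $i$,
\begin{equation*}
S(\rho_{A_1^i}\|\omega_{A_1^i}) - S(\rho_{A_1^{i-1}}\|\omega_{A_1^{i-1}}) \;\ge\; S_{\mathbb{M}}\big(\rho_{A_1^i}\,\big\|\,\Phi_{A_1^{i-1}\to A_1^i}(\rho_{A_1^i})\big) \;=:\; \sigma_i\;\ge\;0\,.
\end{equation*}
Since $A_1^0=\emptyset$ and $A_1^m=V$, summing over $i\in[m]$ telescopes the left-hand side to $S(\rho\|\omega)$, yielding $\sum_{i=1}^m\sigma_i\le S(\rho\|\omega)$ as well as the termwise bounds $\sigma_i\le S(\rho_{A_1^i}\|\omega_{A_1^i})-S(\rho_{A_1^{i-1}}\|\omega_{A_1^{i-1}})$.

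Next I would bound each $\sigma_i$ from below in terms of $\delta_i:=\tfrac12\|\rho_{A_1^i}-\Phi_{A_1^{i-1}\to A_1^i}(\rho_{A_1^i})\|_1\in[0,1]$. The point is that the trace distance of two states is \emph{attained} by a two-outcome POVM, so the definition of $S_{\mathbb{M}}$ gives at once $\sigma_i\ge S(P\|Q)$ for the corresponding pair $P,Q$ of binary distributions, whose total variation distance equals $\delta_i$. Two classical facts then feed into the two claims of the lemma. Pinsker's inequality gives $\sigma_i\ge 2\delta_i^2$. And the bound $S(P\|Q)\ge -2\ln\sum_x\sqrt{P(x)Q(x)}$ (the R\'enyi divergence of order $\tfrac12$ is dominated by $S$), combined with the elementary estimate $\big(\sum_x\sqrt{P(x)Q(x)}\big)^2\le 1-\delta_i^2$ (Cauchy--Schwarz), gives the sharper $\sigma_i\ge-\ln(1-\delta_i^2)$, i.e.\ $\delta_i^2\le 1-e^{-\sigma_i}$.

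It then remains to aggregate. For \eqref{eq:ST1} I would combine $\sum_i 2\delta_i^2\le S(\rho\|\omega)$ with the power-mean inequality $\sum_{i=1}^m\delta_i^2\ge\tfrac1m(\sum_{i=1}^m\delta_i)^2$ and recall that $\sum_i\delta_i=\tfrac12\sum_i\|\rho_{A_1^i}-\Phi_{A_1^{i-1}\to A_1^i}(\rho_{A_1^i})\|_1$. For \eqref{eq:ST2}, put $t_i:=S(\rho_{A_1^i}\|\omega_{A_1^i})-S(\rho_{A_1^{i-1}}\|\omega_{A_1^{i-1}})\ge\sigma_i$, so that $\sum_i t_i=S(\rho\|\omega)$ and $\delta_i\le\sqrt{1-e^{-\sigma_i}}\le\sqrt{1-e^{-t_i}}$; since $t\mapsto\sqrt{1-e^{-t}}$ is concave on $[0,\infty)$ (a one-line computation gives $f''(t)=-\tfrac12 e^{-t}(1-e^{-t})^{-3/2}(1-\tfrac12 e^{-t})<0$), Jensen's inequality yields $\tfrac1m\sum_i\delta_i\le\sqrt{1-e^{-S(\rho\|\omega)/m}}$, and squaring gives \eqref{eq:ST2} after identifying $\tfrac1m\sum_i\delta_i=\tfrac1{2m}\sum_i\|\rho_{A_1^i}-\Phi_{A_1^{i-1}\to A_1^i}(\rho_{A_1^i})\|_1$.

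The only genuinely non-routine ingredient is the passage from $\sigma_i$ to the trace-distance bounds, in particular the sharp estimate $\delta_i^2\le 1-e^{-\sigma_i}$ underlying \eqref{eq:ST2}: it crucially uses that the trace distance is itself a measured quantity, which is what lets classical R\'enyi/Hellinger-type inequalities transfer to the measured relative entropy $S_{\mathbb{M}}$ (plain Pinsker applied to $S\ge S_{\mathbb{M}}$ would be too lossy for \eqref{eq:ST2}). The rest — the telescoping of the first paragraph and the two convexity manipulations of the third — is bookkeeping.
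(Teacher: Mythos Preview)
Your proof is correct and follows essentially the same approach as the paper: telescoping the recoverability bound \eqref{eq:recovery}, applying Pinsker (resp.\ the improved Pinsker inequality $\delta^2\le 1-e^{-S_\mathbb{M}}$, which you derive explicitly via the Helstrom measurement and the Bhattacharyya/Fuchs--van de Graaf bound) termwise, and then a single convexity step. Your Jensen argument for \eqref{eq:ST2} via the concavity of $t\mapsto\sqrt{1-e^{-t}}$ is the dual formulation of the paper's use of the convexity of $t\mapsto -\ln(1-t^2)$ (these are inverse functions), and the detour through the telescoping increments $t_i$ is harmless but unnecessary, since $g$ is increasing and $\sum_i\sigma_i\le S(\rho\|\omega)$ already suffices.
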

\begin{proof}
Eq. \eqref{eq:Phirecov} and Pinsker's inequality imply for any $i\in[m]$
\begin{equation}\label{eq:STi}
    S(\rho_{A_1^i}\|\omega_{A_1^i}) - S(\rho_{A_1^{i-1}}\|\omega_{A_1^{i-1}}) \ge S_\mathbb{M}\left(\rho_{A_1^i}\left\|\Phi_{A_1^{i-1}\to A_1^i}(\rho_{A_1^{i-1}})\right.\right) \ge \frac{1}{2}\left\|\rho_{A_1^i} - \Phi_{A_1^{i-1}\to A_1^i}(\rho_{A_1^{i-1}})\right\|_1^2\,.
\end{equation}
Summing \eqref{eq:STi} over $i$ and using the convexity of the square function yields
\begin{equation}
    S(\rho\|\omega) \ge \frac{1}{2}\sum_{i=1}^m\left\|\rho_{A_1^i} - \Phi_{A_1^{i-1}\to A_1^i}(\rho_{A_1^{i-1}})\right\|_1^2 \ge \frac{1}{2m}\left(\sum_{i=1}^m\left\|\rho_{A_1^i} - \Phi_{A_1^{i-1}\to A_1^i}(\rho_{A_1^i})\right\|_1\right)^2\,.
\end{equation}
The claim \eqref{eq:ST1} follows.

With an analogous proof, applying the improved Pinsker's inequality
\begin{equation}
    \frac{1}{4}\left\|\sigma - \tau\right\|_1^2 \le 1 - e^{-S_\mathbb{M}(\sigma\|\tau)}
\end{equation}
and the convexity of the function $t\mapsto -\ln\left(1-t^2\right)$ we get
\begin{equation}
    S(\rho\|\omega) \ge -m\ln\left(1-\left(\frac{1}{2m}\sum_{i=1}^m\left\|\rho_{A_1^i} - \Phi_{A_1^{i-1}\to A_1^i}(\rho_{A_1^i})\right\|_1\right)^2\right)\,.
\end{equation}
The claim \eqref{eq:ST2} follows.
\end{proof}
The following property of the recovery map will be fundamental:
\begin{lem}\label{lem:Philoc}
Let $\omega_{ABC}$ be a joint state of the tripartite quantum system $ABC$.
Let us assume that $\omega_{ABC}$ is Markovian, \emph{i.e.},
\begin{equation}
    I(A;C|B)_\omega = 0\,.
\end{equation}
Then, the recovery map $\Phi_{AB\to ABC}$ associated to $\omega_{ABC}$ does not act on the subsystem $A$.
\begin{proof}
From the characterization of the states that saturate the strong subadditivity \cite{hayden2004structure}, the Hilbert space $\mathcal{H}_B$ of $B$ has a decomposition
\begin{equation}
    \mathcal{H}_B = \bigoplus_{i=1}^k \mathcal{H}_{B_i^L}\otimes\mathcal{H}_{B_i^R}\,,
\end{equation}
where the Hilbert spaces $\left\{\mathcal{H}_{B_i^L}\right\}_{i=1}^k$ and $\left\{\mathcal{H}_{B_i^L}\right\}_{i=1}^k$ are pairwise orthogonal, and $\omega_{ABC}$ can be expressed as
\begin{equation}
    \omega_{ABC} = \bigoplus_{i=1}^k p_i\,\omega_{AB_i^L}^{(i)}\otimes\omega_{B_i^RC}^{(i)}\,,
\end{equation}
where $p$ is a probability distribution on $[k]$, and each $\omega_{AB_i^L}^{(i)}$ or $\omega_{B_i^RC}^{(i)}$ is a quantum state with support in the corresponding $\mathcal{H}_A\otimes\mathcal{H}_{B_i^L}$ or $\mathcal{H}_{B_i^R}\otimes\mathcal{H}_C$.
We have for any quantum state $\rho_{AB}$ of $AB$
\begin{equation}
\Phi_{AB\to ABC}(\rho_{AB}) = \int_\mathbb{R}\omega_{ABC}^{\frac{1-it}{2}}\,\omega_{AB}^\frac{it-1}{2}\,\rho_{AB}\,\omega_{AB}^{-\frac{1+it}{2}}\,\omega_{ABC}^{\frac{1+it}{2}}\,d\mu_0(t)\,.
\end{equation}
We have for any $z\in\mathbb{C}$ that
\begin{equation}
    \omega_{ABC}^z\,\omega_{AB}^{-z} = \bigoplus_{i=1}^k\left(\omega_{B_i^RC}^{(i)}\right)^z\left(\omega_{B_i^R}^{(i)}\right)^{-z}
\end{equation}
does not act on $A$, and the claim follows choosing $z = \left(1-it\right)/2$.
\end{proof}
\end{lem}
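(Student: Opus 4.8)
The plan is to upgrade the information-theoretic hypothesis to the algebraic structure theorem for quantum Markov chains, and then to compute the recovery map directly from its integral representation. First, since $I(A;C|B)_\omega = 0$, the state $\omega_{ABC}$ saturates the strong subadditivity of the von Neumann entropy, so by \cite{hayden2004structure} the Hilbert space of $B$ decomposes as $\mathcal{H}_B = \bigoplus_{i=1}^k \mathcal{H}_{B_i^L}\otimes\mathcal{H}_{B_i^R}$ into pairwise orthogonal subspaces, and there are a probability vector $p$ on $[k]$ and states $\omega^{(i)}_{AB_i^L}$, $\omega^{(i)}_{B_i^RC}$ with $\omega_{ABC} = \bigoplus_{i=1}^k p_i\,\omega^{(i)}_{AB_i^L}\otimes\omega^{(i)}_{B_i^RC}$. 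Tracing out $C$ shows that $\omega_{AB} = \bigoplus_{i=1}^k p_i\,\omega^{(i)}_{AB_i^L}\otimes\omega^{(i)}_{B_i^R}$ with $\omega^{(i)}_{B_i^R}:=\mathrm{Tr}_C\,\omega^{(i)}_{B_i^RC}$, so $\omega_{AB}$ carries the very same block decomposition of $\mathcal{H}_B$ and coincides with $\omega_{ABC}$ on the $AB_i^L$ tensor factor of each block.

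Next, I would substitute both block decompositions into the integral formula \eqref{eq:Phirecov} for the recovery map (applied with $A\mapsto AB$ and $B\mapsto C$). Setting $z=\tfrac{1-it}{2}$, the integrand equals $K_z\,\rho_{AB}\,K_z^{\dagger}$ with $K_z := \omega_{ABC}^{z}\,\omega_{AB}^{-z}$, so it suffices to show that $K_z$ acts as the identity on $A$. Since the two positive operators share the block decomposition of $\mathcal{H}_B$, the functional calculus acts block-wise, and within block $i$ the tensor-power identity $(X\otimes Y)^{z}=X^{z}\otimes Y^{z}$ separates the $AB_i^L$ factor from the $B_i^RC$ (resp. $B_i^R$) factor; hence $K_z = \bigoplus_{i=1}^k\bigl[(\omega^{(i)}_{AB_i^L})^{z}(\omega^{(i)}_{AB_i^L})^{-z}\bigr]\otimes\bigl[(\omega^{(i)}_{B_i^RC})^{z}(\omega^{(i)}_{B_i^R})^{-z}\bigr]$. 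On $\operatorname{supp}(\omega_{AB})$ — the domain on which the recovery map is defined — the first bracket reduces to the identity on $AB_i^L$, so $K_z$, and therefore each Kraus operator of $\Phi_{AB\to ABC}$, only reads the block label off $\mathcal{H}_B$ and otherwise acts on $B_i^R$ to produce $C$, leaving the $A$ system untouched. Choosing $z=\tfrac{1-it}{2}$ throughout then yields the claim.

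The one genuinely delicate point is the possible non-invertibility of the reduced states: the negative powers $\omega_{AB}^{-z}$ and $(\omega^{(i)}_{AB_i^L})^{-z}$ must be read as generalized inverses supported on $\operatorname{supp}(\omega_{AB})$, and one has to verify that the $i$-th block of $\operatorname{supp}(\omega_{AB})$ is exactly $\operatorname{supp}(\omega^{(i)}_{AB_i^L})\otimes\operatorname{supp}(\omega^{(i)}_{B_i^R})$, so that $(\omega^{(i)}_{AB_i^L})^{z}(\omega^{(i)}_{AB_i^L})^{-z}$ really does restrict to the identity on the relevant domain rather than to a nontrivial support projection on $A$. Everything else is bookkeeping with block-diagonal operators, and I do not expect any obstacle there.
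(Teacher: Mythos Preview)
Your proof is correct and follows essentially the same route as the paper: invoke the Hayden--Jozsa--Petz--Winter structure theorem to block-decompose $\omega_{ABC}$ and $\omega_{AB}$, then compute $\omega_{ABC}^{z}\,\omega_{AB}^{-z}$ block-wise and observe that the $AB_i^L$ factors cancel so that the operator acts trivially on $A$. Your treatment is in fact slightly more explicit than the paper's (which writes the final formula without displaying the intermediate cancellation) and more careful about the generalized-inverse/support issue, which the paper leaves implicit.
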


\subsection{Markovian case}
In this subsection, we assume that $\omega\in\mathcal{S}_V$ is a one-dimensional quantum Markov state.
More preciesly, let $\left\{A_1,\,\ldots,\,A_m\right\}$ be a partition of $V$ and let $K = \max\left(\left|A_1\right|,\,\ldots,\,\left|A_m\right|\right)$.
Then, we assume that
\begin{equation}\label{eq:markov2}
I(A_i;A_1^{i-2}|A_{i-1})_\omega = 0
\end{equation}
for any $i=3,\,\ldots,\,m$.
For any $i\in[m]$, let $\Phi_i$ be the recovery map \eqref{eq:Phirecov} associated to $\omega_{A_1^i}$ that recovers $A_i$ from $A_1^{i-1}$.
From \autoref{lem:Philoc}, $\Phi_i$ acts only on $A_{i-1}$, \emph{i.e.}, it is a map $\Phi_i:\mathcal{O}_{A_{i-1}}\to\mathcal{O}_{A_{i-1}A_i}$.
We also define
\begin{equation}
    \tilde{\Phi}_i = \mathrm{Tr}_{A_{i-1}}\circ\Phi_i:\mathcal{O}_{A_{i-1}}\to\mathcal{O}_{A_i}\,.
\end{equation}

We can now state the main result of this Section:
\begin{thm}\label{thm:Markov}
Let us assume that for any $i\in[m]$, $\tilde{\Phi}_i$ is a contraction with respect to the trace norm for all the couples of quantum states of $A_1^{i-1}$ that differ only on the subsystem $A_{i-1}$, \emph{i.e.}, that coincide after discarding $A_{i-1}$.
More precisely, we assume that there exists $0\le\eta<1$ such that for any $i\in[m]$ and any $X\in\mathcal{O}_{A_1^{i-1}}^T$ with $\mathrm{Tr}_{A_{i-1}}X=0$ we have
\begin{equation}\label{eq:eta}
    \left\|\tilde{\Phi}_i(X)\right\|_1 \le \eta\left\|X\right\|_1\,.
\end{equation}
Then, we have
\begin{equation}\label{eq:marton2}
C(\omega) \le 2m\,K^2\left(\frac{1}{1-\eta}+1\right)^2\,.
\end{equation}
Furthermore, for any $\rho\in\mathcal{S}_V$ we have
\begin{equation}\label{eq:marton2a}
\left\|\rho - \omega\right\|_{W_1} \le K\left(\frac{1}{1-\eta}+1\right)2m\sqrt{1-e^{-\frac{S(\rho\|\omega)}{m}}}\,.
\end{equation}
\begin{proof}
Let $\rho\in\mathcal{S}_V$.
On the one hand, we have from \autoref{lem:W1Phi2}
\begin{align}
\left\|\rho - \omega\right\|_{W_1} &= \left\|\sum_{i=1}^m(\Phi_m\circ\ldots\circ \Phi_{i+1})(\rho_{A_1^i} - \Phi_i(\rho_{A_1^{i-1}}))\right\|_{W_1}\nonumber\\
&\le \sum_{i=1}^m\left\|(\Phi_m\circ\ldots\circ \Phi_{i+1})(\rho_{A_1^i} - \Phi_i(\rho_{A_1^{i-1}}))\right\|_{W_1}\nonumber\\
&\le K\left(\frac{1}{1-\eta}+1\right)\sum_{i=1}^m\left\|\rho_{A_1^i} - \Phi_i(\rho_{A_1^{i-1}})\right\|_1\,.
\end{align}
On the other hand, we have from \eqref{eq:ST1} of \autoref{lem:recov}
\begin{equation}\label{eq:pinsker}
S(\rho\|\omega) \ge \frac{1}{2m}\left(\sum_{i=1}^m\left\|\rho_{A_1^i} - \Phi_i(\rho_{A_1^{i-1}})\right\|_1\right)^2\,,
\end{equation}
and the claim \eqref{eq:marton2} follows.
The claim \eqref{eq:marton2a} follows by employing \eqref{eq:ST2} in place of \eqref{eq:ST1}.
\end{proof}
\end{thm}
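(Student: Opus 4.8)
The plan is to peel the chain off one block at a time, rewriting $\rho-\omega$ as a telescoping sum of corrections pushed forward by the recovery maps, and then to control each pushed-forward correction by alternating the contraction hypothesis \eqref{eq:eta} with the fact (\autoref{prop5}) that an operator whose marginal on a small region vanishes has $W_1$ norm comparable to its trace norm.

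First I would record the telescoping identity. Since $\Phi_i(\omega_{A_1^{i-1}})=\omega_{A_1^i}$, iterating $\Phi_1,\dots,\Phi_m$ on the trivial empty-region state reconstructs $\omega$, so that
\begin{align*}
\rho-\omega=\sum_{i=1}^m\big(\Phi_m\circ\cdots\circ\Phi_{i+1}\big)(X_i),\qquad X_i:=\rho_{A_1^i}-\Phi_i(\rho_{A_1^{i-1}}),
\end{align*}
where the $i=m$ composition is the identity. A recovery map leaves the recovered marginal unchanged, so $\mathrm{Tr}_{A_i}X_i=\rho_{A_1^{i-1}}-\rho_{A_1^{i-1}}=0$ and $X_i$ is traceless. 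By the triangle inequality for $\|\cdot\|_{W_1}$, the theorem reduces to the estimate
\begin{align*}
\big\|(\Phi_m\circ\cdots\circ\Phi_{i+1})(X)\big\|_{W_1}\le K\Big(\tfrac{1}{1-\eta}+1\Big)\|X\|_1
\end{align*}
for every traceless $X\in\mathcal{O}_{A_1^i}$ with $\mathrm{Tr}_{A_i}X=0$.

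This estimate is the step I expect to be the main obstacle; I would prove it by induction on $m-i$, processing one recovery map per step. Put $Z:=\Phi_{i+1}(X)$, supported on $A_1^{i+1}$; recall that by \autoref{lem:Philoc} the map $\Phi_{i+1}$ acts only on the block $A_i$. Split
\begin{align*}
Z=\underbrace{\Big(Z-\tfrac{\mathbb{I}_{A_i}}{d^{|A_i|}}\otimes\mathrm{Tr}_{A_i}Z\Big)}_{=:\,V}\;+\;\underbrace{\tfrac{\mathbb{I}_{A_i}}{d^{|A_i|}}\otimes\mathrm{Tr}_{A_i}Z}_{=:\,W}.
\end{align*}
For the first piece, $\mathrm{Tr}_{A_i}V=0$; since $\Phi_{i+2},\dots,\Phi_m$ act on blocks disjoint from $A_i$, the partial trace over $A_i$ commutes with their composition, so $(\Phi_m\circ\cdots\circ\Phi_{i+2})(V)$ also has vanishing $A_i$-marginal, and \autoref{prop5} together with the trace-norm contractivity of the channels $\Phi_j$ bounds its $W_1$ norm by $|A_i|\,\|V\|_1\le K\|V\|_1$. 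The second piece carries the recursion: $\mathrm{Tr}_{A_i}Z=\tilde\Phi_{i+1}(X)$ is supported on $A_1^{i-1}A_{i+1}$ and satisfies $\mathrm{Tr}_{A_{i+1}}\tilde\Phi_{i+1}(X)=\mathrm{Tr}_{A_i}\mathrm{Tr}_{A_{i+1}}\Phi_{i+1}(X)=\mathrm{Tr}_{A_i}X=0$, i.e., it inherits the structural hypothesis on $X$ shifted one block down the chain, while \eqref{eq:eta} gives $\|\tilde\Phi_{i+1}(X)\|_1\le\eta\|X\|_1$. Tensoring with the maximally mixed state on $A_i$ is free for $\|\cdot\|_{W_1}$ (\autoref{prop4}), so the induction hypothesis applies to $W$ — now an operator on $A_1^{i+1}$ with vanishing $A_{i+1}$-marginal and trace norm at most $\eta\|X\|_1$ — and the base case $m=i$ is \autoref{prop5}. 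Summing the geometric series in $\eta$ produces the claimed bound; the one delicate point is to bound the local remainders $V$ tightly enough that the constant comes out as $K(\tfrac{1}{1-\eta}+1)$.

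To conclude, combining the telescoping identity with this estimate gives
\begin{align*}
\|\rho-\omega\|_{W_1}\le K\Big(\tfrac{1}{1-\eta}+1\Big)\sum_{i=1}^m\big\|\rho_{A_1^i}-\Phi_i(\rho_{A_1^{i-1}})\big\|_1,
\end{align*}
and I would finish by invoking \autoref{lem:recov}: estimate \eqref{eq:ST1}, after Cauchy–Schwarz, reads $\sum_{i=1}^m\|\rho_{A_1^i}-\Phi_i(\rho_{A_1^{i-1}})\|_1\le\sqrt{2m\,S(\rho\|\omega)}$, which gives \eqref{eq:marton2}; using instead \eqref{eq:ST2} gives $\sum_{i=1}^m\|\rho_{A_1^i}-\Phi_i(\rho_{A_1^{i-1}})\|_1\le 2m\sqrt{1-e^{-S(\rho\|\omega)/m}}$ and hence \eqref{eq:marton2a}. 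Apart from the induction, the only care needed is keeping track of which block each $\Phi_j$ and each partial trace acts on, so that the commutations used above are valid.
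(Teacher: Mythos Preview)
Your approach is essentially the paper's: the same telescoping identity, the same reduction to a $W_1$-to-trace-norm bound for each pushed-forward correction, and the same appeal to \autoref{lem:recov} at the end. The only real difference is in how you prove the intermediate estimate (the paper's \autoref{lem:W1Phi2}). The paper first peels off the block $A_{i-1}$ via \autoref{lem:XA} (which is proved \emph{site by site}, exploiting the exact equality of \autoref{prop2}), and then iterates \autoref{lem:XA} on $A_i,A_{i+1},\dots$, producing one term $K\|X\|_1$ plus a geometric series $K(1+\eta+\eta^2+\cdots)\|\mathrm{Tr}_{A_{i-1}}X\|_1$, hence $K\bigl(1+\tfrac{1}{1-\eta}\bigr)$. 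Your block-level $V+W$ split with \autoref{prop5} instead yields the recursion $C_{k}\le K\|V\|_1/\|X\|_1+\eta\,C_{k-1}$; with the natural bound $\|V\|_1\le\|Z\|_1+\|\tilde\Phi_{i+1}(X)\|_1\le(1+\eta)\|X\|_1$ this closes to $K\tfrac{1+\eta}{1-\eta}$ rather than the paper's $K\tfrac{2-\eta}{1-\eta}$ --- tighter for $\eta<\tfrac12$, looser for $\eta>\tfrac12$, but of the same order, so your flagged ``delicate point'' is genuinely only about the exact constant, not about correctness. (If you want to recover the paper's constant or better, replace your block split by \autoref{lem:XA} applied to the full pushed-forward operator on the block $A_i$; then the local term is $K\|X\|_1$ rather than $K\|V\|_1$, and the recursion gives $\tfrac{K}{1-\eta}$.)
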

\begin{rem}\label{rem:eta}
Condition \eqref{eq:eta} holds for some $\eta<1$ iff $\tilde{\Phi}_i$ strictly decreases the trace distance between any two quantum states that differ only in the subsystem $A_i$ on which $\tilde{\Phi}_i$ acts, \emph{i.e.}, that coincide after discarding $A_i$.
We expect this condition to hold for any strictly positive temperature.
\end{rem}
\begin{rem}\label{rem:1D}
An example of quantum state satisfying \eqref{eq:markov2} is a Gibbs state of a nearest-neighbor Hamiltonian on the $D$-dimensional cubic lattice $\Lambda = [L]^D$, where $x,\,y\in\Lambda$ are neighbors iff $\left\|x-y\right\|_1=1$.
We can then choose $m = L + 1$ and
\begin{equation}
    A_i = \left\{x\in\Lambda: x_1 = i\right\}\,,\qquad i=0,\,\ldots,\,L\,,
\end{equation}
with
\begin{equation}
    K = \left(L+1\right)^{D-1}\,,
\end{equation}
and get from \autoref{thm:Markov}
\begin{equation}
C(\omega) \le 2\left(L+1\right)^{2D-1}\left(\frac{1}{1-\eta}+1\right)^2 = 2\left|V\right|^\frac{2D-1}{D}\left(\frac{1}{1-\eta}+1\right)^2\,.
\end{equation}
We stress that, assuming that $\eta$ remains bounded away from $1$, we get $C(\omega) = O(|V|)$ iff $D=1$, \emph{i.e.}, for one-dimensional systems.
\end{rem}

\begin{rem}\label{rem:eta}
We can choose
\begin{align}
    \eta &= \max\left\{\left\|\tilde{\Phi}_i(X) - \omega_{A_i}\otimes \mathrm{Tr}_{A_{i-1}}X\right\|_1:i\in[m]\,,\;X\in\mathcal{O}_{A_1^{i-1}}\,,\;\left\|X\right\|_1=1\right\}\nonumber\\
    &\le \max_{i\in[m]}\left\|\tilde{\Phi}_i - \omega_{A_i}\otimes\mathrm{Tr}_{A_{i-1}}\right\|_\diamond\,,
\end{align}
where $\omega_{A_i}\otimes\mathrm{Tr}_{A_{i-1}}:\mathcal{O}_{A_{i-1}}\to\mathcal{O}_{A_i}$ is the quantum channel that replaces the input quantum state with $\omega_{A_i}$ and
\begin{equation}
\left\|\Phi\right\|_\diamond = \sup\left\{\left\|(\Phi\otimes\mathbb{I}_{\mathcal{B}(\mathcal{H})})(X)\right\|_1:X\in\mathcal{B}(\mathcal{H}^{\otimes2})\,,\;\left\|X\right\|_1=1\right\}
\end{equation}
denotes the diamond norm of the linear map $\Phi$ on $\mathcal{B}(\mathcal{H})$. 
\end{rem}

\begin{prop}\label{propcondition}
Let $\omega\in\mathcal{S}_V$ satisfy \eqref{eq:markov2}, and assume
\begin{equation}\label{eq:Dinf}
a = \max_{i\in[m-1]} S_\infty(\omega_{A_i}\otimes\omega_{A_{i+1}}\|\omega_{A_i A_{i+1}})<\frac{1}{2}\,,
\end{equation}
where
\begin{equation}
    S_\infty(\rho\|\sigma) = \ln\inf\left\{\lambda\in\mathbb{R} : \rho \le \lambda\,\sigma\right\}
\end{equation}
denotes the quantum max-divergence \cite{tomamichel2015quantum} between the quantum states $\rho$ and $\sigma$.
Then, we can choose in \eqref{eq:eta}
\begin{equation}
\eta = \sqrt{2\,a}\,.
\end{equation}
\begin{proof}
From \autoref{rem:eta}, we can choose
\begin{equation}
    \eta = \max_{i\in[m]}\max_{|\psi_i\rangle}\left\|\tilde{\Phi}_i(|\psi_i\rangle\langle\psi_i|) - \omega_{A_i}\otimes\mathrm{Tr}_{A_{i-1}}|\psi_i\rangle\langle\psi_i|\right\|_1 \le \max_{i\in[m]}\max_{|\psi_i\rangle}\left\|\Phi_i(|\psi_i\rangle\langle\psi_i|) - \omega_{A_i}\otimes|\psi_i\rangle\langle\psi_i|\right\|_1\,,
\end{equation}
where each $|\psi_i\rangle$ is a unit vector in $\mathcal{H}_{A_1^{i-1}}$.
We have from Pinsker's inequality
\begin{equation}
\left\|\Phi_i(|\psi_i\rangle\langle\psi_i|) - |\psi_i\rangle\langle\psi_i|\otimes\omega_{A_i}\right\|_1 \le \sqrt{2\,S_{\mathbb{M}}(|\psi_i\rangle\langle\psi_i|\otimes\omega_{A_i}\|\Phi_i(|\psi_i\rangle\langle\psi_i|))}\,.
\end{equation}
\eqref{eq:Dinf} implies
\begin{equation}\label{eq:prod3}
\ln\omega_{A_{i-1}A_i} \ge \ln\omega_{A_{i-1}} + \ln\omega_{A_i} - a\,.
\end{equation}
From the characterization of the states that saturate the strong subadditivity \cite{hayden2004structure} we get
\begin{equation}
    \ln\omega_{A_1^{i-1}} + \ln\omega_{A_{i-1}A_i}  = \ln\omega_{A_{i-1}} + \ln\omega_{A_1^i}\,,
\end{equation}
therefore, \eqref{eq:prod3} can be rewritten as
\begin{equation}\label{eq:prod4}
\ln\omega_{A_1^i} \ge \ln\omega_{A_1^{i-1}} + \ln\omega_{A_i} - a\,.
\end{equation}
Choosing in \eqref{eq:STi} $\rho_{A_1^i} = |\psi_i\rangle\langle\psi_i|\otimes\omega_{A_i}$ we get with the help of \eqref{eq:prod4}
\begin{equation}
S_{\mathbb{M}}(|\psi_i\rangle\langle\psi_i|\otimes\omega_{A_i}\|\Phi_i(|\psi_i\rangle\langle\psi_i|)) \le \langle\psi_i|\left(\ln\omega_{A_1^{i-1}} - \mathrm{Tr}_{A_i}\left[\omega_{A_i}\ln\omega_{A_1^i}
\right]\right)|\psi_i\rangle - S(A_i)_\omega \le a\,,
\end{equation}
and the claim follows.
\end{proof}
\end{prop}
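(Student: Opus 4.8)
The plan is to bound $\eta$ through the characterization recorded in \autoref{rem:eta}: in condition \eqref{eq:eta} one may take $\eta$ equal to the supremum of $\big\|\tilde{\Phi}_i(X)-\omega_{A_i}\otimes\mathrm{Tr}_{A_{i-1}}X\big\|_1$ over $i\in[m]$ and over Hermitian $X\in\mathcal{O}_{A_1^{i-1}}$ with $\|X\|_1=1$. Since the map $X\mapsto\tilde{\Phi}_i(X)-\omega_{A_i}\otimes\mathrm{Tr}_{A_{i-1}}X$ is linear and Hermiticity-preserving, splitting $X$ into its positive and negative parts, normalizing each to a density operator, and invoking convexity of the trace norm shows that this supremum is attained on pure inputs $X=|\psi\rangle\langle\psi|$. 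Because $\tilde{\Phi}_i=\mathrm{Tr}_{A_{i-1}}\circ\Phi_i$ and the partial trace contracts the trace norm, with $\mathrm{Tr}_{A_{i-1}}\!\big(\omega_{A_i}\otimes|\psi\rangle\langle\psi|\big)=\omega_{A_i}\otimes\mathrm{Tr}_{A_{i-1}}|\psi\rangle\langle\psi|$, I would reduce matters to bounding $\big\|\Phi_i(|\psi\rangle\langle\psi|)-\omega_{A_i}\otimes|\psi\rangle\langle\psi|\big\|_1$ for every unit vector $|\psi\rangle\in\mathcal{H}_{A_1^{i-1}}$; then by Pinsker's inequality in its measured form $\|\sigma-\tau\|_1\le\sqrt{2\,S_{\mathbb{M}}(\sigma\|\tau)}$ it suffices to prove $S_{\mathbb{M}}\big(|\psi\rangle\langle\psi|\otimes\omega_{A_i}\,\big\|\,\Phi_i(|\psi\rangle\langle\psi|)\big)\le a$.

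To estimate this measured relative entropy I would feed $\rho_{A_1^i}=|\psi\rangle\langle\psi|\otimes\omega_{A_i}$ into the recoverability chain \eqref{eq:STi} (equivalently the recovery bound \eqref{eq:recovery}), whose marginal on $A_1^{i-1}$ is exactly $|\psi\rangle\langle\psi|$; this yields $S_{\mathbb{M}}\big(|\psi\rangle\langle\psi|\otimes\omega_{A_i}\,\big\|\,\Phi_i(|\psi\rangle\langle\psi|)\big)\le S\big(|\psi\rangle\langle\psi|\otimes\omega_{A_i}\,\big\|\,\omega_{A_1^i}\big)-S\big(|\psi\rangle\langle\psi|\,\big\|\,\omega_{A_1^{i-1}}\big)$. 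Expanding the two relative entropies and dropping the $\ln|\psi\rangle\langle\psi|$ contributions (a pure state has zero von Neumann entropy), the right-hand side equals $\langle\psi|\big(\ln\omega_{A_1^{i-1}}-\mathrm{Tr}_{A_i}[\omega_{A_i}\ln\omega_{A_1^i}]\big)|\psi\rangle-S(\omega_{A_i})$, where $\mathrm{Tr}_{A_i}[\omega_{A_i}\ln\omega_{A_1^i}]$ denotes the partial trace of $\ln\omega_{A_1^i}$ over $A_i$ taken against the weight $\omega_{A_i}$.

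The crux is then to bound $\mathrm{Tr}_{A_i}[\omega_{A_i}\ln\omega_{A_1^i}]$ from below. Hypothesis \eqref{eq:Dinf} gives $S_\infty(\omega_{A_{i-1}}\otimes\omega_{A_i}\,\|\,\omega_{A_{i-1}A_i})\le a$, i.e.\ the operator inequality $\omega_{A_{i-1}}\otimes\omega_{A_i}\le e^{a}\,\omega_{A_{i-1}A_i}$; taking logarithms and using operator monotonicity of $t\mapsto\ln t$ (legitimate since $S_\infty<\infty$ forces the supports of the two operators to coincide) produces the \emph{local} bound $\ln\omega_{A_{i-1}A_i}\ge\ln\omega_{A_{i-1}}+\ln\omega_{A_i}-a$. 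Here the Markov assumption \eqref{eq:markov2} is essential: by the structure theorem for the states saturating strong subadditivity \cite{hayden2004structure} one has the telescoping identity $\ln\omega_{A_1^{i-1}}+\ln\omega_{A_{i-1}A_i}=\ln\omega_{A_{i-1}}+\ln\omega_{A_1^i}$, which upgrades the local bound to the \emph{global} one $\ln\omega_{A_1^i}\ge\ln\omega_{A_1^{i-1}}+\ln\omega_{A_i}-a$. Taking the $\omega_{A_i}$-weighted partial trace over $A_i$ on both sides (a positivity-preserving operation) and using $\mathrm{Tr}_{A_i}[\omega_{A_i}\ln\omega_{A_i}]=-S(\omega_{A_i})$, one gets $\mathrm{Tr}_{A_i}[\omega_{A_i}\ln\omega_{A_1^i}]\ge\ln\omega_{A_1^{i-1}}-S(\omega_{A_i})-a$, hence $S_{\mathbb{M}}\big(|\psi\rangle\langle\psi|\otimes\omega_{A_i}\,\big\|\,\Phi_i(|\psi\rangle\langle\psi|)\big)\le\big(S(\omega_{A_i})+a\big)-S(\omega_{A_i})=a$, so that $\eta\le\sqrt{2a}<1$. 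I expect this last step to be the main obstacle: one must check carefully that the scalar max-divergence bound genuinely upgrades to an operator inequality between logarithms, and — more delicately — that the purely local estimate on the pair $A_{i-1}A_i$ lifts to a global statement about $\omega_{A_1^i}$, which is exactly where the Markov condition \eqref{eq:markov2} and the locality of the recovery map (\autoref{lem:Philoc}, which forces $\Phi_i$ to act on $A_{i-1}$ alone) are indispensable.
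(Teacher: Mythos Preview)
Your proposal is correct and follows essentially the same route as the paper's proof: reduce $\eta$ via \autoref{rem:eta} to pure inputs, pass from $\tilde{\Phi}_i$ to $\Phi_i$ by contractivity of the partial trace, apply Pinsker to bound the trace distance by $\sqrt{2\,S_{\mathbb{M}}}$, invoke the recovery bound \eqref{eq:STi} with $\rho_{A_1^i}=|\psi\rangle\langle\psi|\otimes\omega_{A_i}$, and close by combining the operator inequality $\ln\omega_{A_{i-1}A_i}\ge\ln\omega_{A_{i-1}}+\ln\omega_{A_i}-a$ (from \eqref{eq:Dinf} and operator monotonicity of $\ln$) with the Markov identity $\ln\omega_{A_1^{i-1}}+\ln\omega_{A_{i-1}A_i}=\ln\omega_{A_{i-1}}+\ln\omega_{A_1^i}$ to obtain the global bound \eqref{eq:prod4}. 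Your write-up is in fact slightly more explicit than the paper's on two points that the paper leaves implicit: the reduction from general $X\in\mathcal{O}_{A_1^{i-1}}$ with $\|X\|_1=1$ to pure states, and the justification (via operator monotonicity of the logarithm and equality of supports) that the scalar max-divergence bound yields an operator-log inequality.
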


\begin{rem}
Condition \eqref{eq:eta} is reminiscent of the so-called Dobrushin uniqueness condition (see \cite[Theorem 4]{dobrushin1970prescribing}). 
\end{rem}

\subsection{Non-Markovian states}

Here, we prove an alternative version of \autoref{thm:Markov} where the Markov condition \eqref{eq:markov2} is replaced by exponential decay of correlations.

\begin{thm}\label{thm:general}
Let $V=[n]$ be a one-dimensional lattice, and let $\omega\in\mathcal{S}_V$.
For any $i\in[n]$, let $\Phi_i$ be the recovery map associated to $\omega_{1\ldots i}$ that recovers the site $i$ from the sites $1\ldots i-1$.
We assume that $\omega$ has exponentially decaying correlations, in the sense that there exist $C\ge0$ and $0\le\eta<1$ such that for any $i\in[n]$, any $k=0,\,\ldots,\,\max(i,\,n-i)$ and any $\tau\in\mathcal{S}_{1\ldots i}$,
\begin{equation}\label{eq:recovery4}
    \left\|\mathrm{Tr}_{i-k+1\ldots i+k}(\Phi_n\circ\ldots\circ\Phi_{i+1})(\tau_{1\ldots i}) - \tau_{1\ldots i-k}\otimes\omega_{i+k+1\ldots n}\right\|_1 \le C\,\eta^k\,.
\end{equation}
We also assume that for any $i\in[n]$, any $k=0,\,\ldots,\,i-1$ and any $\tau\in\mathcal{S}_{1\ldots i-1}$
\begin{equation}\label{eq:recovery5}
 \left\|\mathrm{Tr}_{i-k\ldots i}\Phi_i(\tau_{1\ldots i-1}) - \tau_{1\ldots i-k-1}\right\|_1 \le C\,\eta^k\,.
\end{equation}
Then,
\begin{equation}
C(\omega) \le 8\,n\left(2+\frac{C+1}{1-\eta}-\frac{\ln\left(C^2\,n\right)}{2\ln\eta}\right)^2\,.
\end{equation}

\begin{proof}
From \eqref{eq:ST1} of \autoref{lem:recov}, we have for any $\rho\in\mathcal{S}_V$
\begin{equation}
S(\rho\|\omega) \ge \frac{1}{2n}\left(\sum_{i=1}^n\left\|\rho_{1\ldots i} - \Phi_i(\rho_{1\ldots i-1})\right\|_1\right)^2\,.
\end{equation}
We have
\begin{align}
\left\|\rho - \omega\right\|_{W_1} &= \left\|\sum_{i=1}^n(\Phi_n\circ\ldots\circ \Phi_{i+1})(\rho_{1\ldots i} - \Phi_i(\rho_{1\ldots i-1}))\right\|_{W_1}\nonumber\\
&\le \sum_{i=1}^n\left\|(\Phi_n\circ\ldots\circ \Phi_{i+1})(\rho_{1\ldots i} - \Phi_i(\rho_{1\ldots i-1}))\right\|_{W_1}\,.
\end{align}
For any $i\in[n]$, we have from \autoref{lem:XAit}
\begin{align}
    &\left\|(\Phi_n\circ\ldots\circ \Phi_{i+1})(\rho_{1\ldots i} - \Phi_i(\rho_{1\ldots i-1}))\right\|_{W_1}\nonumber\\
    &\le 2\sum_{k=0}^{\max\{i,\,n-i\}}\left\|\mathrm{Tr}_{i-k+1\ldots i+k}(\Phi_n\circ\ldots\circ \Phi_{i+1})(\rho_{1\ldots i} - \Phi_i(\rho_{1\ldots i-1}))\right\|_1\nonumber\\
    &\le 2\sum_{k=0}^{\max\{i,\,n-i\}}\left(\left\|\rho_{1\ldots i-k} - \mathrm{Tr}_{i-k+1\ldots i}\Phi_i(\rho_{1\ldots i-1}))\right\|_1 + C\,\eta^k\left\|\rho_{1\ldots i} - \Phi_i(\rho_{1\ldots i-1}))\right\|_1\right)\,.
\end{align}
We have for any $k_0\in\left\{0,\,\ldots,\,i-1\right\}$
\begin{align}
    &\sum_{k=0}^{i-1}\left\|\rho_{1\ldots i-k} - \mathrm{Tr}_{i-k+1\ldots i}\Phi_i(\rho_{1\ldots i-1})\right\|_1\nonumber\\
    &\le \sum_{k=0}^{k_0}\left\|\rho_{1\ldots i} - \Phi_i(\rho_{1\ldots i-1})\right\|_1 + \sum_{k=k_0+1}^{i-1}\left\|\rho_{1\ldots i-k} - \omega_{1\ldots i-k} - \mathrm{Tr}_{i-k+1\ldots i}\Phi_i(\rho_{1\ldots i-k} - \omega_{1\ldots i-k})\right\|_1\nonumber\\
    &\le \left(k_0+1\right)\left\|\rho_{1\ldots i} - \Phi_i(\rho_{1\ldots i-1})\right\|_1 + C\sum_{k=k_0+1}^{i-1}\eta^{k-1}\left\|\rho_{1\ldots i-k} - \omega_{1\ldots i-k}\right\|_1\nonumber\\
    &\le \left(k_0+1\right)\left\|\rho_{1\ldots i} - \Phi_i(\rho_{1\ldots i-1})\right\|_1 + \frac{C\,\eta^{k_0}}{1-\eta}\left\|\rho-\omega\right\|_1\,,
\end{align}
therefore
\begin{align}
    &\left\|(\Phi_n\circ\ldots\circ \Phi_{i+1})(\rho_{1\ldots i} - \Phi_i(\rho_{1\ldots i-1}))\right\|_{W_1}\nonumber\\
    &\le 2\left(\left(k_0+1+\frac{C}{1-\eta}\right)\left\|\rho_{1\ldots i} - \Phi_i(\rho_{1\ldots i-1})\right\|_1 + \frac{C\,\eta^{k_0}}{1-\eta}\left\|\rho-\omega\right\|_1\right)\,,
\end{align}
and
\begin{align}
    \left\|\rho - \omega\right\|_{W_1} &\le 2\left(\left(k_0+1+\frac{C}{1-\eta}\right)\sum_{i=1}^n\left\|\rho_{1\ldots i} - \Phi_i(\rho_{1\ldots i-1})\right\|_1 + \frac{n\,C\,\eta^{k_0}}{1-\eta}\left\|\rho-\omega\right\|_1\right)\nonumber\\
    &\le 2\left(k_0+1+C\,\frac{1+\eta^{k_0}\sqrt{n}}{1-\eta}\right)\sqrt{2\,n\,S(\rho\|\omega)}\,,
\end{align}
and the claim follows choosing
\begin{equation}
    k_0 = \left\lceil-\frac{\ln\left(C^2\,n\right)}{2\ln\eta}\right\rceil\,.
\end{equation}
\end{proof}
\end{thm}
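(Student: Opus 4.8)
The plan is to follow the architecture of the proof of \autoref{thm:Markov}, with the exact locality of the recovery maps there (which came from the Markov condition) replaced by the \emph{approximate} locality quantified by the decay-of-correlations hypotheses \eqref{eq:recovery4}--\eqref{eq:recovery5}. First, I would record the telescoping identity
\begin{equation}
\rho - \omega \;=\; \sum_{i=1}^n (\Phi_n\circ\cdots\circ\Phi_{i+1})\bigl(\rho_{1\ldots i} - \Phi_i(\rho_{1\ldots i-1})\bigr)\,,
\end{equation}
which holds because $\Phi_n\circ\cdots\circ\Phi_{i+1}$ maps a state on $1\ldots i$ to a state on $V$, satisfies $(\Phi_n\circ\cdots\circ\Phi_{i+1})(\omega_{1\ldots i})=\omega$, and composes as $(\Phi_n\circ\cdots\circ\Phi_{i+1})\circ\Phi_i=\Phi_n\circ\cdots\circ\Phi_i$, so that consecutive summands cancel and only $\rho_{1\ldots n}-\omega$ survives. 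Combined with the triangle inequality for $\|\cdot\|_{W_1}$, this reduces everything to bounding the $W_1$ norm of each summand. On the entropy side I would use \eqref{eq:ST1} of \autoref{lem:recov}, namely $S(\rho\|\omega)\ge\frac{1}{2n}\bigl(\sum_{i=1}^n\|\rho_{1\ldots i}-\Phi_i(\rho_{1\ldots i-1})\|_1\bigr)^2$, together with Pinsker's inequality $\|\rho-\omega\|_1\le\sqrt{2\,S(\rho\|\omega)}$.

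The crux is the next step: bounding $\|(\Phi_n\circ\cdots\circ\Phi_{i+1})(Y_i)\|_{W_1}$, where $Y_i:=\rho_{1\ldots i}-\Phi_i(\rho_{1\ldots i-1})$ satisfies $\mathrm{Tr}_i Y_i=0$ and is thus a single-site perturbation. For this I would first establish, in the spirit of \autoref{prop5} and \autoref{cor1} by telescoping over nested regions centered at site $i$, the auxiliary bound $\|Z\|_{W_1}\le 2\sum_{k\ge0}\|\mathrm{Tr}_{i-k+1\ldots i+k}Z\|_1$ for any traceless $Z$. Applying it with $Z=(\Phi_n\circ\cdots\circ\Phi_{i+1})(Y_i)$ and invoking \eqref{eq:recovery4} on $\tau=\rho_{1\ldots i}$ and on $\tau=\Phi_i(\rho_{1\ldots i-1})$ separately, the $k$-th shell contributes at most $\|\rho_{1\ldots i-k}-\mathrm{Tr}_{i-k+1\ldots i}\Phi_i(\rho_{1\ldots i-1})\|_1+C\eta^k\|Y_i\|_1$. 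I would then split the sum over $k$ at a free cutoff $k_0$: for $k\le k_0$ the first term is crudely $\le\|Y_i\|_1$; for $k>k_0$ I insert $\pm\,\omega_{1\ldots i-k}$, apply \eqref{eq:recovery5} to the $\omega$-centered difference to pull out a factor $C\eta^{k-1}\|\rho_{1\ldots i-k}-\omega_{1\ldots i-k}\|_1\le C\eta^{k-1}\|\rho-\omega\|_1$, and sum the geometric tail to $\tfrac{C\eta^{k_0}}{1-\eta}\|\rho-\omega\|_1$. Summing over $i$ and using $\sum_i\|Y_i\|_1\le\sqrt{2nS(\rho\|\omega)}$ and $\|\rho-\omega\|_1\le\sqrt{2S(\rho\|\omega)}$ yields
\begin{equation}
\|\rho-\omega\|_{W_1}\;\le\; 2\Bigl(k_0+1+\tfrac{C\,(1+\eta^{k_0}\sqrt n)}{1-\eta}\Bigr)\sqrt{2n\,S(\rho\|\omega)}\,.
\end{equation}

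Finally I would choose $k_0=\bigl\lceil-\ln(C^2 n)/(2\ln\eta)\bigr\rceil$, so that $C\eta^{k_0}\sqrt n\le1$ and $k_0\le1-\ln(C^2 n)/(2\ln\eta)$; substituting and squaring gives $C(\omega)\le 8n\bigl(2+\tfrac{C+1}{1-\eta}-\tfrac{\ln(C^2 n)}{2\ln\eta}\bigr)^2$. I expect the main obstacle to be the bookkeeping in the crux step: one must cleanly separate the $O(k_0)$ ``near'' shells, whose contributions get absorbed into the $\sum_i\|Y_i\|_1$ term feeding the Pinsker bound, from the ``far'' shells, which are only geometrically small once re-centered around $\omega$ so that \eqref{eq:recovery5} becomes applicable --- and one must then check that the cutoff optimization does not spoil the linear-in-$n$ scaling.
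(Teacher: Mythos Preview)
Your proposal is correct and follows essentially the same architecture as the paper's proof: the same telescoping identity, the same entropy lower bound via \autoref{lem:recov}, the same symmetric ``shell'' bound $\|Z\|_{W_1}\le 2\sum_{k}\|\mathrm{Tr}_{i-k+1\ldots i+k}Z\|_1$ (which in the paper is obtained from \autoref{lem:XA}/\autoref{lem:XAit} rather than \autoref{prop5} and \autoref{cor1}), the same use of \eqref{eq:recovery4} followed by the cutoff at $k_0$ and the application of \eqref{eq:recovery5} on the $\omega$-centered tail, and the same optimization of $k_0$. The only cosmetic difference is that to get the error term $C\eta^k\|Y_i\|_1$ (rather than $2C\eta^k$) one should apply \eqref{eq:recovery4} to the Jordan decomposition of $Y_i$ rather than to $\rho_{1\ldots i}$ and $\Phi_i(\rho_{1\ldots i-1})$ separately, but this does not affect the final constant.
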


\subsection{Auxiliary lemmas}
\begin{lem}\label{lem:W1Phi2}
Under the same hypotheses of \autoref{thm:Markov}, for any $i\in[m]$ and any $X\in\mathcal{O}_{A_1^i}^T$ such that
\begin{equation}\label{eq:hypX}
    \mathrm{Tr}_{A_{i-1}A_i}X=0
\end{equation}
we have
\begin{equation}
    \left\|(\Phi_m\circ\ldots\circ\Phi_{i+1})(X)\right\|_{W_1} \le K\left(\frac{1}{1-\eta}+1\right)\left\|X\right\|_1\,.
\end{equation}
\begin{proof}
We have from \autoref{lem:XA} and from the contractivity of the trace distance
\begin{align}
    \left\|(\Phi_m\circ\ldots\circ\Phi_{i+1})(X)\right\|_{W_1} &\le \left|A_{i-1}\right|\left\|(\Phi_m\circ\ldots\circ\Phi_{i+1})(X)\right\|_1  + \left\|(\Phi_m\circ\ldots\circ\Phi_{i+1})\left(\mathrm{Tr}_{A_{i-1}}X\right)\right\|_{W_1}\nonumber\\
    &\le K\left\|X\right\|_1  + \left\|(\Phi_m\circ\ldots\circ\Phi_{i+1})\left(\mathrm{Tr}_{A_{i-1}}X\right)\right\|_{W_1}\,.
\end{align}
We have
\begin{align}
&\left\|(\Phi_m\circ\ldots\circ\Phi_{i+1})\left(\mathrm{Tr}_{A_{i-1}}X\right)\right\|_{W_1}\nonumber\\
&\le \left|A_i\right|\left\|(\Phi_m\circ\ldots\circ\Phi_{i+1})\left(\mathrm{Tr}_{A_{i-1}}X\right)\right\|_1 + \left\|\mathrm{Tr}_{A_i}(\Phi_m\circ\ldots\circ\Phi_{i+1})\left(\mathrm{Tr}_{A_{i-1}}X\right)\right\|_{W_1}\nonumber\\
&\le K\left\|\mathrm{Tr}_{A_{i-1}}X\right\|_1 + \left\|(\Phi_m\circ\ldots\circ\Phi_{i+2}\circ\tilde{\Phi}_{i+1})\left(\mathrm{Tr}_{A_{i-1}}X\right)\right\|_{W_1}\,.
\end{align}
Iterating the procedure we get
\begin{align}
&\left\|(\Phi_m\circ\ldots\circ\Phi_{i+1})\left(\mathrm{Tr}_{A_{i-1}}X\right)\right\|_{W_1}\nonumber\\
&\le K\left(\left\|\mathrm{Tr}_{A_{i-1}}X\right\|_1 + \left\|\tilde{\Phi}_{i+1}\left(\mathrm{Tr}_{A_{i-1}}X\right)\right\|_1 + \ldots + \left\|(\tilde{\Phi}_m\circ\ldots\circ\tilde{\Phi}_{i+1})\left(\mathrm{Tr}_{A_{i-1}}X\right)\right\|_1\right)\nonumber\\
& \phantom{\le} + \left\|\mathrm{Tr}_{A_m}(\tilde{\Phi}_m\circ\ldots\circ\tilde{\Phi}_{i+1})\left(\mathrm{Tr}_{A_{i-1}}X\right)\right\|_{W_1}\nonumber\\
&\le K\left(1+\eta+\ldots + \eta^{m-i}\right)\left\|\mathrm{Tr}_{A_{i-1}}X\right\|_1 + \left\|\mathrm{Tr}_{A_{i-1}A_i}X\right\|_1\nonumber\\
&\le \frac{K}{1-\eta}\left\|\mathrm{Tr}_{A_{i-1}}X\right\|_1 \le \frac{K}{1-\eta}\left\|X\right\|_1\,,
\end{align}
where the last two inequalities follow from \eqref{eq:eta} and \eqref{eq:hypX}, respectively.
The claim follows.
\end{proof}
\end{lem}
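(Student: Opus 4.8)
The plan is to exploit that, thanks to the Markov condition \eqref{eq:markov2} and \autoref{lem:Philoc}, each recovery map $\Phi_j$ with $j>i$ acts only on the block $A_{j-1}$, so that the composition $\Phi_m\circ\ldots\circ\Phi_{i+1}$ lets the disturbance carried by $X$ propagate one block at a time down the chain, and hypothesis \eqref{eq:eta} makes each such propagation step a strict trace‑norm contraction. Tracking this with the quantum $W_1$ norm will produce a convergent geometric series whose sum is $\tfrac{K}{1-\eta}$.

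The workhorse is a splitting estimate: for any $W\in\mathcal{O}_V^T$ and any region $R\subseteq V$,
\begin{equation}
  \left\|W\right\|_{W_1}\le |R|\left\|W\right\|_1+\left\|\mathrm{Tr}_RW\right\|_{W_1}\,.
\end{equation}
I would prove this in the dual picture: given $H$ with $\|H\|_L\le1$, write $H=\big(H-\tfrac{\mathbb{I}_R}{d^{|R|}}\otimes\mathrm{Tr}_RH\big)+\tfrac{\mathbb{I}_R}{d^{|R|}}\otimes\mathrm{Tr}_RH$; telescoping \autoref{prop15} over the sites of $R$ bounds the $\infty$‑norm of the first summand by $|R|\,\|H\|_L\le|R|$, so pairing it with $W$ contributes at most $|R|\left\|W\right\|_1$, while pairing the second summand with $W$ gives $\mathrm{Tr}[(\mathrm{Tr}_RW)\,\widetilde{H}]$ with $\widetilde{H}=\tfrac{1}{d^{|R|}}\mathrm{Tr}_RH$ of Lipschitz constant at most $1$, hence at most $\left\|\mathrm{Tr}_RW\right\|_{W_1}$ — here \autoref{prop4} identifies the $W_1$ norm on $R^c$ with the $W_1$ norm on $V$ after appending the maximally mixed state on $R$.

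Then I would run the iteration. Applying the splitting with $R=A_{i-1}$ to $W=(\Phi_m\circ\ldots\circ\Phi_{i+1})(X)$ and using that none of $\Phi_{i+1},\ldots,\Phi_m$ touch $A_{i-1}$ (so the reduced operator is $(\Phi_m\circ\ldots\circ\Phi_{i+1})(\mathrm{Tr}_{A_{i-1}}X)$), together with trace‑norm contractivity of the channels $\Phi_j$ and $|A_{i-1}|\le K$, gives
\begin{equation}
  \left\|(\Phi_m\circ\ldots\circ\Phi_{i+1})(X)\right\|_{W_1}\le K\left\|X\right\|_1+\left\|(\Phi_m\circ\ldots\circ\Phi_{i+1})(Y)\right\|_{W_1}\,,\qquad Y:=\mathrm{Tr}_{A_{i-1}}X\,.
\end{equation}
By \eqref{eq:hypX}, $Y\in\mathcal{O}_{A_1^i}^T$ and $\mathrm{Tr}_{A_i}Y=\mathrm{Tr}_{A_{i-1}A_i}X=0$. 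Now peel off $A_i$: the splitting with $R=A_i$ and the identity $\mathrm{Tr}_{A_i}\circ\Phi_{i+1}=\tilde{\Phi}_{i+1}$ (valid since $\Phi_{i+2},\ldots,\Phi_m$ do not act on $A_i$) turn the residual $W_1$ term into $\left\|(\Phi_m\circ\ldots\circ\Phi_{i+2})(\tilde{\Phi}_{i+1}(Y))\right\|_{W_1}$, and \eqref{eq:eta} (index $i+1$) gives $\|\tilde{\Phi}_{i+1}(Y)\|_1\le\eta\|Y\|_1$; moreover $\tilde{\Phi}_{i+1}(Y)$ is again traceless and annihilated by $\mathrm{Tr}_{A_{i+1}}$ — by trace preservation of $\tilde{\Phi}_{i+1}$ and $\mathrm{Tr}_{A_i}Y=0$ — so the hypothesis of \eqref{eq:eta} is reproduced for the next step. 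Iterating block by block down to $A_m$ yields $\left\|(\Phi_m\circ\ldots\circ\Phi_{i+1})(Y)\right\|_{W_1}\le K\,(1+\eta+\ldots+\eta^{m-i})\left\|Y\right\|_1+\left\|\mathrm{Tr}_{A_{i-1}A_i}X\right\|_1\le\tfrac{K}{1-\eta}\left\|X\right\|_1$, the last term vanishing by \eqref{eq:hypX}; combined with the previous display this is the claim.

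I expect the main obstacle to be bookkeeping rather than a genuine conceptual hurdle: one must keep straight at every step (a) which block each $\Phi_j$ and $\tilde{\Phi}_j$ acts on, so that partial traces may be commuted past the later maps, and (b) that the intermediate operators $(\tilde{\Phi}_{i+\ell}\circ\ldots\circ\tilde{\Phi}_{i+1})(Y)$ stay traceless with vanishing partial trace over their leading block, so that \eqref{eq:eta} keeps applying — and it is precisely the Markov property (via \autoref{lem:Philoc}) and the one‑dimensional geometry that make this closure work. A secondary subtlety is obtaining the clean constant $|R|$ in the splitting estimate rather than $2|R|$, which is why I would argue through the dual Lipschitz formulation and \autoref{prop15} instead of a naive primal triangle inequality.
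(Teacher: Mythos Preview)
Your proposal is correct and takes essentially the same approach as the paper: peel off one block at a time via the splitting estimate $\|W\|_{W_1}\le|R|\,\|W\|_1+\|\mathrm{Tr}_RW\|_{W_1}$, then iterate using \eqref{eq:eta} to produce the geometric series $K(1+\eta+\cdots)\le K/(1-\eta)$, with the final residual vanishing by \eqref{eq:hypX}. The only cosmetic difference is that the paper obtains that splitting estimate (its \autoref{lem:XA}) on the primal side via \autoref{prop2} and \autoref{prop4} rather than through your dual argument with \autoref{prop15}; both routes give the same constant $|R|$, so your concern about picking up an extra factor of $2$ from a primal approach is unfounded.
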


\begin{lem}\label{lem:XA}
For any $X\in\mathcal{O}_V^T$ and any $A\subseteq V$,
\begin{equation}
\left\|X\right\|_{W_1} \le \left|A\right|\left\|X\right\|_1 + \left\|\mathrm{Tr}_AX\right\|_{W_1}\,.
\end{equation}
\begin{proof}
Without loss of generality, we can assume that $V=[n]$ and $A = [k]$ for some $k\in[n]$.
We have
\begin{align}
\left\|X\right\|_{W_1} &\le \left\|X - \frac{\mathbb{I}}{d}\otimes\mathrm{Tr}_1X\right\|_{W_1} + \left\|\frac{\mathbb{I}}{d}\otimes\mathrm{Tr}_1X\right\|_{W_1} = \frac{1}{2}\left\|X - \frac{\mathbb{I}}{d}\otimes\mathrm{Tr}_1X\right\|_1 + \left\|\mathrm{Tr}_1X\right\|_{W_1}\nonumber\\
&\le \left\|X\right\|_1 + \left\|\mathrm{Tr}_1X\right\|_{W_1}\,,
\end{align}
where the equality follows from \autoref{prop2} and \autoref{prop4} and the last inequality follows from the triangle inequality for the trace norm and its contractivity with respect to partial traces.
By induction we get
\begin{align}
\left\|X\right\|_{W_1} &\le \left(\left\|X\right\|_1 + \ldots + \left\|\mathrm{Tr}_{1\ldots k-1}X\right\|_1\right) + \left\|\mathrm{Tr}_{1\ldots k}X\right\|_{W_1}\nonumber\\
&\le k\left\|X\right\|_1 + \left\|\mathrm{Tr}_{1\ldots k}X\right\|_{W_1}\,,
\end{align}
and the claim follows.
\end{proof}
\end{lem}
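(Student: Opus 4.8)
The plan is to peel off the sites of $A$ one at a time, trading the $W_1$ norm of $X$ for the $W_1$ norm of a partial trace at the cost of exactly one trace-norm term per removed site. After relabelling so that $V=[n]$ and $A=[k]$, I would argue by induction on $k$, the case $k=0$ being trivial.

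For the inductive step the engine is the splitting
\[
X = \Big(X - \tfrac{\mathbb{I}}{d}\otimes\mathrm{Tr}_1 X\Big) + \tfrac{\mathbb{I}}{d}\otimes\mathrm{Tr}_1 X
\]
combined with the triangle inequality for $\left\|\cdot\right\|_{W_1}$. The first summand is traceless and has vanishing marginal on site $1$, so \autoref{prop2} identifies its $W_1$ norm with half its trace norm; the triangle inequality for $\left\|\cdot\right\|_1$ together with $\left\|\tfrac{\mathbb{I}}{d}\otimes\mathrm{Tr}_1 X\right\|_1 = \left\|\mathrm{Tr}_1 X\right\|_1 \le \left\|X\right\|_1$ (contractivity of the trace norm under partial trace) then bounds it by $\left\|X\right\|_1$. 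The second summand is a maximally mixed, decoupled qudit tensored with $\mathrm{Tr}_1 X$, so by additivity of the $W_1$ distance under tensor products (\autoref{prop4}, applied to the traceless operator $\mathrm{Tr}_1 X$ after expressing it as a multiple of a difference of states) its $W_1$ norm equals $\left\|\mathrm{Tr}_1 X\right\|_{W_1}$. This yields the one-site estimate $\left\|X\right\|_{W_1} \le \left\|X\right\|_1 + \left\|\mathrm{Tr}_1 X\right\|_{W_1}$. Iterating it along the sites $1,\ldots,k$ and bounding every intermediate $\left\|\mathrm{Tr}_{1\ldots j}X\right\|_1$ by $\left\|X\right\|_1$, again by contractivity under partial traces, gives $\left\|X\right\|_{W_1} \le k\left\|X\right\|_1 + \left\|\mathrm{Tr}_{1\ldots k}X\right\|_{W_1}$, which is the claim.

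I do not expect a genuine obstacle here: the only points requiring care are checking that $X - \tfrac{\mathbb{I}}{d}\otimes\mathrm{Tr}_1 X$ actually meets the hypotheses of \autoref{prop2} (tracelessness and zero marginal on the removed site — both are immediate since $\mathrm{Tr}_1\big(\tfrac{\mathbb{I}}{d}\otimes Y\big)=Y$), and that \autoref{prop4}, which is stated for states, extends by homogeneity of the $W_1$ norm to the traceless operators appearing in the argument.
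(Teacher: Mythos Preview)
Your proposal is correct and follows essentially the same route as the paper: the same one-site splitting $X = (X - \tfrac{\mathbb{I}}{d}\otimes\mathrm{Tr}_1 X) + \tfrac{\mathbb{I}}{d}\otimes\mathrm{Tr}_1 X$, the same application of \autoref{prop2} and \autoref{prop4} to the two pieces, and the same inductive iteration with contractivity of the trace norm under partial traces. Your remarks about verifying the hypotheses of \autoref{prop2} and extending \autoref{prop4} by homogeneity to traceless operators are exactly the small checks the paper leaves implicit.
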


\begin{lem}\label{lem:XAit}
Let $V=[n]$.
Then, for any $X\in\mathcal{O}_V^T$,
\begin{equation}
    \left\|X\right\|_{W_1} \le \left\|X\right\|_1 + \left\|\mathrm{Tr}_1X\right\|_1 + \ldots + \left\|\mathrm{Tr}_{1\ldots n-1}X\right\|_1\,.
\end{equation}
\begin{proof}
Follows from \autoref{lem:XA}.
\end{proof}
\end{lem}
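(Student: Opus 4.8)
The plan is to iterate \autoref{lem:XA}, but applied only to single-site regions, so that no factor larger than $1$ is ever picked up. Without loss of generality assume $V=[n]$. Applying \autoref{lem:XA} with $A=\{1\}$ gives $\left\|X\right\|_{W_1}\le\left\|X\right\|_1+\left\|\mathrm{Tr}_1X\right\|_{W_1}$, and the idea is to keep peeling off sites $1,2,\ldots$ one at a time.

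The key point that makes the iteration go through is that partial traces preserve membership in $\mathcal{O}^T_{(\cdot)}$: if $X\in\mathcal{O}_V^T$ then $\mathrm{Tr}_{1\ldots k}X$ is self-adjoint and $\mathrm{Tr}_{k+1\ldots n}\mathrm{Tr}_{1\ldots k}X=\mathrm{Tr}\,X=0$, so $\mathrm{Tr}_{1\ldots k}X\in\mathcal{O}^T_{\{k+1,\ldots,n\}}$ and \autoref{lem:XA} may be applied to it again, now with the single-site region $\{k+1\}$. Doing this for $k=1,\ldots,n-1$ yields, by a straightforward induction on the number of sites removed,
\begin{equation}
\left\|X\right\|_{W_1}\le\sum_{k=0}^{n-2}\left\|\mathrm{Tr}_{1\ldots k}X\right\|_1+\left\|\mathrm{Tr}_{1\ldots n-1}X\right\|_{W_1}\,,
\end{equation}
with the convention $\mathrm{Tr}_{1\ldots 0}X=X$. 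It then remains to bound the last term: $\mathrm{Tr}_{1\ldots n-1}X$ is a traceless self-adjoint operator supported on the single qudit $n$, so by \autoref{prop2} we have $\left\|\mathrm{Tr}_{1\ldots n-1}X\right\|_{W_1}=\tfrac12\left\|\mathrm{Tr}_{1\ldots n-1}X\right\|_1\le\left\|\mathrm{Tr}_{1\ldots n-1}X\right\|_1$, which upgrades the last term in the sum and produces exactly the claimed inequality. (Equivalently, one may apply \autoref{lem:XA} once more with $A=\{n\}$ and use that $\left\|\mathrm{Tr}\,X\right\|_{W_1}=0$.)

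There is no real obstacle here: the argument is precisely the telescoping already carried out inside the proof of \autoref{lem:XA}, except that we retain the intermediate trace-norm terms instead of crudely bounding each $\left\|\mathrm{Tr}_{1\ldots k}X\right\|_1$ by $\left\|X\right\|_1$. The only things to check are that the chain of applications is legitimate — i.e.\ the tracelessness preservation noted above — and that there is no circularity, which holds because \autoref{lem:XA} is established from \autoref{prop2} and \autoref{prop4} alone and does not invoke \autoref{lem:XAit}.
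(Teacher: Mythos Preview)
Your proof is correct and follows essentially the same route as the paper: the paper simply says ``Follows from \autoref{lem:XA},'' and indeed the telescoping displayed in the proof of \autoref{lem:XA} already gives
\[
\|X\|_{W_1}\le \|X\|_1+\ldots+\|\mathrm{Tr}_{1\ldots k-1}X\|_1+\|\mathrm{Tr}_{1\ldots k}X\|_{W_1}\,,
\]
and taking $k=n$ (so $\mathrm{Tr}_{1\ldots n}X=0$) yields the claim; your version just stops at $k=n-1$ and handles the last term via \autoref{prop2}, which is equivalent.
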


\section{Curvature  bound}\label{Lipcontract}

In the seminal paper \cite{Ollivier2009}, Ollivier introduced a generalization of the notion of curvature to generic, possibly discrete, metric spaces. In his framework, the curvature of a metric space $(\Omega,d)$ endowed with a classical stochastic map $P$ acting on the probability measures on $\Omega$ is defined as the following contraction property of the Wasserstein distance $W_1$: for any two probability measures $\mu_1,\mu_2$, 
\begin{align}\label{Olliviercurvature}
   W_1(P(\mu_1),P(\mu_2))\le \big(1-\kappa\big)\,W_1(\mu_1,\mu_2)\,. 
\end{align}
The constant $\kappa>0$ is called the coarse Ricci curvature of the triple $(\Omega,d,P)$. In particular, it is easy to verify that the existence of a positive coarse Ricci curvature induces the uniqueness of the invariant measure $\nu$ for the Markov kernel $P$. Moreover, it was recently proven in \cite{Eldan2017} that Ollivier's coarse Ricci curvature provides an upper bound on the transportation cost inequality for the measure $\nu$, hence recovering the results from the smooth Riemannian setting. 

Here, inspired by the works of \cite{Ollivier2009} and \cite{Eldan2017}, we prove that a contraction of the Lipschitz constant under a certain quantum channel constructed from the Petz recovery maps of the Gibbs state $\omega$ can be used to conclude that $\omega$ satisfies a transportation cost inequality. In particular, we do not need to assume that the underlying graph is $\mathbb{Z}$, in contrast with \autoref{transportationcostdobrushin}. Let $G=(V,E)$ be a hypergraph with $n=|V|$, and let $H:=\sum_{A\in E}h_A$ be a Hamiltonian whose local terms $h_A$ pairwise commute and are supported on the hyperedges $A\in E$. For a given site $v\in V$, we recall the composition of the partial trace $\tr_v$ on $v$ with the rotated Petz recovery map of $v$:
\begin{align}
    \Psi_v(\rho)=\Phi_v\circ \tr_v(\rho)=\int_{\mathbb{R}}\omega^{\frac{1-it}{2}}\omega_{v^c}^{\frac{-1+it}{2}}\,(\rho_{v^c}\otimes I_v)\,\omega_{v^c}^{\frac{-1-it}{2}}\omega^{\frac{1+it}{2}}\,d\mu_0(t)
\end{align}
for the probability density $\mu_0(t):=\frac{\pi}{2}\big(\cosh(\pi t)+1\big)^{-1}$. Note that since we assumed $\omega$ to be the Gibbs state of a commuting Hamiltonian, the map $\Psi_v$ acts non-trivially on the neighborhood of $v$
\begin{equation}
    N_v:=\bigcup\left\{A\in E : v\in A\right\}\,.
\end{equation}
We also introduce the quantum channel
\begin{equation}
    \Psi=\frac{1}{n}\sum_{v\in V}\Psi_v\,.
\end{equation}
We assume that $\Psi$ is a contraction with respect to the $W_1$ norm, \emph{i.e.}, that
\begin{align}\label{contraction2}
    \left\|\Psi\right\|_{W_1\to W_1} = \max_{\Delta\in\mathcal{O}_V^T}\frac{\left\|\Psi(\Delta)\right\|_{W_1}}{\left\|\Delta\right\|_{W_1}} \le 1-\frac{\kappa}{n}
\end{align}
for some $\kappa>0$, in analogy with \eqref{Olliviercurvature}. This contraction property was already derived in Ollivier's original article \cite{Ollivier2009} as a generalization of Dobrushin's uniqueness condition. Here, we first prove that this condition implies the transportation cost inequality for the Gibbs state $\omega\equiv \omega_\beta:=e^{-\beta H}/\tr\,e^{-\beta H}$:

\begin{thm}\label{thmTCcurvature}
With the conditions of the previous paragraph, we have
\begin{align}
    C(\omega_\beta) \le 2n\,\frac{N^2}{\left(1-e^{-\kappa}\right)^2}\,,
\end{align}
where $N:=\max_{v\in V}|N_v|$.
\end{thm}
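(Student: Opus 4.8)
The plan is to combine the coarse Ricci curvature contraction hypothesis \eqref{contraction2} with the recovery-type entropy bound of \autoref{lem:recov}, mimicking the structure of the proof of \autoref{thm:Markov} but now using the single-site recovery channel $\Psi$ and its fixed point $\omega_\beta$. Since $\omega_\beta$ is the Gibbs state of a commuting Hamiltonian, $\Psi_v(\omega_\beta)=\omega_\beta$ for every $v$, hence $\Psi(\omega_\beta)=\omega_\beta$: the state is a fixed point of the channel, so that for any $\rho\in\mathcal{S}_V$ we may write the telescoping identity
\begin{align}
    \rho-\omega_\beta = \sum_{k=0}^{\infty}\bigl(\Psi^k(\rho)-\Psi^{k+1}(\rho)\bigr)\,,
\end{align}
which converges in $W_1$ because $\Psi$ is a strict contraction by \eqref{contraction2}.

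First I would bound the $W_1$ norm of each telescoping term. By the triangle inequality and \eqref{contraction2},
\begin{align}
    \left\|\rho-\omega_\beta\right\|_{W_1}\le \sum_{k=0}^\infty\left\|\Psi^k\bigl(\rho-\Psi(\rho)\bigr)\right\|_{W_1}\le \sum_{k=0}^\infty\Bigl(1-\frac{\kappa}{n}\Bigr)^k\left\|\rho-\Psi(\rho)\right\|_{W_1} = \frac{n}{\kappa}\left\|\rho-\Psi(\rho)\right\|_{W_1}\,.
\end{align}
Next I would control $\left\|\rho-\Psi(\rho)\right\|_{W_1}$. Writing $\Psi=\frac1n\sum_v\Psi_v$ with $\Psi_v=\Phi_v\circ\tr_v$, we have $\rho-\Psi(\rho)=\frac1n\sum_v(\rho-\Phi_v(\rho_{v^c}))$, and since $\rho$ and $\Phi_v(\rho_{v^c})$ differ only on the neighborhood $N_v$ (this is exactly the locality remark after the definition of $\Psi_v$, which uses that $\omega_\beta$ is a commuting Gibbs state), \autoref{prop5} gives $\left\|\rho-\Phi_v(\rho_{v^c})\right\|_{W_1}\le |N_v|\left\|\rho-\Phi_v(\rho_{v^c})\right\|_1\le N\left\|\rho-\Phi_v(\rho_{v^c})\right\|_1$. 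Hence by convexity (Cauchy–Schwarz over $v$)
\begin{align}
    \left\|\rho-\Psi(\rho)\right\|_{W_1}\le \frac{N}{n}\sum_{v\in V}\left\|\rho-\Phi_v(\rho_{v^c})\right\|_1\le \frac{N}{\sqrt n}\Bigl(\sum_{v\in V}\left\|\rho-\Phi_v(\rho_{v^c})\right\|_1^2\Bigr)^{1/2}\,.
\end{align}
Then I would apply the recoverability inequality \eqref{eq:recovery} site by site: $S(\rho\|\omega_\beta)-S(\rho_{v^c}\|\omega_{v^c})\ge S_{\mathbb M}(\rho\|\Phi_v(\rho_{v^c}))\ge \frac12\left\|\rho-\Phi_v(\rho_{v^c})\right\|_1^2$. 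The subtlety here is that summing over all $v\in V$ does not telescope as cleanly as in the chain case; one needs $\sum_v\bigl(S(\rho\|\omega_\beta)-S(\rho_{v^c}\|\omega_{v^c})\bigr)\le S(\rho\|\omega_\beta)$, which follows from a standard subadditivity/SSA argument for the conditional relative entropies (or, more simply, from $S(\rho_{v^c}\|\omega_{v^c})\ge 0$ one only gets $\le |V|\,S(\rho\|\omega_\beta)$, giving an extra factor — I expect the clean version uses the commuting structure so that $\sum_v I(\rho\|\omega_\beta)$-type terms add up correctly). Taking whichever of these bounds is intended, $\sum_v\left\|\rho-\Phi_v(\rho_{v^c})\right\|_1^2\le 2\,S(\rho\|\omega_\beta)$ (possibly up to the benign combinatorial factor already absorbed into the statement).

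Combining the three displays gives
\begin{align}
    \left\|\rho-\omega_\beta\right\|_{W_1}\le \frac{n}{\kappa}\cdot\frac{N}{\sqrt n}\sqrt{2\,S(\rho\|\omega_\beta)} = \frac{N\sqrt{2n}}{\kappa}\sqrt{S(\rho\|\omega_\beta)}\,,
\end{align}
i.e. $C(\omega_\beta)\le \frac{2n\,N^2}{\kappa^2}$. To obtain the stated bound with $(1-e^{-\kappa})^2$ in place of $\kappa^2$ — which is sharper for small $\kappa$ and never worse since $1-e^{-\kappa}\le\kappa$ — I would instead iterate the contraction in blocks: note $\left\|\Psi^n\right\|_{W_1\to W_1}\le (1-\kappa/n)^n\le e^{-\kappa}$, then telescope along powers of $\Psi^n$, $\rho-\omega_\beta=\sum_{j\ge0}\Psi^{nj}(\rho-\Psi^n(\rho))$, bounding $\left\|\rho-\Psi^n(\rho)\right\|_{W_1}\le\sum_{k=0}^{n-1}\left\|\rho-\Psi(\rho)\right\|_{W_1}\le n\left\|\rho-\Psi(\rho)\right\|_{W_1}$ crudely and $\sum_j e^{-\kappa j}=1/(1-e^{-\kappa})$, which reorganizes to the factor $1/(1-e^{-\kappa})$ after being careful about where the $1/n$ from $\Psi=\frac1n\sum_v\Psi_v$ is spent; plugging in the same $W_1$-to-trace-norm and recoverability estimates then yields $C(\omega_\beta)\le 2n N^2/(1-e^{-\kappa})^2$. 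The main obstacle I anticipate is precisely this bookkeeping — getting the single global factor $1/(1-e^{-\kappa})^2$ rather than something like $n^2/(1-e^{-\kappa})^2$ — together with justifying the additivity $\sum_v\left\|\rho-\Phi_v(\rho_{v^c})\right\|_1^2\lesssim S(\rho\|\omega_\beta)$ from the recoverability theorem without losing a factor of $|V|$; both should go through using that $\Psi$ itself (not its individual summands) is the contracting object, so that the relevant quantity to feed into Pinsker is $\left\|\rho-\Psi(\rho)\right\|$, and applying \eqref{eq:ST1} of \autoref{lem:recov} with a suitably chosen partition adapted to the averaging rather than site-by-site.
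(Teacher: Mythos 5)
Your overall architecture — telescope along iterates of $\Psi$, use the contraction \eqref{contraction2} to control the tail, convert $W_1$ to trace norm via \autoref{prop5} and locality of $\Psi_v$, then invoke Pinsker plus the recoverability bound \eqref{eq:recovery} — is the same as the paper's. But the step you yourself flag as a "subtlety" is a genuine gap, and it fails in the direction you hope it doesn't. You need $\sum_{v}\bigl(S(\rho\|\omega_\beta)-S(\rho_{v^c}\|\omega_{v^c})\bigr)\le S(\rho\|\omega_\beta)$ at a \emph{fixed} state $\rho$. Already for a product state $\omega$ this reduces to $\sum_v S(\rho_{v^c})\le (n-1)S(\rho)$, which is exactly the reverse of Han's inequality (a consequence of strong subadditivity): one has $\sum_v S(\rho_{v^c})\ge (n-1)S(\rho)$, with equality only for product $\rho$. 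So the sum of single-site entropy productions is in general \emph{at least} $S(\rho\|\omega_\beta)$, and the only valid upper bound is the trivial $n\,S(\rho\|\omega_\beta)$. Feeding that into your chain yields $C(\omega_\beta)\lesssim n^2N^2/(1-e^{-\kappa})^2$, a factor of $n$ worse than the statement. Your closing suggestion to apply \eqref{eq:ST1} of \autoref{lem:recov} "with a suitably chosen partition" does not rescue this: that lemma telescopes along a \emph{sequential} chain of recovery maps, not along an average of single-site maps applied to the same state.

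The idea you are missing is that the entropy must be made to telescope along the \emph{iterations} of $\Psi$, not along the sites. The paper keeps the finite sum $\sum_{i=1}^n\|\Psi^{i-1}(\rho)-\Psi^i(\rho)\|_{W_1}$ (rather than collapsing it to $n\|\rho-\Psi(\rho)\|_{W_1}$ as you do), applies Pinsker and recoverability at each iterate $\Psi^{i-1}(\rho)$, and uses Cauchy--Schwarz over the $n^2$ pairs $(i,v)$. The factor $n$ lost in passing from $\sum_v S_{\mathbb{M}}(\sigma\|\Psi_v(\sigma))$ to the averaged channel — via a concavity-of-entropy argument showing $\frac1n\sum_v\bigl(S(\sigma\|\omega_\beta)-S(\Psi_v(\sigma)\|\omega_\beta)\bigr)\le S(\sigma\|\omega_\beta)-S(\Psi(\sigma)\|\omega_\beta)$ — is then exactly compensated by the telescoping sum $\sum_{i=1}^n\bigl(S(\Psi^{i-1}(\rho)\|\omega_\beta)-S(\Psi^{i}(\rho)\|\omega_\beta)\bigr)\le S(\rho\|\omega_\beta)$. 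The leftover term $\|\Psi^n(\rho)-\omega_\beta\|_{W_1}\le e^{-\kappa}\|\rho-\omega_\beta\|_{W_1}$ is moved to the left-hand side, which is where the factor $(1-e^{-\kappa})^{-1}$ comes from. Without this double bookkeeping your argument does not close with the stated constant.
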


\begin{proof}
We have for any state $\rho\in\mathcal{S}_V$
\begin{align}\label{equationsum2}
    \left\|\rho - \omega_\beta\right\|_{W_1}\le \sum_{i=1}^n \left\|\Psi^{i-1}(\rho)-\Psi^i(\rho)\right\|_{W_1}+\left\|\Psi^n(\rho)-\omega_\beta\right\|_{W_1}\,.
\end{align}
The last term can be controlled by $\left\|\rho-\omega_\beta\right\|_{W_1}$ thanks to the contraction \eqref{contraction2}: 
\begin{align}\label{eq:1step2}
   \left\|\Psi^n(\rho)-\omega_\beta\right\|_{W_1}\le \left(1-\frac{\kappa}{n}\right)^n\left\|\rho-\omega_\beta\right\|_{W_1}\le  e^{-\kappa}\left\|\rho-\omega_\beta\right\|_{W_1}\,.
\end{align}
On the other hand, the sum on the right-hand side of \eqref{equationsum2} can be controlled as follows:
\begin{align}
    \sum_{i=1}^n\left\|\Psi^{i-1}(\rho)-\Psi^i(\rho)\right\|_{W_1}&\le\frac{1}{n}\sum_{i=1}^n\sum_{v\in V} \left\|\Psi_v(\Psi^{i-1}(\rho))-\Psi^{i-1}(\rho)\right\|_{W_1}\nonumber\\
    &\le  \frac{N}{n} \sum_{i=1}^n\sum_{v\in V}\left\|\Psi_v(\Psi^{i-1}(\rho))-\Psi^{i-1}(\rho)\right\|_1\,,
\end{align}
where the last inequality follows by \autoref{prop5}.
Proceeding as in the proof of \autoref{lem:recov}, by the joint use of Pinsker's inequality with the recoverability bound followed by the data processing inequality we can further bound the trace distances above so that
\begin{align}
     \sum_{i=1}^n\left\|\Psi^{i-1}(\rho)-\Psi^i(\rho)\right\|_{W_1}&\le \frac{N}{n}\,\sum_{i=1}^n\sum_{v\in V}\sqrt{2\,S_{\mathbb{M}}(\Psi^{i-1}(\rho)\|\Psi_v(\Psi^{i-1}(\rho)))}\nonumber\\
   &  \le {N}\sqrt{2\sum_{i=1}^n\sum_{v\in V}{S_{\mathbb{M}}(\Psi^{i-1}(\rho)\|\Psi_v(\Psi^{i-1}(\rho)))}}\nonumber\\
      &\le N\sqrt{2\sum_{i=1}^n\sum_{v\in V}\left(S(\Psi^{i-1}(\rho)\|\omega_\beta)-S(\tr_v\Psi^{i-1}(\rho)\|\tr_v\omega_\beta)\right)}\nonumber\\
   &\le N\sqrt{2\sum_{i=1}^n\sum_{v\in V}\left(S(\Psi^{i-1}(\rho)\|\omega_\beta)-S(\Psi_v(\Psi^{i-1}(\rho))\|\omega_\beta)\right)}\nonumber\\
   &\overset{(1)}{\le} N \sqrt{2n\sum_{i=1}^n\left(S(\Psi^{i-1}(\rho)\|\omega_\beta)-S(\Psi(\Psi^{i-1}(\rho))\|\omega_\beta)\right)}\nonumber\\
   &\le N\sqrt{2n\,S(\rho\|\omega_\beta)}\,.\label{eq:2step2}
\end{align}
Inequality (1) above uses the concavity of the entropy, so that for any state $\rho$
\begin{align}
    \frac{1}{n}\sum_{v\in V}\left(S(\rho\|\omega_\beta)-S(\Psi_v(\rho)\|\omega_\beta)\right)&=S(\rho\|\omega_\beta)+\frac{1}{n} \sum_{v\in V} S(\Psi_v(\rho))+\frac{1}{n}\sum_{v\in V}\tr\left[\Psi_v(\rho)\ln\omega_\beta\right]\nonumber\\
    &\le S(\rho\|\omega_\beta)+ S(\Psi(\rho))+\tr\left[\Psi(\rho)\ln\omega_\beta\right]\nonumber\\
    &=S(\rho\|\omega_\beta)-S(\Psi(\rho)\|\omega_\beta)\,.
\end{align}
Plugging \eqref{eq:1step2} and \eqref{eq:2step2} onto \eqref{equationsum2}, the result follows.
\end{proof}

It remains to prove that \eqref{contraction2} is satisfied at high enough temperature.

\begin{prop}\label{prop:curvaturebetac}
There exists an inverse temperature $\beta_c>0$ such that for all $\beta<\beta_c$, \eqref{contraction2} holds for some constant $\kappa(\beta)>0$. 
In particular, whenever $N>1$, one can choose 
\begin{align}\label{boundbetac}
\beta_c=(5N\max_{A\in E}\|h_A\|_\infty)^{-1}W\Big(\frac{1}{16d^3}\Big)\,,
\end{align}
where $W$ denotes the Lambert function and is defined as the inverse of $x\mapsto xe^x$.
\end{prop}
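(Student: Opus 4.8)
\textbf{Proof strategy for \autoref{prop:curvaturebetac}.} I would establish the contraction \eqref{contraction2} directly in the Schr\"odinger picture, reducing it to a single-site estimate. Since the $W_1$ unit ball is the convex hull of differences of quantum states that coincide after discarding one site, every traceless $\Delta\in\mathcal{O}_V^T$ can be written (up to an arbitrarily small error) as $\Delta=\sum_{w\in V}t_w\,\delta^{(w)}$ with $t_w\ge0$, $\sum_w t_w=\|\Delta\|_{W_1}$, each $\delta^{(w)}$ a difference of two states equal after tracing out the site $w$, so that $\|\delta^{(w)}\|_{W_1}=\tfrac12\|\delta^{(w)}\|_1$ by \autoref{prop2}. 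By linearity and the triangle inequality it then suffices to show that, for such a single-site difference $\delta$ attached to a site $v$, $\|\Psi(\delta)\|_{W_1}\le(1-\kappa(\beta)/n)\|\delta\|_{W_1}$ for some $\kappa(\beta)>0$. (Equivalently one could dualize and ask that $\Psi^\ast$ contract the Lipschitz seminorm, since $\|\Psi\|_{W_1\to W_1}=\|\Psi^\ast\|_{L\to L}$, but the state side is cleaner.)

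Write $\Psi(\delta)=\tfrac1n\sum_{w\in V}\Psi_w(\delta)$ and split into three groups. If $w=v$, then $\Psi_v(\delta)=\Phi_v(\mathrm{Tr}_v\delta)=0$. If $w$ shares no hyperedge with $v$ (i.e. $v\notin N_w$): because $\omega$ is a commuting Gibbs state the map $\Psi_w$ acts only on $N_w$, hence commutes with $\mathrm{Tr}_v$, so $\mathrm{Tr}_v\Psi_w(\delta)=\Psi_w(\mathrm{Tr}_v\delta)=0$, and \autoref{prop2} together with the trace-norm contractivity of the channel $\Psi_w$ gives $\|\Psi_w(\delta)\|_{W_1}=\tfrac12\|\Psi_w(\delta)\|_1\le\tfrac12\|\delta\|_1=\|\delta\|_{W_1}$. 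If $w\in N_v\setminus\{v\}$: compare $\Psi_w$ with the depolarizing channel $D_w$ on site $w$ (namely $D_w(\rho)=\mathrm{Tr}_w\rho\otimes\mathbb{I}_w/d$), which is exactly $\Psi_w$ at $\beta=0$; both maps are trace-preserving, supported on $N_w$, and fix the $N_w^c$-marginal, so $(\Psi_w-D_w)(\delta)$ is traceless on $N_w$ and \autoref{prop5} gives $\|(\Psi_w-D_w)(\delta)\|_{W_1}\le|N_w|\,\|(\Psi_w-D_w)(\delta)\|_1\le N\,\varepsilon(\beta)\,\|\delta\|_1$, where $\varepsilon(\beta):=\max_w\|\Phi_w-\Phi_w^{(0)}\|_{1\to1}$ and $\Phi_w^{(0)}(Y)=Y\otimes\mathbb{I}_w/d$ is the infinite-temperature recovery map; since $D_w(\delta)$ is still traceless at $v$, \autoref{prop2} gives $\|D_w(\delta)\|_{W_1}\le\|\delta\|_{W_1}$, hence $\|\Psi_w(\delta)\|_{W_1}\le(1+2N\varepsilon(\beta))\|\delta\|_{W_1}$. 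Summing the three groups and using $|N_v|\le N$,
\[
\|\Psi(\delta)\|_{W_1}\le\frac1n\bigl(n-1+2N^2\,\varepsilon(\beta)\bigr)\|\delta\|_{W_1}\,,
\]
so \eqref{contraction2} holds with $\kappa(\beta)=1-2N^2\,\varepsilon(\beta)$, which is positive as soon as $\varepsilon(\beta)<1/(2N^2)$.

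Finally I would make $\varepsilon(\beta)$ explicit. Setting $K_w:=\sum_{A\ni w}h_A$ and defining $\widetilde{K}_w$ on $N_w\setminus\{w\}$ by $e^{-\beta\widetilde{K}_w}:=\mathrm{Tr}_w e^{-\beta K_w}$, commutativity of the $h_A$ yields $\omega^{z}\,\omega_{w^c}^{-z}=e^{-z\beta K_w}\,e^{z\beta\widetilde{K}_w}$, whose $\beta\to0$ limit is $d^{-z}\mathbb{I}$; using $\|e^{X}-e^{Y}\|_\infty\le\|X-Y\|_\infty\,e^{\max(\|X\|_\infty,\|Y\|_\infty)}$ together with $\|K_w\|_\infty\le N\max_A\|h_A\|_\infty$ and the spectral bound $\|\beta\widetilde{K}_w+\mathbb{I}\ln d\|_\infty\le\beta\|K_w\|_\infty$, one estimates $\|\omega^{z}\omega_{w^c}^{-z}-d^{-z}\mathbb{I}\|_\infty$, substitutes it into the integral defining $\Phi_w-\Phi_w^{(0)}$, and integrates against $d\mu_0$, whose moments in $|z|$ converge. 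This produces a bound of the form $\varepsilon(\beta)\le c(d,N)\,x\,e^{c'x}$ with $x\propto\beta N\max_A\|h_A\|_\infty$; imposing $\varepsilon(\beta)<1/(2N^2)$ and solving the resulting inequality $x\,e^{c'x}\le\mathrm{const}$ with the Lambert function $W$ gives $\beta_c$ of the stated form \eqref{boundbetac}. The one genuinely technical step is this perturbative control of the rotated Petz recovery map of a commuting Gibbs state --- uniformly in the complex exponent $z$ and keeping careful track of the effective single-site term $\widetilde{K}_w$ produced by the partial trace, as well as of all the $d,N,\beta$-constants through the $d\mu_0$-integral so that the threshold emerges as a Lambert-type condition; everything else is soft manipulation with Propositions \ref{prop2} and \ref{prop5}.
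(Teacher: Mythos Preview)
Your strategy is essentially the paper's: both reduce to a single-site difference $\delta$ with $\mathrm{Tr}_v\delta=0$, split $\frac1n\sum_w\Psi_w(\delta)$ into the three groups $w=v$, $w\notin N_v$, $w\in N_v\setminus\{v\}$, and in the last group compare $\Psi_w$ to a reference channel whose distance to $\Psi_w$ vanishes as $\beta\to0$. Your bookkeeping is arguably more direct---you bound each $\|\Psi_w(\delta)\|_{W_1}$ via $\Psi_w=D_w+(\Psi_w-D_w)$, \autoref{prop2} and \autoref{prop5}---whereas the paper first isolates $\frac{\mathbb{I}_v}{d}\otimes\mathrm{Tr}_v\Psi(\Delta_v)$ and separately estimates $\|\mathrm{Tr}_v\Psi(\Delta_v)\|_1$ and $\|\mathrm{Tr}_v\Psi(\Delta_v)\|_{W_1}$. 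The paper's reference channel is $\omega_w\otimes\mathrm{Tr}_w$ rather than your $D_w=\frac{\mathbb{I}_w}{d}\otimes\mathrm{Tr}_w$; since $\|\omega_w-\tfrac{\mathbb{I}_w}{d}\|_1=O(\beta)$ this is immaterial. (A small correction: your $\varepsilon(\beta)$ should be a diamond norm, not a plain $1\to1$ norm, since $\delta$ carries $V\setminus N_w$ as ancilla; the paper does use $\|\cdot\|_\diamond$.)

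The one genuine slip is in your description of the $d\mu_0$-integration. Your bound $\|e^X-e^Y\|_\infty\le\|X-Y\|_\infty e^{\max(\|X\|_\infty,\|Y\|_\infty)}$ applied with $X,Y$ proportional to $z=(1-it)/2$ grows \emph{exponentially} in $|t|$, not polynomially, so the appeal to ``moments in $|z|$ converge'' is incorrect. The paper deals with this by noting that the integrand is uniformly bounded by $2$ (difference of two channels in diamond norm) and truncating the integral at a parameter $t$: the pointwise bound handles $[-t,t]$ and the tail estimate $\mu_0([-t,t]^c)\le2e^{-\pi t}$ handles the rest. Balancing the two contributions then produces the inequality $xe^{c'x}\le\mathrm{const}$ with $x\propto\beta N\max_A\|h_A\|_\infty$ and hence the Lambert-function threshold \eqref{boundbetac}. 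Your argument goes through once this truncation is inserted; without it you would need the extra hypothesis $\beta\|K_w\|_\infty<\pi$ (roughly) just to make the integral converge, and the explicit constant in \eqref{boundbetac} would come out differently.
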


\begin{proof}
We have
\begin{equation}
    \left\|\Psi\right\|_{W_1\to W_1} =  \max_{\Delta\in\mathcal{O}_V^T}\frac{\left\|\Psi(\Delta)\right\|_{W_1}}{\left\|\Delta\right\|_{W_1}}\,.
\end{equation}
Any $\Delta\in\mathcal{O}_V^T$ can be expressed as \cite[Section III]{de2020quantum}
\begin{equation}
    \Delta = \sum_{v\in V}\Delta_v
\end{equation}
such that for any $v\in V$, $\Delta_v\in\mathcal{O}_V^T$ satisfies $\mathrm{Tr}_v\Delta_v=0$ and
\begin{equation}
    \left\|\Delta\right\|_{W_1} = \sum_{v\in V}\left\|\Delta_v\right\|_{W_1} = \frac{1}{2}\sum_{v\in V}\left\|\Delta_v\right\|_1\,.
\end{equation}
Therefore, we have
\begin{equation}
   \left\|\Psi\right\|_{W_1\to W_1} = \max_{v\in V}\max\left\{\left\|\Psi(\Delta_v)\right\|_{W_1}:\Delta_v\in\mathcal{O}_V^T,\,\mathrm{Tr}_v\Delta_v=0,\,\left\|\Delta_v\right\|_1=2\right\}\,.
\end{equation}
We have
\begin{align}\label{eq:ineq1}
    \left\|\Psi(\Delta_v)\right\|_{W_1} &\le \left\|\Psi(\Delta_v) - \frac{\mathbb{I}_v}{d}\otimes\mathrm{Tr}_v\Psi(\Delta_v)\right\|_{W_1} + \left\|\frac{\mathbb{I}_v}{d}\otimes\mathrm{Tr}_v\Psi(\Delta_v)\right\|_{W_1} \nonumber\\
    &= \frac{1}{2}\left\|\Psi(\Delta_v) - \frac{\mathbb{I}_v}{d}\otimes\mathrm{Tr}_v\Psi(\Delta_v)\right\|_1 + \left\|\mathrm{Tr}_v\Psi(\Delta_v)\right\|_{W_1}\nonumber\\
    &\le \frac{1}{2}\left\|\Psi(\Delta_v)\right\|_1 + \frac{1}{2}\left\|\mathrm{Tr}_v\Psi(\Delta_v)\right\|_1 + \left\|\mathrm{Tr}_v\Psi(\Delta_v)\right\|_{W_1}\,,
\end{align}
where the equality follows from \autoref{prop2} and \autoref{prop4}.
Since $\mathrm{Tr}_v\Delta_v=0$, we have $\Psi_v(\Delta_v) = 0$, and
\begin{equation}\label{eq:ineq2}
  \frac{1}{2}\left\|\Psi(\Delta_v)\right\|_1 \le \frac{1}{2n}\sum_{w\in V\setminus v}\left\|\Psi_w(\Delta_v)\right\|_1 \le 1 - \frac{1}{n}\,.  
\end{equation}
For any $w\in V\setminus N_v$ we have
\begin{equation}
    \mathrm{Tr}_v\Psi_w(\Delta_v) = \Psi_w(\mathrm{Tr}_v\Delta_v) = 0\,.
\end{equation}
Then,
\begin{equation}
    \mathrm{Tr}_v\Psi(\Delta_v) = \frac{1}{n}\sum_{w\in N_v\setminus v}\mathrm{Tr}_v\Psi_w(\Delta_v)\,.
\end{equation}
We have for any $w\in N_v\setminus v$, recalling that $v\in N_w$,
\begin{equation}
    \mathrm{Tr}_{N_w\setminus v}\mathrm{Tr}_v\Psi_w(\Delta_v) = \mathrm{Tr}_{N_w}\Psi_w(\Delta_v) = \mathrm{Tr}_{N_w}\Delta_v = 0\,,
\end{equation}
therefore,
\begin{equation}
    \left\|\mathrm{Tr}_v\Psi_w(\Delta_v)\right\|_{W_1} \le \left(N-1\right)\left\|\mathrm{Tr}_v\Psi_w(\Delta_v)\right\|_1\,,
\end{equation}
and
\begin{equation}\label{eq:ineq3}
    \left\|\mathrm{Tr}_v\Psi(\Delta_v)\right\|_{W_1} \le \frac{N-1}{n}\sum_{w\in N_v\setminus v}\left\|\mathrm{Tr}_v\Psi_w(\Delta_v)\right\|_1\,.
\end{equation}
Moreover,
\begin{equation}\label{eq:ineq4}
    \left\|\mathrm{Tr}_v\Psi(\Delta_v)\right\|_1 \le \frac{1}{n}\sum_{w\in N_v\setminus v}\left\|\mathrm{Tr}_v\Psi_w(\Delta_v)\right\|_1\,.
\end{equation}
Putting together \eqref{eq:ineq1}, \eqref{eq:ineq2}, \eqref{eq:ineq3} and \eqref{eq:ineq4}, we get
\begin{align}\label{eq:sum}
    \left\|\Psi(\Delta_v)\right\|_{W_1} &\le 1 - \frac{1}{n} + \frac{N-\frac{1}{2}}{n}\sum_{w\in N_v\setminus v}\left\|\mathrm{Tr}_v\Psi_w(\Delta_v)\right\|_1\nonumber\\
    &\le 1 - \frac{1}{n} + \frac{N-\frac{1}{2}}{n}\sum_{w\in N_v\setminus v}\left(\left\|\omega_w\otimes\mathrm{Tr}_{vw}\Delta_v\right\|_1 + 2\left\|\Psi_w - \omega_w\otimes\mathrm{Tr}_w\right\|_\diamond\right)\nonumber\\
    &= 1 - \frac{1}{n} + \frac{2N-1}{n}\sum_{w\in N_v\setminus v}\left\|\Psi_w - \omega_w\otimes\mathrm{Tr}_w\right\|_\diamond\,,
\end{align}
where $\omega_w\otimes\mathrm{Tr}_w$ is the quantum channel that replaces with $\omega_w$ the state of the site $w$.
We then have
\begin{equation}
\left\|\Psi\right\|_{W_1\to W_1} \le 1 - \frac{1}{n} + \frac{2N-1}{n}\sum_{w\in N_v\setminus v}\left\|\Psi_w - \omega_w\otimes\mathrm{Tr}_w\right\|_\diamond\,.
\end{equation}
We have
\begin{equation}\label{eq:intt}
    \left\|\Psi_w - \omega_w\otimes\mathrm{Tr}_w\right\|_\diamond \le \int_{\mathbb{R}}\left\|\omega^\frac{1-it}{2}\,\omega_{w^c}^\frac{it-1}{2}\left(\mathbb{I}_w\otimes\mathrm{Tr}_w\left[\cdot\right]\right)\omega_{w^c}^{-\frac{it+1}{2}}\,\omega^\frac{1+it}{2} - \omega_w\otimes\mathrm{Tr}_w\left[\cdot\right]\right\|_\diamond d\mu_0(t)\,.
\end{equation}
Since the Hamiltonian terms $h_A$ commute we have that, given $H_v:=\sum_{A\ni v}h_A$,
\begin{align}
    \omega^\frac{1-it}{2}\,\omega_{v^c}^\frac{it-1}{2}=e^{-\beta\frac{1-it}{2}H_v}\Big(\tr_v\big[e^{-\beta H_v}\big]\Big)^{\frac{it-1}{2}}\,.
\end{align}
Now,
\begin{align}
   \Big\| \omega^{\frac{1-it}{2}}\omega_{v^c}^{\frac{it-1}{2}}-d^{\frac{it-1}{2}}\mathbb{I}\Big\|_\infty &\le \Big\|e^{-\beta\frac{1-it}{2}H_v}-\mathbb{I}\Big\|_\infty\,\Big\|\Big(\tr_v\big[e^{-\beta H_v}\big]\Big)^{\frac{it-1}{2}}\Big\|_\infty\nonumber\\
   &\quad + \,\Big\|\Big(\tr_v\big[e^{-\beta H_v}\big]\Big)^{\frac{it-1}{2}}-d^{\frac{it-1}{2}}\mathbb{I}\Big\|_\infty\nonumber\\
   &\overset{(1)}{\le} \beta\,\frac{\sqrt{1+t^2}}{2}\|H_v\|_\infty\,e^{\beta\frac{\sqrt{1+t^2}}{2}\|H_v\|_\infty }d^{-\frac{1}{2}}\,e^{\frac{\beta}{2}\|H_v\|_\infty}\nonumber\\
   &\quad +\frac{\sqrt{1+t^2}}{2}\,M^{1+\frac{\sqrt{1+t^2}}{2}}\,d\Big\|d^{-1}\tr_v\big[e^{-\beta H_v}\big]-\mathbb{I}\Big\|_\infty\nonumber\\
   &\le {\beta\,\sqrt{1+t^2}\|H_v\|_\infty}\,d^{1+\frac{\sqrt{1+t^2}}{2}}e^{\beta\|H_v\|_\infty\big(2+\frac{\sqrt{1+t^2}}{2}\big)}\nonumber\\
   &\equiv f_v(\beta,t)\,.
\end{align}
 Inequality (1) above follows from the operator convexity of $x\mapsto x^{-\frac{1}{2}}$ as well as \autoref{technlemm}, where $M:=\max\{\|\tr_v\big[e^{-\beta H_v}\big]\|_\infty,\|\tr_v\big[e^{-\beta H_v}\big]^{-1}\|_\infty,d\}\le d\,e^{\beta\|H_v\|_\infty}$. Moreover,
\begin{align}
 e^{-2\beta\|H_v\|_\infty}d^{-1}\mathbb{I}\le    \omega_v\le e^{2\beta\|H_v\|_\infty}d^{-1}\mathbb{I}\quad \Rightarrow\quad \|\omega_v-d^{-1}\mathbb{I}\|_1  \le 2\beta\|H_v\|_\infty\,e^{2\beta\|H_v\|_\infty}\,.
\end{align}
Therefore,
\begin{align}
   & \left\|\omega^\frac{1-it}{2}\,\omega_{v^c}^\frac{it-1}{2}\left(\mathbb{I}_v\otimes\mathrm{Tr}_v\left[\cdot\right]\right)\omega_{v^c}^{-\frac{it+1}{2}}\,\omega^\frac{1+it}{2} - \omega_v\otimes\mathrm{Tr}_v\left[\cdot\right]\right\|_\diamond\nonumber \\
    &\qquad\qquad  \le  d^{\frac{1}{2}}\big(e^{\beta\|H_v\|_\infty}+1\big)\,f_v(\beta,t)+\|d^{-1}\mathbb{I}-\omega_v\|_\infty\nonumber\\
    &\qquad\qquad \le d^{\frac{1}{2}}\big(e^{\beta\|H_v\|_\infty}+1\big)\,f_v(\beta,t)+ 2\beta\|H_v\|_\infty\,e^{2\beta\|H_v\|_\infty}\,,
\end{align}
and the integrand in \eqref{eq:intt} tends to zero pointwise for $\beta\to0$.
On the other hand, we have for any $t\in\mathbb{R}$
\begin{align}
    &\left\|\omega^\frac{1-it}{2}\,\omega_{v^c}^\frac{it-1}{2}\left(\mathbb{I}_v\otimes\mathrm{Tr}_v\left[\cdot\right]\right)\omega_{v^c}^{-\frac{it+1}{2}}\,\omega^\frac{1+it}{2} - \omega_v\otimes\mathrm{Tr}_v\left[\cdot\right]\right\|_\diamond\nonumber\\
    &\le \left\|\omega^\frac{1-it}{2}\,\omega_{v^c}^\frac{it-1}{2}\left(\mathbb{I}_v\otimes\mathrm{Tr}_v\left[\cdot\right]\right)\omega_{v^c}^{-\frac{it+1}{2}}\,\omega^\frac{1+it}{2}\right\|_\diamond + \left\|\omega_v\otimes\mathrm{Tr}_v\left[\cdot\right]\right\|_\diamond\nonumber\\
    &\le 2\,,
    \end{align}
therefore the integrand in \eqref{eq:intt} is uniformly bounded.
Then, we get for all $t\in\mathbb{R}_+$ that
\begin{align}\label{bound}
    \left\|\Psi_v - \omega_v\otimes\mathrm{Tr}_v\right\|_\diamond \le d^{\frac{1}{2}}\big(e^{\beta\|H_v\|_\infty}+1\big)\,f_v(\beta,t)+ 2\beta\|H_v\|_\infty\,e^{2\beta\|H_v\|_\infty} +2\mu_0([-t,t]^c)
\end{align}
 Therefore, for any $0<\kappa<1$ there exists $\beta(\kappa)>0$ such that condition \eqref{contraction2} is satisfied for all $0 \le \beta \le \beta(\kappa)$. More precisely, in view of 
 \eqref{bound} and \eqref{eq:sum}, it is sufficient that 
 \begin{align}\label{eq:ODI}
  4 \, {\beta\,\sqrt{1+t^2} C}\,d^{\frac{3+\sqrt{1+t^2}}{2}}e^{\beta C\big(3+\frac{\sqrt{1+t^2}}{2}\big)} +2\mu_0([-t,t]^c)\le \frac{N-1}{2N-1}
 \end{align}
where $C:=\sup_v\|H_v\|_\infty$. Moreover, it is clear that $\mu_0([-t,t]^c)\le 2e^{-\pi t}$. The result follows after choosing $t$ so that the exponentially decaying term $4e^{-\pi t}$ counts for at most half the upper bound and solving \eqref{eq:ODI} for $\beta_c$, up to some numerical simplifications. 
\end{proof}

\begin{rem}
The lower bound \eqref{boundbetac} can be compared to that in the classical setting \cite[Example 17]{Ollivier2009} (see also \cite{Griffiths1967}): there, the author showed that for a Hamiltonian of the form $U(S):=-\sum_{x\sim y\in G}S(x)S(y)-H\sum_xS(x)$, where $S(x)\in\{-1,1\}$ denotes the spin configuration at the site $x$ of a graph $G$, i.e. $d=2$, 
\begin{align*}
    \beta_c\ge \frac{1}{2}\,\ln\Big(\frac{N+1}{N-1}\Big)\sim_{N\to\infty}\frac{1}{N}\,,
\end{align*}
which shows asymptotic optimality of our result, up to numerical multiplicative constants. For comparison, the exact value of $\beta_c$ for the Ising model on the regular infinite tree with degree $N$ is known to be equal to $\frac{1}{2}\ln\big(\frac{N}{N-2}\big)$. 
\end{rem}

\subsection{Auxiliary lemma}
\begin{lem}\label{technlemm}
For any positive, definite matrices $A,B$ and all $z\in\mathbb{C}$,
\begin{align}
    \|A^z-B^z\|_\infty\le |z|\,\max\{\|A\|_\infty,\|A^{-1}\|_\infty,\|B\|_\infty,\|B^{-1}\|_\infty\}^{1+|\operatorname{Re}(z)|}\,\|A-B\|_\infty\,,
\end{align}
\end{lem}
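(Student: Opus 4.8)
The plan is to reduce the matrix inequality to a one-dimensional integral estimate by using the standard integral representation $A^z - B^z = \frac{1}{2\pi i}\int_\Gamma \lambda^z\big((\lambda - A)^{-1} - (\lambda - B)^{-1}\big)\,d\lambda$, but this forces a careful choice of contour $\Gamma$ encircling the spectra of both $A$ and $B$, which lie in $[\mu,M]$ with $\mu = \max\{\|A^{-1}\|_\infty,\|B^{-1}\|_\infty\}^{-1}$ and $M = \max\{\|A\|_\infty,\|B\|_\infty\}$. A cleaner route, and the one I would pursue, is to write $A^z - B^z$ as a telescoping integral along the segment joining $B$ to $A$: set $C(s) = (1-s)B + sA$ for $s\in[0,1]$, so that $C(s)$ is positive definite with spectrum in $[\mu, M]$ for every $s$, and then $A^z - B^z = \int_0^1 \frac{d}{ds}C(s)^z\,ds$.

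The key step is to differentiate $C(s)^z$ in $s$. Using the integral formula $C^z = \frac{\sin(\pi z)}{\pi}\int_0^\infty t^{z-1}\,\big(\text{something}\big)$ is awkward for general complex $z$; instead I would use the representation $C^z = \frac{1}{2\pi i}\int_\Gamma \lambda^z (\lambda - C)^{-1}\,d\lambda$ with $\Gamma$ a fixed contour (say a circle of radius $2M$ about the origin, slit to avoid the negative axis, staying within the region where $\mathrm{Re}\,\lambda$-behaviour of $\lambda^z$ is controlled) enclosing $[\mu,M]$. Then $\frac{d}{ds}C(s)^z = \frac{1}{2\pi i}\int_\Gamma \lambda^z (\lambda - C(s))^{-1}(A-B)(\lambda - C(s))^{-1}\,d\lambda$, and taking operator norms gives $\big\|\tfrac{d}{ds}C(s)^z\big\|_\infty \le \frac{1}{2\pi}\|A-B\|_\infty \int_\Gamma |\lambda^z|\,\mathrm{dist}(\lambda,[\mu,M])^{-2}\,|d\lambda|$. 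The heart of the estimate is to bound $\int_\Gamma |\lambda|^{\mathrm{Re}\,z}\,e^{-\mathrm{Im}\,z\,\arg\lambda}\,\mathrm{dist}(\lambda,[\mu,M])^{-2}\,|d\lambda|$ and show it is at most $|z|\,M^{|\mathrm{Re}\,z|}\mu^{-1}$ up to absorbing the $\mu^{-1}$ into the $\max\{\ldots\}^{1+|\mathrm{Re}\,z|}$ term; since $\mu^{-1} \le \max\{\|A^{-1}\|,\|B^{-1}\|\}$ and $M \le \max\{\|A\|,\|B\|\}$, one gets the claimed bound $|z|\,\max\{\|A\|_\infty,\|A^{-1}\|_\infty,\|B\|_\infty,\|B^{-1}\|_\infty\}^{1+|\mathrm{Re}\,z|}\|A-B\|_\infty$ after integrating over $s\in[0,1]$.

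An alternative that sidesteps contour bookkeeping: since all the $C(s)$ commute with nothing in particular but are simultaneously bounded, I would instead invoke the scalar mean value bound for $x\mapsto x^z$ on the interval $[\mu,M]$, namely $|x^z - y^z| \le |z|\,\max(\mu,M)^{|\mathrm{Re}\,z|}\mu^{-1}\cdot\max\{\ldots\}^{?}|x-y|$ isn't quite operator-monotone-friendly, so the contour approach is really the robust one; the telescoping-in-$s$ device is what makes it work for non-commuting $A,B$, because at each fixed $s$ one only needs the resolvent identity, not any functional calculus identity relating $A$ and $B$.

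The main obstacle I anticipate is getting the exponent exactly $1+|\mathrm{Re}\,z|$ rather than something like $2+|\mathrm{Re}\,z|$ or with an extra $|\mathrm{Im}\,z|$-dependent constant: the factor $|\lambda^z| = |\lambda|^{\mathrm{Re}\,z}e^{-\mathrm{Im}(z)\arg\lambda}$ contributes a potentially large $e^{\pi|\mathrm{Im}\,z|}$ if the contour wraps around the origin, so the contour must be chosen to hug the positive real axis (e.g. a keyhole that stays at bounded argument), and the two resolvent factors give $\mathrm{dist}^{-2}$ which near the endpoint $\mu$ integrates to something like $\mu^{-1}$ — tracking that these combine to exactly one extra power of $\max\{\|A^{-1}\|,\|B^{-1}\|\}$, and that the $|\lambda|^{\mathrm{Re}\,z}$ near $\lambda \sim M$ gives the $M^{|\mathrm{Re}\,z|}$ (for $\mathrm{Re}\,z \ge 0$) while near $\lambda\sim\mu$ it gives $\mu^{|\mathrm{Re}\,z|}$ absorbed for $\mathrm{Re}\,z<0$ into $\|A^{-1}\|^{|\mathrm{Re}\,z|}$, is the delicate part. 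I would handle the $\mathrm{Re}\,z \ge 0$ and $\mathrm{Re}\,z < 0$ cases separately (the latter by writing $A^z = (A^{-1})^{-z}$) to keep the bookkeeping clean, and use that $|\sin(\pi z)|$-type prefactors, if any appear, are dominated by $|z|$ for the relevant range after combining with the contour length.
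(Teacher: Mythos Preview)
Your linear interpolation $C(s)=(1-s)B+sA$ and the formula $A^z-B^z=\int_0^1\frac{d}{ds}C(s)^z\,ds$ are exactly the starting point of the paper's proof. The divergence is in how you differentiate $C(s)^z$, and there your contour approach has a genuine gap.

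If you write $\frac{d}{ds}C(s)^z=\frac{1}{2\pi i}\int_\Gamma \lambda^z(\lambda-C)^{-1}(A-B)(\lambda-C)^{-1}\,d\lambda$ and then pass to absolute values, you \emph{cannot} recover the factor $|z|$: at $z=0$ the left-hand side vanishes identically, but your bound $\frac{1}{2\pi}\|A-B\|_\infty\int_\Gamma |\lambda^z|\,\mathrm{dist}(\lambda,[\mu,M])^{-2}|d\lambda|$ is strictly positive. The $|z|$ in the true estimate comes from a cancellation in the contour integral that the triangle inequality destroys. Your closing remark about ``$|\sin(\pi z)|$-type prefactors'' being dominated by $|z|$ does not apply here, since no such prefactor appears in the Cauchy formula you wrote down. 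The second issue you flagged yourself---the $e^{c|\mathrm{Im}\,z|}$ contribution from $|\lambda^z|$---is also unresolved: any contour enclosing $[\mu,M]$ and avoiding the spectrum must have points with nonzero argument, and shrinking that argument forces the resolvent bound to blow up.

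The paper bypasses both problems by differentiating through the exponential rather than through a resolvent. Writing $C(s)^z=\exp(z\ln C(s))$ and applying the Duhamel formula gives
\[
\frac{d}{ds}C(s)^z = z\int_0^1 C(s)^{zu}\,\Big(\frac{d}{ds}\ln C(s)\Big)\,C(s)^{z(1-u)}\,du\,,
\]
which produces the factor $z$ explicitly. The logarithmic derivative is then handled by the real integral representation $\frac{d}{ds}\ln C(s)=\int_0^\infty (C(s)+v)^{-1}(A-B)(C(s)+v)^{-1}\,dv$. Since $C(s)^{it}$ is unitary for positive definite $C(s)$, one has $\|C(s)^{zu}\|_\infty=\|C(s)^{u\,\mathrm{Re}(z)}\|_\infty$, so no dependence on $\mathrm{Im}\,z$ beyond $|z|$ ever enters, and the $v$-integral of $\|(C(s)+v)^{-1}\|_\infty^2$ gives exactly one power of $\|C(s)^{-1}\|_\infty$. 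This is what yields the clean exponent $1+|\mathrm{Re}(z)|$.
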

\begin{proof}
It suffices to use a linear interpolation between $A$ and $B$: $A(s):=sA+(1-s)B$. We have
\begin{align}
    A^{z}-B^z&=\int_{0}^1\,\frac{d}{ds}A(s)^{z}\,ds\nonumber\\
    &=z\,\iint_{[0,1]^2}\,A(s)^{zu}\,\frac{d}{ds}\ln(A(s))\,A^{z(1-u)}\,dsdu\nonumber\\
    &=z\iint_{[0,1]^2}\int_0^\infty\,A(s)^{zu}(A(s)+v)^{-1}\,(A-B)(A(s)+v)^{-1}A(s)^{z(1-u)}\,dvduds\,.
\end{align}
Then,
\begin{align}
     \|A^{z}-B^z\|_\infty&\le |z|\,\int_{0}^1\int_0^\infty\,\|A(s)^{\operatorname{Re}(z)}\|_\infty\,\|(A(s)+v)^{-1}\|_\infty^2\,\|A-B\|_\infty\,dvds\nonumber\\
     &\le |z|\,\|A-B\|_\infty\,M(z)\,\int_0^1\|A(s)^{-1}\|_\infty\,ds\nonumber\\
     &\le |z|\,M(z)\cdot M'\,\|A-B\|_\infty\,.
\end{align}
by the operator convexity of $x\mapsto x^{-1}$ where $M(z):=\max_{s\in[0,1]}\|(sA+(1-s)B)^{\operatorname{Re}(z)}\|_\infty$ and $M':=\max\{\|A^{-1}\|_\infty,\|B^{-1}\|_\infty\}$. The result follows by operator convexity of the inverse function and further simple estimates.
\end{proof}

\section{Modified logarithmic Sobolev inequalities}\label{MLSI}

In this section, we pursue a different approach to prove transportation cost inequalities for $W_1$, namely through the existence of a non-commutative entropic inequality known as the \textit{modified logarithmic Sobolev inequality} \cite{Kastoryano2013,gao2021spectral}. In order to introduce our main result, we need a variation of the Lipschitz constant that was introduced in \cite{Rouz2019}. This definition departs from a noncommutative differential structure, which we define below (see \cite{Carlen_2019}):

\begin{defn}[Differential structure]\label{thm:normalformCM}
A set of operators $L_k \in \mathcal{O}_V$ and constants $\omega_k\in\mathbb{R}$ define a differential structure $\{L_k,\omega_k\}_{k\in\mathcal{K}}$ for a full rank state $\omega\in\mathcal{S}_V$ if
					\begin{itemize}
						\item[1] $\{L_k\}_{k\in\mathcal{K}}=\{L_k^{\dagger}\}_{k\in\mathcal{K}}$;
						\item[2] $\{L_k\}_{k\in\mathcal{K}}$ consists of eigenvectors of the modular operator $\Delta_\omega(X):=\omega X\omega^{-1}$ with
						\begin{align}\label{eigenD}
							\Delta_\omega(L_k)=e^{-\omega_k}L_k\,.
						\end{align}
						\end{itemize}

\end{defn}
Such a differential structure can be used to provide the set of matrices with a Lipschitz constant that is tailored to $\omega$, see e.g.~\cite{Rouz2019,Carlen_2019} for more on this. In order to distinguish that constant from $\|.\|_L$, we refer to it as the differential Lipschitz constant and denote it by $\vertiii{\nabla X}$.
It is defined as:
\begin{align}\label{equ:lipnorm}
\vertiii{\nabla X}:= \left(  \sum_{k\in\mathcal{K}}  (e^{-\omega_k/2}+e^{\omega_k/2})\|\partial_kX\|_{\infty}^2\right)^{1/2}\,,
\end{align}
where $\partial_k X\equiv [L_k,X]$. For ease of notations, we will denote the differential structure by the couple $(\nabla,\omega)$. The notion of a differential structure is also intimately connected to that of the generator of a quantum dynamical semigroup converging to $\omega$~\cite{Carlen_2019}, and properties of that semigroup immediately translate to properties of the metric. This is because the differential structure can be used to define an operator that behaves in an analogous way to the Laplacian on a smoth manifold, which in turn induces a heat semigroup. We refer to~\cite{Carlen_2019,Rouz2019} for more details on this connection and interpretation.

When the state $\omega$ is a quantum Gibbs state corresponding to a local, commuting Hamiltonian associated to a uniformly bounded interaction defined on a lattice $V\subset\subset \mathbb{Z}^D$, the differential structure $(\nabla,\omega)$ can be chosen as local. This means that the operators $L_k\equiv L_{i,\alpha}$ are indexed by a site $i\in V $ and an index $\alpha$ of a set $\Gamma$ whose cardinality only depends on the local dimension $d$ and the locality $\kappa$ of $\omega$. Moreover, we assume that the operators $L_{i,\alpha}$ are supported on a neighborhood $\mathcal{N}_i$ of site $i$ of diameter $r\equiv r(\kappa)$ and the corresponding constants $\omega_{i,\alpha}$ are uniformly bounded: $\sup_{i\in\mathbb{Z}^D}\max_{\alpha \in\Gamma}|\omega_{i,\alpha}|\equiv \Omega<\infty$. The definition in Eq.~\eqref{equ:lipnorm} yields a metric on states by duality:
\begin{align*}
    W_{\nabla}(\rho,\omega):=\sup\limits_{X=X^\dagger,\, \vertiii{\nabla X}\leq1}\left|\operatorname{Tr}\left(X(\rho-\omega)\right)\right|.
\end{align*} 

\begin{prop}\label{prop:comparison}
Given the Gibbs state $\omega$ of a local commuting Hamiltonian $H$ on $V\subset\subset \mathbb{Z}^D$ with $|V|=n$ and associated local differential structure $(\nabla,\omega)$, the following bound holds for all $\rho\in \mathcal{S}_V$: 
\begin{align*}
    \left\|\rho-\omega\right\|_{W_1}\le C\,\sqrt{n}\,W_{\nabla}(\rho,\omega)\,,
\end{align*}
for some constant $C$ independent of $n$.
\end{prop}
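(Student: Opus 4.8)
The strategy is to exploit that both $\left\|\cdot\right\|_{W_1}$ and $W_\nabla$ are defined by duality and to reduce the statement to a \emph{pointwise} comparison of the two associated Lipschitz seminorms on observables. Recall that
\[
\left\|\rho-\omega\right\|_{W_1}=\max\left\{\tr[(\rho-\omega)X]:X\in\mathcal{O}_V,\ \|X\|_L\le 1\right\}
\]
and
\[
W_\nabla(\rho,\omega)=\sup\left\{\left|\tr[(\rho-\omega)X]\right|:X=X^\dagger,\ \vertiii{\nabla X}\le 1\right\}.
\]
Hence it suffices to establish a bound of the form $\vertiii{\nabla X}\le C\sqrt{n}\,\|X\|_L$ for every self-adjoint $X\in\mathcal{O}_V$, with $C$ independent of $n$: applying it to a (near-)maximizer $X^\star$ of the $W_1$ variational problem, which may be assumed traceless so that $\left\|X^\star\right\|_L\le 1$, makes $X^\star/(C\sqrt n)$ feasible for $W_\nabla$, and therefore $W_\nabla(\rho,\omega)\ge \left\|\rho-\omega\right\|_{W_1}/(C\sqrt n)$.

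To prove the pointwise estimate I would invoke the locality of the differential structure: write $L_k\equiv L_{i,\alpha}$ with $i\in V$ and $\alpha$ ranging over a set $\Gamma$ of bounded cardinality, $\operatorname{supp}(L_{i,\alpha})\subseteq\mathcal{N}_i$ with $|\mathcal{N}_i|\le\nu$, and $\|L_{i,\alpha}\|_\infty\le\ell$, $|\omega_{i,\alpha}|\le\Omega$, where $\nu,\ell,\Omega,|\Gamma|$ depend only on $d$ and the interaction. Using $e^{-\omega_k/2}+e^{\omega_k/2}\le 2e^{\Omega/2}$ in \eqref{equ:lipnorm} gives
\[
\vertiii{\nabla X}^2\le 2e^{\Omega/2}\sum_{i\in V}\sum_{\alpha\in\Gamma}\left\|[L_{i,\alpha},X]\right\|_\infty^2 .
\]
Since $L_{i,\alpha}$ is supported on $\mathcal{N}_i$ it commutes with any operator supported on $\mathcal{N}_i^c$; choosing such an operator to be $\mathbb{E}_{\mathcal{N}_i}X$, the composition over $j\in\mathcal{N}_i$ of the single-site depolarizing maps $\mathbb{E}_j(\cdot)=\mathbb{I}_j\otimes\frac1d\tr_j(\cdot)$, one gets $\left\|[L_{i,\alpha},X]\right\|_\infty=\left\|[L_{i,\alpha},X-\mathbb{E}_{\mathcal{N}_i}X]\right\|_\infty\le 2\ell\,\left\|X-\mathbb{E}_{\mathcal{N}_i}X\right\|_\infty$.

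The remaining ingredient is the bound $\left\|X-\mathbb{E}_{\mathcal{N}_i}X\right\|_\infty\le|\mathcal{N}_i|\,\|X\|_L$, which follows by telescoping over $\mathcal{N}_i=\{j_1,\dots,j_m\}$: write $X-\mathbb{E}_{\mathcal{N}_i}X=\sum_{l=1}^m(\id-\mathbb{E}_{j_l})\big(\mathbb{E}_{j_{l-1}}\cdots\mathbb{E}_{j_1}X\big)$ and use that each $\mathbb{E}_j$ contracts $\|\cdot\|_\infty$, that $\left\|(\id-\mathbb{E}_j)Z\right\|_\infty\le\|Z\|_L$ for all $Z$ by \autoref{prop15}, and that $\|\mathbb{E}_jZ\|_L\le\|Z\|_L$ — the last because $\mathbb{E}_j$ is an average of single-site unitary conjugations, each of which leaves $\|\cdot\|_L$ invariant, together with convexity of $\|\cdot\|_L$. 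Combining all the estimates yields $\vertiii{\nabla X}^2\le 2e^{\Omega/2}\,|\Gamma|\,(2\ell\nu)^2\,n\,\|X\|_L^2$, i.e.\ $\vertiii{\nabla X}\le C\sqrt n\,\|X\|_L$ with $C=2\ell\nu\sqrt{2\,e^{\Omega/2}|\Gamma|}$, which finishes the argument. The only point requiring genuine care — and the only place where the hypotheses are really used — is checking that $\nu,\ell,\Omega$ and $|\Gamma|$ are all independent of $n$, i.e.\ that the chosen differential structure is uniformly local and uniformly bounded; everything else is routine operator-norm bookkeeping.
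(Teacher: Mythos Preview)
Your proof is correct and follows essentially the same route as the paper's: reduce by duality to the pointwise estimate $\vertiii{\nabla X}\le C\sqrt{n}\,\|X\|_L$, use locality of $L_{i,\alpha}$ to replace $X$ by $X-\mathbb{E}_{\mathcal{N}_i}X$ in the commutator, and then telescope over the sites of $\mathcal{N}_i$ to invoke \autoref{prop15}. The only cosmetic difference is in the telescoping step: the paper commutes the single-site maps (they act on disjoint sites) and uses $\|\cdot\|_\infty$-contractivity of $\mathbb{E}_{j_1}\cdots\mathbb{E}_{j_{l-1}}$ to bound each summand by $\|(\id-\mathbb{E}_{j_l})X\|_\infty\le\|X\|_L$, whereas you instead apply \autoref{prop15} to $\mathbb{E}_{j_{l-1}}\cdots\mathbb{E}_{j_1}X$ and then use $\|\mathbb{E}_jZ\|_L\le\|Z\|_L$; both yield the same bound and the same constant.
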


\begin{proof}
By duality, it is equivalent to prove that for all $H\in\mathcal{O}_V$
\begin{align*}
    \vertiii{\nabla H}\le C\,\sqrt{n}\,\|H\|_L\,.
\end{align*}
First, we have
\begin{align}
    \vertiii{\nabla H}&=\Big(\sum_{i\in V}\sum_{\alpha\in\Gamma}(e^{-\omega_{i,\alpha}/2}+e^{\omega_{i,\alpha}/2})\,\|[L_{i,\alpha},H]\|_\infty^2\Big)^{\frac{1}{2}}\\
    &\le \sqrt{n|\Gamma|}\,\sqrt{2\,e^{\Omega/2}}\,\max_{i\in V}\,\max_{\alpha\in\Gamma}\,\|[L_{i,\alpha},H]\|_\infty\,.
\end{align}
Now, since for each pair $(i,\alpha)$, $L_{i,\alpha} $ is supported on a neighborhood $\mathcal{N}_i$ of site $i\in V$, 
\begin{align}
\|[L_{i,\alpha},H]\|_\infty&= \big\|[L_{i,\alpha},H-\frac{\mathbb{I}_{\mathcal{N}_i}}{d^{|\mathcal{N}_i|}}\otimes\mathrm{Tr}_{\mathcal{N}_i}(H)]\big\|_\infty\\
&\le 2\,\|L_{i,\alpha}\|_\infty\,\big\|H-\frac{\mathbb{I}_{\mathcal{N}_i}}{d^{|\mathcal{N}_i|}}\otimes\mathrm{Tr}_{\mathcal{N}_i}(H)\big\|_\infty\,.
\end{align}
Next, by a telescopic sum argument, we can further control the last infinity norm on the right hand side above as follows: given an arbitrary ordering of the region $\mathcal{N}_i$,
\begin{align}
    \big\|H-\frac{\mathbb{I}_{\mathcal{N}_i}}{d^{|\mathcal{N}_i|}}\otimes\mathrm{Tr}_{\mathcal{N}_i}(H)\big\|_\infty&\overset{(1)}{\le}
    \sum_{j=1}^{|\mathcal{N}_i|}
    \Big\|\Big(\frac{\mathbb{I}_{1\ldots j-1}}{d^{j-1}}\otimes\mathrm{Tr}_{1\ldots j-1}-
     \frac{\mathbb{I}_{1\ldots j}}{d^{j}}\otimes\mathrm{Tr}_{1\ldots j}\Big)(H)\Big\|_\infty\\
     &\overset{(2)}{\le} |\mathcal{N}_i| \max_{j\in\mathcal{N}_i}\,\|H-\frac{\mathbb{I}_j}{d}\otimes \mathrm{Tr}_{j}(H)\|_\infty\,,
\end{align}
where $(1)$ follows from the triangle inequality whereas $(2)$ follows from the fact that the maps $\frac{\mathbb{I}_{1\ldots j-1}}{d^{j-1}}\otimes\mathrm{Tr}_{1\ldots j-1}$ are completely positive and unital, and therefore contract the operator norm. All in all, we have derived the following bound on the differential Lipschitz constant of $H$:
\begin{align}
    \vertiii{\nabla{H}}&\le \sqrt{n|\Gamma|}\,2\sqrt{2\,e^{\Omega/2}}\,\max_{i\in V}\,\max_{\alpha\in\Gamma}\,\|L_{i,\alpha}\|_\infty\, |\mathcal{N}_i|\,\max_{j\in \mathcal{N}_i}\,\|H-\frac{\mathbb{I}_j}{d}\otimes \mathrm{Tr}_{j}(H)\|_\infty\\
    &\overset{(3)}{\le} \frac{d^2-1}{d^2}\sqrt{n|\Gamma|}\,2\sqrt{2\,e^{\Omega/2}}\,\max_{i\in V}\,\max_{\alpha\in\Gamma}\,\|L_{i,\alpha}\|_\infty\, |\mathcal{N}_i|\,\|H\|_L\\
    &\equiv C\,\sqrt{n}\,\max_{i\in V}\|H-\frac{\mathbb{I}_j}{d}\otimes \mathrm{Tr}_{j}(H)\|_\infty\,,
\end{align}
for some constant $C$ independent of $n$, and where $(3)$ follows from \autoref{prop15}. 
\end{proof}

The advantage of $W_1$ as compared to $W_{\nabla}$ is that it does not depend on the state $\omega$. On the other hand, the bound derived in \autoref{prop:comparison} can be used in conjunction with recently proved transportation cost inequalities for $W_{\nabla}$ through the proof of the existence of a modified logarithmic Sobolev inequality in order to get analogous inequalities for $W_1$ (see \cite{capel2020modified} for more details):
\begin{thm}\label{thm:LS}
Let $\omega$ be the Gibbs state of a local commuting Hamiltonian $H$ at inverse temperature $\beta$ on $V\subset\subset \mathbb{Z}^D$. Then, there exists a critical inverse temperature $\beta_c$ such that $C(\omega_\beta)\le C\,n$ for some constant $C$ independent of $n=|V|$ whenever $\beta<\beta_c$ if any of the two conditions below is satisfied:
\begin{itemize}
\item[$\operatorname{(i)}$] $H$ is classical;
\item[$\operatorname{(ii)}$] $H$ is a nearest neighbour Hamiltonian.
\end{itemize}
Moreover, we can drop the assumption of $2$-locality in the $1$D case, where $\beta_c=0$ at the cost of getting a slightly worsened constant $C(\omega_\beta)\le Cn\operatorname{polylog}(n)$, so that we recover the result of \autoref{thm:Markov}.
\end{thm}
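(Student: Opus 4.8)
The plan is to combine \autoref{prop:comparison} with known transportation cost inequalities for the noncommutative metric $W_\nabla$, which are themselves consequences of a modified logarithmic Sobolev inequality (MLSI) for a suitable reversible quantum Markov semigroup whose fixed point is $\omega_\beta$. Concretely, \autoref{prop:comparison} gives $\left\|\rho-\omega\right\|_{W_1}\le C\sqrt{n}\,W_\nabla(\rho,\omega)$, so it suffices to establish a bound of the form $W_\nabla(\rho,\omega)\le\sqrt{C'\,S(\rho\|\omega)}$ with $C'$ independent of $n$ (or growing at most polylogarithmically in the $1$D case). Once this is in hand, squaring and combining the two inequalities yields $\left\|\rho-\omega\right\|_{W_1}^2\le C^2 n\cdot C'\,S(\rho\|\omega)$, i.e.\ $C(\omega_\beta)\le C^2 C' n$, which is the claimed linear scaling (up to polylog).

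The first step is therefore to identify, for each of the three regimes, the MLSI that has been proved in the literature. For (i), classical Hamiltonians, and (ii), nearest-neighbour Hamiltonians on $\mathbb{Z}^D$, one invokes the high-temperature MLSI for the Davies or heat-bath generator established in \cite{capel2020modified} (and related works), which holds with a system-size-independent constant above a critical temperature $\beta_c$. For the $1$D case without the $2$-locality assumption, one instead invokes the recently proved MLSI of \cite{bardet2021entropy,bardet2021rapid}, which holds at all positive temperatures but with a constant that carries a $\operatorname{polylog}(n)$ overhead. The second step is the standard implication ``MLSI $\Rightarrow$ transportation cost inequality for $W_\nabla$'': if the generator $\mathcal{L}$ associated to the differential structure $(\nabla,\omega)$ satisfies $\alpha\,S(\rho\|\omega)\le -\tfrac12\,\tr[\mathcal{L}^*(\rho)(\ln\rho-\ln\omega)]$ (the entropy production inequality), then one runs the semigroup $e^{t\mathcal{L}}$ starting from $\rho$, differentiates the $W_\nabla$ distance along the flow, bounds its time-derivative by (a constant times) the square root of the entropy production using the definition \eqref{equ:lipnorm} of $\vertiii{\nabla\cdot}$ together with a Cauchy--Schwarz / chain-rule argument, and integrates from $0$ to $\infty$; this is the quantum analogue of the Bakry--\'Emery/Otto--Villani argument and gives $W_\nabla(\rho,\omega)\le\sqrt{C'\,S(\rho\|\omega)}$ with $C'\asymp 1/\alpha$. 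This step is carried out in detail in \cite{Rouz2019,capel2020modified}, so here it can be cited.

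Combining: in cases (i) and (ii), $\alpha$ is a positive constant independent of $n$ for $\beta<\beta_c$, hence $C'=O(1)$ and $C(\omega_\beta)=O(n)$; in the $1$D case, $\alpha^{-1}=O(\operatorname{polylog}(n))$ at every $\beta>0$, hence $C(\omega_\beta)=O(n\operatorname{polylog}(n))$, matching \autoref{thm:Markov} up to the polylogarithmic factor. The main obstacle is not in the chaining argument, which is essentially bookkeeping, but in the availability of the input MLSI: the reason the theorem is stated only for classical, nearest-neighbour, or one-dimensional Hamiltonians is precisely that system-size-independent MLSI constants for Gibbs states of general commuting Hamiltonians at high temperature in $D\ge 2$ are not currently known. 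So the proof reduces to correctly invoking the strongest available MLSI in each regime and checking that \autoref{prop:comparison} applies — i.e.\ that the differential structure $(\nabla,\omega)$ used to define the relevant $W_\nabla$ can be taken to be the local one with uniformly bounded $\omega_{i,\alpha}$ and operators supported on bounded neighbourhoods, which is the case for Gibbs states of local commuting Hamiltonians as recalled before \autoref{prop:comparison}. A minor technical point to address is that the MLSI constants in \cite{capel2020modified} are stated for a specific choice of generator (Davies or heat-bath), and one must ensure its associated differential structure is of the local form required by \autoref{prop:comparison}; this is again standard.
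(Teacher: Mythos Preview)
Your proposal is correct and follows essentially the same route as the paper: invoke the MLSI results of \cite{capel2020modified,bardet2021entropy,bardet2021rapid} in each regime, use the implication MLSI $\Rightarrow$ $W_\nabla$-transportation cost from \cite{Rouz2019}, and then chain with \autoref{prop:comparison} to pass from $W_\nabla$ to $W_1$ at the cost of the $\sqrt{n}$ factor. The paper's own proof is just this three-line argument; your additional remarks on the locality of the differential structure and the Otto--Villani mechanism behind the MLSI $\Rightarrow$ TC step are accurate elaborations but not needed beyond citation.
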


\begin{proof}
In \cite{capel2020modified,bardet2021entropy,bardet2021rapid} the existence of local differential structures associated to $\omega$ that satisfy the so-called modified logarithmic Sobolev inequality was proved under the conditions of the theorem. Moreover, the modified logarithmic Sobolev inequality implies the transportation cost inequality for the differential Wasserstein distance \cite{Rouz2019}: there exists a constant $C'$ independent of $n$ such that
\begin{align}
    W_{\nabla}(\rho,\omega)\le \sqrt{C'\,S(\rho\|\omega)}
\end{align}
for all state $\rho\in\mathcal{S}_V$. This fact in conjunction with \autoref{prop:comparison} allows us to conclude.
\end{proof}

\section{Local indistinguishability}\label{localindistinguishability}

In this section, we provide a transportation cost inequality under a condition of local indistinguishability \cite{cubitt2015stability,kastoryano2016quantum,brandao2019finite}. In the classical setting, this condition constitues a weakening of Dobrushin Shlosman's mixing condition \cite{dobrushin1987completely} recently considered by Marton \cite{marton2019logarithmic}. Moreover, as opposed to the latter, our technique has the benefit of not requiring the local specifications of the state to be uniformly lower bounded by a positive number, at the cost of getting a slightly worsened constant.

\subsection{Transportation cost from local indistinguishability}\label{sec:localindistinguis}
We start by proving our general result in the quantum setting. Here, we assume that the $n=(2m+1)^D$ qudits are arranged on a $D$-dimensional regular lattice $V:=[-m,m]^D$. Before we state our main result, we need to introduce the notion of a non-commutative conditional expectation.

\begin{defn}[Conditional expectations]
Let $\cN\subseteq  \cB_V$ be a von Neumann subalgebra\footnote{We recall that a finite dimensional von Neumann algebra is a matrix algebra that is close under taking the adjoint.} of $\cB_V$. A conditional expectation onto $\cN$ is a completely positive unital map $E_\cN^\dagger:\cB_V\to \cN$ satisfying \begin{enumerate}
\item[i)]for all $X\in \cN$, $E^\dagger_\cN(X)=X$;
\item[ii)]for all  $a,b\in\cN,X\in \cB_V$, $E^\dagger_\cN(aXb)=aE_\cN(X)b$. 
 \end{enumerate}
We denote by $E_{\cN}$ its adjoint map with respect to the trace inner product, i.e.
\begin{align*}
\tr(E_{\cN}(X)Y)=\tr(XE_\cN^\dagger(Y))\,.
\end{align*}
\end{defn}
As a simple example, we consider a full-rank state $\sigma \in\mathcal{S}_V$ and let $(e^{t\mathcal{L}})_{t\ge 0}$ be a quantum Markov semigroup. Under the following detailed balance condition, the limit $\lim_{t\to\infty}e^{t\mathcal{L}^\dagger}=E^\dagger_{\mathcal{N}}$ is a conditional expectation onto the algebra $\mathcal{N}$ of fixed points of the semigroup:
\begin{align*}
    \forall X,Y\in\mathcal{B}_V,\quad \tr\big( \sigma\,X^\dagger \mathcal{L}^\dagger(Y)\big)=\tr\big( \sigma\,\mathcal{L}^\dagger(X)^\dagger Y\big)\,.
\end{align*}
Next, for a state $\rho$, the relative entropy with respect to $\cN$ is defined as follows
\[S(\rho\|\mathcal{N}):=S(\rho\|E_{\cN}(\rho))=\inf_{E_{\cN}(\sigma)=\sigma} S(\rho\|\sigma)\,,\]
where the infimum is always attained by $\sigma = E_{\cN}(\rho)$. Indeed, for any $\sigma$ satisfying $E_{\cN}(\sigma)=\sigma$,  we have the following chain rule (see \cite[Lemma 3.4]{junge2019stability})
\begin{align}\label{eq.chainrule}
S(\rho\|\sigma)=S(\rho\|E_{\cN}(\rho))+S(E_{\cN}(\rho)\|\sigma)\,.
\end{align}
Hence the infimum is attained if and only if $S(E_{\cN}(\rho)\|\sigma)=0$.

\begin{defn}[Local indistinguishability]\label{def:indistingui}
    Let $\{\cN_C\}_{C\subseteq V}$ be a set of subalgebras of $\cB_V$ such that $\cB_{{C}^c}\subset \cN_C$ and $\mathbf{E}:=\{E_C\}_{C\subseteq V}$ be a set of compatible conditional expectations $E^\dagger_C:\cB_V\to \cN_C$ acting non-trivially on region $C$, \emph{i.e.}, they satisfy the property that for any $C\subseteq C'$, $E_C\circ E_{C'}=E_{C'}\circ E_C= E_{C'}$. Then, we say that $\mathbf{E}$ satisfies local indistinguishability if there exists a fast decaying function $\varphi:\mathbb{N}\to\mathbb{R}$ independent of $V$ such that for every regions $XYZ\subset V$ with $\operatorname{dist}(X,Z)\ge \ell$, and for all states $\rho\in\mathcal{S}_{V}$,
    \begin{align*}
  \|E_{YZ}\circ( E_{XYZ}- E_{XY})(\rho)\|_1\le \,|XYZ|\,\varphi(\ell) \,,
    \end{align*}
\end{defn}
For instance, take a product state $\omega\in\mathcal{S}_V$ and for each region $C\subseteq V$, denote $E_C(\rho)=\tr_C(\rho)\otimes \omega_C$. One can easily verify that the maps $E_C$ are conditional expectations and satisfy the local indistinguishability condition with $\varphi=0$.
We are now ready to state and prove the main theorem of this section. For a strictly decreasing function $\varphi:\mathbb{N}\to\mathbb{R}_+$ and a positive real number $a>0$, we denote by 
 $\varphi^{-1}(a):=\min\{\ell\in\mathbb{N}:\varphi(\ell)\le a\}$. 

\begin{thm}\label{thm:localind}
Let $\mathbf{E}$ be a set of compatible conditional expectations satisfying local indistinguishability with fast decaying function $\varphi$. Then for all hypercubes $V_0\subset V$ and all $\rho\in\mathcal{S}_V$,
\begin{align}
\left\|\rho-E_{V_0}(\rho)\right\|_{W_1}\le2\sqrt{20}\,(7\varphi^{-1}(|V_0|^{-3/2}))^D\sqrt{|V_0|\,S(\rho\|E_{V_0}(\rho))}\,,
\end{align}
for some fixed constant $c$ of order $1$. In particular, whenever $E_{V}(\rho)=\omega_V\in \mathcal{S}_V$ for all states $\rho$, and assuming the exponential clustering function $\varphi(\ell):=\kappa e^{-\ell/\xi}$, the state $\omega_{V}$ satisfies $\operatorname{TC}(c)$ with $c=O({n}\operatorname{polylog}(n))$.
\end{thm}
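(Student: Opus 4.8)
The plan is to run the same telescoping-and-recovery strategy used in the proofs of \autoref{thm:Markov} and \autoref{thmTCcurvature}, but now driving the decomposition by the compatible conditional expectations $E_C$ rather than by Petz recovery maps associated to a fixed partition. First I would fix a hypercube $V_0\subset V$ and a state $\rho\in\mathcal{S}_V$, and choose a side length $\ell:=\varphi^{-1}(|V_0|^{-3/2})$; the idea is that over distance $\ell$ the conditional expectations become essentially indistinguishable up to an error $|V_0|^{-3/2}$ times a volume factor, which is what the local indistinguishability hypothesis buys us. I would then tile (or peel off) $V_0$ by a bounded number of sub-hypercubes of side comparable to $\ell$, so that there are of order $(|V_0|/\ell^D)$ pieces; the factor $(7\varphi^{-1}(|V_0|^{-3/2}))^D$ in the statement is exactly the $W_1$-cost of transporting within one such block via \autoref{prop5}, and the constant $7$ presumably reflects an enlargement of each block by a buffer layer of width $\ell$ to separate the "new" region from its conditioning complement.

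Next I would write the telescoping identity
\begin{align*}
    \rho - E_{V_0}(\rho) = \sum_{j=1}^{m}\big(F_{j-1}(\rho) - F_j(\rho)\big)\,,
\end{align*}
where $F_j$ is a composition of conditional expectations over the first $j$ blocks, $F_0=\mathrm{id}$ and $F_m = E_{V_0}$ (using compatibility, $E_C\circ E_{C'}=E_{C'}$ for $C\subseteq C'$, to guarantee the chain collapses correctly). For each increment $F_{j-1}(\rho)-F_j(\rho)$ I would bound its $W_1$ norm: by \autoref{prop5} this is at most the (buffered) block size times the trace-norm distance, and then by Pinsker's inequality together with the chain rule \eqref{eq.chainrule} — $S(\rho\|E_{\cN}(\rho))$ decomposing additively along the tower of algebras — each trace-norm-squared increment is controlled by the drop in relative entropy, exactly as in \eqref{eq:STi} and in the curvature-bound computation. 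Here the local indistinguishability bound is used to replace the "honest" conditional expectation over the $j$-th block by one acting only on a buffered neighborhood, incurring the additive error $|V_0|\varphi(\ell)\le|V_0|^{-1/2}$; summing these small errors over the $O(|V_0|)$ blocks gives an $O(|V_0|^{1/2})$ total, which is harmless compared to the main $\sqrt{|V_0|\,S}$ term (and is presumably the origin of the "slightly worsened constants" and the exponent $-3/2$ chosen precisely so that $|V_0|\cdot|V_0|^{-3/2}=|V_0|^{-1/2}$). Applying Cauchy–Schwarz over the $m=O(|V_0|/\ell^D)$ increments then yields
\begin{align*}
    \left\|\rho-E_{V_0}(\rho)\right\|_{W_1}\le C\,\ell^D\sqrt{m}\,\sqrt{S(\rho\|E_{V_0}(\rho))} + (\text{lower order})\,,
\end{align*}
and since $\ell^D\sqrt{m}=\ell^D\sqrt{|V_0|/\ell^D}=\ell^{D/2}\sqrt{|V_0|}$... here I would instead be slightly more careful and keep the $W_1$ cost per block as $\ell^D$ while noting $m\le |V_0|$, giving the stated $\ell^D\sqrt{|V_0|\,S}$ form directly. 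The final specialization to $\varphi(\ell)=\kappa e^{-\ell/\xi}$ gives $\varphi^{-1}(|V_0|^{-3/2})=\xi(\tfrac{3}{2}\ln|V_0|+\ln\kappa)=O(\log n)$, hence the block cost $\ell^D=O(\operatorname{polylog}(n))$ and $C(\omega)=O(n\operatorname{polylog}(n))$ when $E_V(\rho)=\omega_V$ for all $\rho$.

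The main obstacle I anticipate is the geometry and bookkeeping of the tiling: one must choose the order in which blocks are absorbed and the buffer regions so that at each step the "new" region $X$, the buffer $Y$, and the already-absorbed complement $Z$ genuinely satisfy $\operatorname{dist}(X,Z)\ge\ell$, so that \autoref{def:indistingui} applies with the claimed argument $\ell$, while simultaneously keeping the number of blocks $O(|V_0|/\ell^D)$ and the per-block $W_1$ cost $O(\ell^D)$. A secondary subtlety is that the conditional expectations need not commute, so the telescoping must be arranged as a genuine composition tower (not a sum of independent corrections), and the chain rule \eqref{eq.chainrule} must be invoked along that tower to extract additivity of the relative entropy — one should check that each intermediate state $F_j(\rho)$ is a fixed point of the algebra onto which the next expectation projects, which is precisely what compatibility $E_C\circ E_{C'}=E_{C'}$ guarantees. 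Controlling the accumulated indistinguishability errors and verifying they remain $o(\sqrt{n\,S})$ uniformly in $\rho$ is then routine given the $|V_0|^{-3/2}$ choice.
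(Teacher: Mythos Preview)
Your proposal captures the correct skeleton: choice of scale $\ell=\varphi^{-1}(|V_0|^{-3/2})$, a telescoping of $\rho-E_{V_0}(\rho)$ along compositions of conditional expectations, the bound $\|\cdot\|_{W_1}\le(\text{block size})\|\cdot\|_1$ from \autoref{prop5}, Pinsker combined with the chain rule \eqref{eq.chainrule} to convert trace-norm increments into telescoping relative-entropy drops, and local indistinguishability to close the gap between the composed block expectations and $E_{V_0}$. This is indeed the paper's strategy.

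The organizational difference is in the geometry. You propose a \emph{sequential} peeling $F_j=E_{C_j}$ along a nested chain $C_0\subset\cdots\subset C_m=V_0$, invoking local indistinguishability at each step to localize the increment $E_{C_{j-1}}(\rho)-E_{C_j}(\rho)$ to a buffered block. The paper instead uses a \emph{layered} tiling $V_0=A_+\cup B_\pm\cup C$ (a coloring into three families of pairwise well-separated cells, in the spirit of Brand\~ao--Kastoryano), and the main triangle split is only into three pieces: (I)~$\|\rho-E_{A_+}E_{B_+}E_C(\rho)\|_{W_1}$, controlled purely by Pinsker and the chain rule since within each layer the non-overlapping cells satisfy $E_{C_i}E_{C_j}=E_{C_i\cup C_j}$; then (II) and (III), which pass from $E_{A_+}E_{B_+}E_C$ to $E_{A_+}E_{B_-C}$ to $E_{V_0}$ using local indistinguishability just \emph{twice}. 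The advantage of the paper's layering is that the entropy bookkeeping is cleaner (the chain rule applies verbatim along each layer because the cell expectations compose exactly to the layer expectation), and the indistinguishability error is incurred a bounded number of times rather than once per block. In your sequential scheme, one must check at each step that $S(E_{C_{j-1}}(\rho)\|E_{\tilde B_j}E_{C_{j-1}}(\rho))$ can be related back to $S(E_{C_{j-1}}(\rho)\|E_{C_j}(\rho))$ so that the telescoping entropy sum still closes to $S(\rho\|E_{V_0}(\rho))$; this is the step where your sketch is thin, and it is precisely what the layered organization sidesteps. Your identification of the geometric bookkeeping as the main obstacle is accurate.
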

 \begin{figure}[h!]
	\centering
	\includegraphics[width=0.4\linewidth]{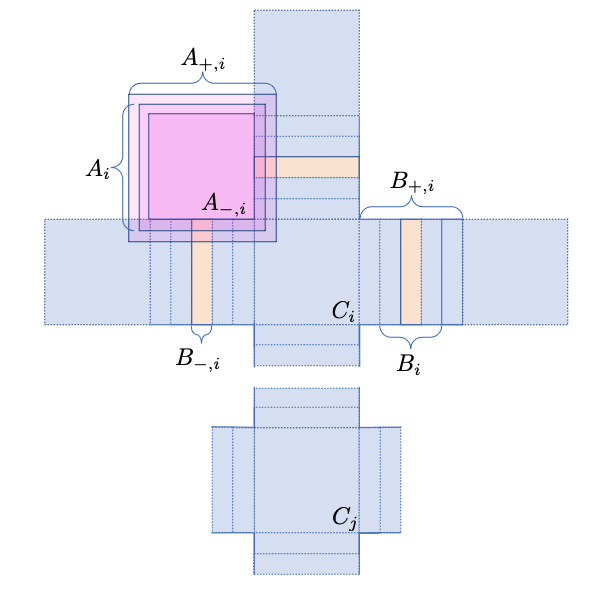}
	\caption{Geometry of the lattice in the proof of \autoref{thm:localind} with tiling by regions $A_+:=\cup_i A_{+,i}$, $B_+:=\cup_i B_{+,i}$ and $C:=\cup C_i$. }
	\label{figuretiling1}
\end{figure}

\begin{proof}

For sake of clarity, we provide the proof for $D=2$ only, although the general case follows similarly. First, we partition the hypercube $V_0$ into regions $A$, $B_+$ and $C_+$ in the same way as done in \cite{brandao2019finite} (see also \autoref{figuretiling1}). Then, by triangle inequality
\begin{align}
    &\|\rho-E_{V_0}(\rho)\|_{W_1}\nonumber\\
    &~~\le \|\rho-E_{A_+}E_{B_-C}(\rho)\|_{W_1}+\|E_{A_+}E_{B_-C}(\rho)-E_{V_0}(\rho)\|_{W_1}\,\nonumber\\
    &~~\le \|(\id-E_{A+}E_{B_+}E_{C})(\rho)\|_{W_1}+\|E_{A+}(E_{B_+}E_{C}-E_{B_-C})(\rho)\|_{W_1}+\|(E_{A_+}E_{B_-C}-E_{V_0})(\rho)\|_{W_1}\, \nonumber\\
    &~~\le (I)+ 2|A_+^{\max}|\,(II)+(III)\,\label{eq:split},
    \end{align}
    where 
    \begin{align*}
      &(I):=  \|(\id-E_{A+}E_{B_+}E_{C})(\rho)\|_{W_1}\\
      &(II):=\|(E_{B_+}E_{C}-E_{B_-C})(\rho)\|_{W_1}\\
      &(III):=\|(E_{A_+}E_{B_-C}-E_{V_0})(\rho)\|_{W_1}
    \end{align*}
 and where the last bound in \eqref{eq:split} follows from \autoref{prop13} with $|A_+^{\max}|:=\max_{i}|A_{+,i}|$. Now, we control each of the norms on the right-hand side of \eqref{eq:split} separately. First, we denote by $E_{C^{(0)}}:=\id$, $C^{(i)}:=\cup_{j\le i}C_j$ given an arbitrary ordering of the connected subregions in $C$, and similarly for the other regions $A_+$ and $B_+$. Then,
\begin{align}
    (I)&\le \|(\id-E_{C})(\rho)\|_{W_1}+\|(E_{C}-E_{B_+}E_C)(\rho)\|_{W_1}+\|(E_{B_+}E_C-E_{A_+}E_{B+}E_C)(\rho)\|_{W_1}\label{firstineq}\\
    & \le |C^{\max}|\,\sum_{i\in\mathcal{I}_C}\|(E_{C^{(i-1)}}-E_{C^{(i)}})(\rho)\|_1+|B_+^{\max}|\sum_{i\in\mathcal{I}_{B+}}\|(E_{B_+^{(i-1)}}-E_{B_+^{(i)}})(E_{C}(\rho))\|_1\nonumber\\
    &~~~+|A_+^{\max}|\sum_{i\in\mathcal{I}_{A+}}\|(E_{A_+^{(i-1)}}-E_{A_+^{(i)}})E_{B_+}E_{C}(\rho)\|_1\label{eq:prop4}\\
    &\le |V_0^{\max}|\,\Big\{\,\sqrt{2|\mathcal{I}_C|}\,\sqrt{\sum_{i\in\mathcal{I}_C}\,S(E_{C^{(i-1)}}(\rho)\|E_{C^{(i)}}(\rho))}\nonumber\\
    &~~~~~~~~~~~~~+\,\sqrt{2|\mathcal{I}_{B_+}|}\,\sqrt{\sum_{i\in\mathcal{I}_{B+}}\,S(E_{B_+^{(i-1)}}(E_C(\rho))\|E_{B_+^{(i)}}(E_C(\rho)))} \nonumber\\
    &~~~~~~~~~~~~~+\sqrt{2|\mathcal{I}_{A_+}|}\,\sqrt{\sum_{i\in\mathcal{I}_{A_+}}\,S(E_{A_+^{(i-1)}}E_{B_+}E_C(\rho)\|E_{A_+^{(i)}}E_{B_+}E_C(\rho))}\Big\}\label{eq:Pinsker}\\
    &\le |V_0^{\max}|\,\sqrt{2|\mathcal{I}_{\max}|}\,\Big\{ \sqrt{S(\rho\|E_C(\rho))}+\sqrt{S(E_{C}(\rho)\|E_{B_+}E_C(\rho))}\nonumber\\
    &~~~~~~~~~~~~~~~~~~~~~~~~~~~~~~~~~~~~~~~~~~~~~~+\sqrt{S(E_{B_+}E_C(\rho)\|E_{A_+}E_{B_+}E_C(\rho))}\Big\}\label{eqfirst}
\end{align}
with
\begin{align*}
&|C^{\max}|:=\max_{i}|C_{i}|\\  &|V_0^{\max}|:=\max\{|A_+^{\max}|,|B_+^{\max}|,|C^{\max}|\}\\
&|\mathcal{I}_{\max}|:=\max\{|\mathcal{I}_{A_+}|,  |\mathcal{I}_{B_+}| ,|\mathcal{I}_C|\}\\
&
\end{align*}
where $|\mathcal{I}_C|$ denotes the number of connected components in $C$, and similarly for the other sets.
Above, \eqref{firstineq} follows by the triangle inequality, \eqref{eq:prop4} by the triangle inequality and \autoref{prop5}, and \eqref{eq:Pinsker} by Pinsker's inequality as well as Jensen's inequality for $x\mapsto x^{\frac{1}{2}}$. \eqref{eqfirst} follows from the chain rule in \eqref{eq.chainrule}, and the fact that the regions $C_i$, resp. $A_{+,i}$, resp. $B_{+,i}$, do not overlap, so that for instance $E_{A_{+,i}}E_{A_{+,j}}=E_{A_{+,j}}E_{A_{+,i}}=E_{A_{+,j}\cup A_{+,i}}$. Next, we control the second norm on the right-hand side of \eqref{eq:split}: using \autoref{prop5} , we have
\begin{align}
    (II)&\le 2|B_+|\,\|(E_{B_+}E_{C}-E_{B_-C})(\rho)\|_{1}\nonumber\\
    &= 2|B_+|\,\|(E_{B+}E_C-E_{B_-C})(E_C-E_{B_-C})(\rho)\|_1\label{eq:compatibility}\\
    &\le 2|B_+|\,\|E_{B+}E_C-E_{B_-C}\|_{1\to 1}\,\|(E_C-E_{B_-C})(\rho)\|_1\nonumber \\
&  \le  2|B_+|\,\max_{\rho'\in\mathcal{S}_V}\,\|E_B(E_C-E_{B_-C})(\rho')\|_1\,\|E_C(\rho)-E_{B_-C}(\rho)\|_1\label{eq:dpi}\\
      &\le 2\,|B_+|\,|B_-C|\,\varphi(\ell)\,\|E_C(\rho)-E_{B_-C}(\rho)\|_1 \label{eqlocindistuse}\\
     &\le2\,|B_+|\,|B_-C|\,\varphi(\ell)\,\sqrt{2\,S(E_C(\rho)\|E_{B_-C}(\rho))}\label{eqsec}\,.
\end{align}
Above, \eqref{eq:compatibility} follows form the compatibility of the conditional expectations, and \eqref{eq:dpi} from $E_{B_+}\circ E_B=E_{B_+}$ and the monotonicity of the trace-distance under such CPTP map. \eqref{eqlocindistuse} follows from the condition of local indistinguishability
when taking $X=C\backslash B$, $Y=B\backslash B_-$ and $Z=B_-$, and assuming that $\operatorname{dist}(X,Z)\ge \ell$. Finally, \eqref{eqsec} follows from an application of Pinsker's inequality. Similarly, we find
 \begin{equation}\label{eqthird}
  (III)\le 
      2\,|A_+|\,|V_0|\,\varphi(\ell)\,\sqrt{2\,S(E_{B_-C}(\rho)\|E_{V_0}(\rho))}
 \end{equation}
 Then, by inserting \eqref{eqfirst}, \eqref{eqsec} and \eqref{eqthird} into \eqref{eq:split}, we have 
\begin{align}
   & \|\rho-E_{V_0}(\rho)\|_{W_1}\nonumber\\
    &\le 2\sqrt{2}\max\big\{ |V_0^{\max}|\,\sqrt{|\mathcal{I}_{\max}|},   |A_+|\,|V_0|\,\varphi(\ell),|B_+|\,|B_-C|\varphi(\ell)\big\}  \nonumber\\
    &\,\Big\{ \sqrt{S(\rho\|E_C(\rho))}+\sqrt{S(E_{C}(\rho)\|E_{B_+}E_C(\rho))}+\sqrt{S(E_{B_+}E_C(\rho)\|E_{A_+}E_{B_+}E_C(\rho))}\nonumber\\
    &~~~~~~~~~~~~~~~~~~~~~~+ \sqrt{S(E_{B_-C}(\rho)\|E_{V_0}(\rho))}+\sqrt{S(E_C(\rho)\|E_{B_-C}(\rho))}\Big\}\nonumber\\
    &\le2\sqrt{10}\max\big\{ |V_0^{\max}|\,\sqrt{|\mathcal{I}_{\max}|},   |A_+|\,|V_0|\,\varphi(\ell),|B_+|\,|B_-C|\varphi(\ell)\big\}\nonumber\\
    & \Big(S(\rho\|E_C(\rho))+S(E_{C}(\rho)\|E_{B_+}E_C(\rho))+S(E_{B_+}E_C(\rho)\|E_{A_+}E_{B_+}E_C(\rho))\nonumber\\
    &~~~~~~~~~~~~~~~~~~~~~~~~~~~~~~~~~~~~~~~~~~~~~~~~~ +S(E_C(\rho)\|E_{B_-C}(\rho))+S(E_{B_-C}(\rho)\|E_{V_0}(\rho))\Big)^{\frac{1}{2}}\label{eq:Jensen2}\\
   & \le 2
   \sqrt{20}\max\big\{ |V_0^{\max}|\,\sqrt{|\mathcal{I}_{\max}|},   |A_+|\,|V_0|\,\varphi(\ell),|B_+|\,|B_-C|\varphi(\ell)\big\}S(\rho\|E_{V_0}(\rho))^{\frac{1}{2}}\label{eqchainruleagain}
    \end{align}
  where \eqref{eq:Jensen2} is another directly application of Jensen's inequality for $x\mapsto x^{\frac{1}{2}}$, whereas \eqref{eqchainruleagain} follows from two uses of the chain rule \eqref{eq.chainrule} after adding the positive term $S(E_{A_+}E_{B_+}E_C(\rho)\|E_{V_0}(\rho))$ to the square root and a final use of the data processing inequality. The result then follows after choosing the length $\ell:=\varphi^{-1}(|V_0|^{-3/2})$ so that
  \begin{align*}
      \max\big\{|A_+|\,|V_0|,\,|B_+|\,|B_-C|\big\}\,\varphi(\ell)\le |V_0|^2\varphi(\ell)\le |V_0^{\max}|\,\sqrt{|\mathcal{I}_{\max}|}\,.
  \end{align*}
  With this choice, and estimating $|V_0^{\max}|\le (7\ell)^D$ the bound found in \eqref{eqchainruleagain} can be further controlled by
  \begin{align}
      \|\rho-E_{V_0}(\rho)\|_{W_1}&\le 2\sqrt{20}\,(7\ell)^{D}\,\sqrt{|V_0|\,S(\rho\|E_{V_0}(\rho))}\\
      &\le 2\sqrt{20}\,(7\varphi^{-1}(|V_0|^{-3/2}))^D\sqrt{|V_0|\,S(\rho\|E_{V_0}(\rho))}\,.
  \end{align}
    
\end{proof}

\subsection{Classical case}

In this section, we restrict our analysis to classical conditional expectations and probability measures. In this setting, it is easy to see that the property of local indistinguishability is implied by the following condition. Here, with a slight abuse of notations, we will use the same symbol for a probability measure $\mu$ on the Borel sets of $[d]^V$ and its corresponding probability mass function. 
\begin{defn}[Local indistinguishability, classical case]\label{def:indistinguiclass}
    Let $\mu$ be a probability measure on $[d]^{V}$, and $\{\mu_C\}_{C\subseteq V}$ be a set of compatible conditional probability measures $\mu_C(.|x_{{C}^c})$ acting on the sets $[d]^C$, i.e. they satisfy the property that for any $C\subseteq C'$, $\mathbb{E}_{\mu_C}\circ \mathbb{E}_{\mu_{C'}}=\mathbb{E}_{\mu_{C'}}\circ \mathbb{E}_{\mu_C}= \mathbb{E}_{\mu_{C'}}$. Then, we say that the measure $\mu$ satisfies local indistinguishability if there exists a fast decaying function $\varphi:\mathbb{N}\to\mathbb{R}_+$ such that for every regions $V'=XYZ\subset V$ such that $\operatorname{dist}(i,j)\ge \ell$ for any $i\in X$ and $j\in Z$, 
    \begin{align*}
 \max_{x_{X}\in[d]^{X}}\,\max_{x_{{V'}^c}\in [d]^{{V'}^c}}\,\, \sum_{y_{V'}} \big|\mu_{Y|X{V'}^c}(y_Y|x_Xx_{{V'}^c})\mu_{XY|ZV'^c}(y_{XY}|y_Zx_{V'^c})-\mu_{V'|{V'}^c}(y_{V'}|x_{{V'}^c})\big|\le |V'| \varphi(\ell) \,,
    \end{align*}
where $\partial Z$ denotes the boundary of $Z$.
\end{defn}
\begin{cor}
Let $\mu$ be a probability measure on $[d]^{V}$ satisfying local indistinguishability with fast decaying function $\varphi$. Then for all $\nu<\!<\mu$,
\begin{align}
W_1(\nu,\mu)\le2\sqrt{20}\,(7\varphi^{-1}(n^{-3/2}))^D\sqrt{n\,S(\nu\|\mu)}\,.
\end{align}
Equivalently, the measure $\mu$ satisfies the following sub-Gaussian tail: for any function $f$ such that $\|f\|_L\le 1$,
\begin{align}
    \mathbb{P}_\mu\Big(|f(X)-\mathbb{E}_\mu[f(X)]|>t \Big)\le 2\,\exp\left(-\frac{t^2}{80n(7\varphi^{-1}(n^{-3/2}))^{2D}}\right)\,.
\end{align}

\end{cor}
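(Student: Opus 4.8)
\emph{Strategy.} The corollary has two parts — the transportation cost inequality and the equivalent sub-Gaussian tail — and I would prove them in that order: the first by specializing \autoref{thm:localind} to the classical (commutative) setting with $V_0=V$, and the second by the classical argument of Marton and Bobkov--G\"otze, of which only the implication ``transportation cost $\Rightarrow$ concentration'' is needed.

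\emph{The transportation cost inequality.} In the classical case the relevant family of compatible conditional expectations attached to $\mu$ is given by the conditional averages $E_C:=\mathbb{E}_{\mu_C}$, which act on a probability measure by keeping its $C^c$-marginal and reconditioning the coordinates in $C$ according to $\mu_C(\cdot\,|\,x_{C^c})$; these are unital, completely positive and trace preserving on the commutative subalgebra, and the compatibility relation $E_C\circ E_{C'}=E_{C'}\circ E_C=E_{C'}$ for $C\subseteq C'$ is exactly the hypothesis built into \autoref{def:indistinguiclass}. The next step is to check that \autoref{def:indistinguiclass} implies the operator version of local indistinguishability in \autoref{def:indistingui}: under the diagonal embedding the trace norm of a signed measure is its $\ell_1$ norm, the supremum over states of $\|E_{YZ}\circ(E_{XYZ}-E_{XY})(\rho)\|_1$ is attained at point masses by convexity, and for a point mass this quantity reduces, via Bayes' rule, to a quantity controlled by the sum displayed in \autoref{def:indistinguiclass} (with the same distance hypothesis $\operatorname{dist}(X,Z)\ge\ell$). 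Granting this, \autoref{thm:localind} applies to $\{E_C\}_{C\subseteq V}$. Choosing $V_0=V$ — a hypercube with $V^c=\emptyset$, so that $E_V(\rho)=\mu$ for every state $\rho$ — and $\rho=\nu$, and using that on diagonal states $\|\cdot\|_{W_1}$ is the classical $W_1$ distance for the Hamming metric and $S(\nu\|E_V(\nu))=S(\nu\|\mu)$, \autoref{thm:localind} yields
\begin{align*}
    W_1(\nu,\mu)\le 2\sqrt{20}\,\big(7\varphi^{-1}(n^{-3/2})\big)^D\sqrt{n\,S(\nu\|\mu)}\,;
\end{align*}
since $(2\sqrt{20})^2=80$ this is precisely $\operatorname{TC}(c)$ with $c:=80\,n\,(7\varphi^{-1}(n^{-3/2}))^{2D}$.

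\emph{The sub-Gaussian tail.} Fix $f$ with $\|f\|_L\le1$; replacing $f$ by $f-\mathbb{E}_\mu[f]$ we may assume $\mathbb{E}_\mu[f]=0$. For $\lambda>0$ let $\nu_\lambda\ll\mu$ be the tilted measure with density $e^{\lambda f}/Z(\lambda)$ against $\mu$, where $Z(\lambda):=\mathbb{E}_\mu[e^{\lambda f}]$ (finite, since the state space is finite), and set $G(\lambda):=\ln Z(\lambda)$, so that $G(0)=G'(0)=0$, $G$ is convex and nonnegative (by Jensen), $G'(\lambda)=\mathbb{E}_{\nu_\lambda}[f]$ and $S(\nu_\lambda\|\mu)=\lambda G'(\lambda)-G(\lambda)$. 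The dual formula $W_1(\nu_\lambda,\mu)=\sup_{\|g\|_L\le1}|\mathbb{E}_{\nu_\lambda}[g]-\mathbb{E}_\mu[g]|$ together with the transportation cost inequality gives
\begin{align*}
    G'(\lambda)=\mathbb{E}_{\nu_\lambda}[f]-\mathbb{E}_\mu[f]\le W_1(\nu_\lambda,\mu)\le\sqrt{c\,S(\nu_\lambda\|\mu)}\,,
\end{align*}
hence $G'(\lambda)^2+c\,G(\lambda)\le c\,\lambda\,G'(\lambda)$. Combining this with the elementary inequality $G'(\lambda)^2+c\,G(\lambda)\ge 2\sqrt{c\,G(\lambda)}\,G'(\lambda)$ (and handling $G'(\lambda)=0$ separately, where it forces $G(\lambda)\le0$) yields $G(\lambda)\le c\,\lambda^2/4$ for all $\lambda>0$. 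A Chernoff bound then gives $\mathbb{P}_\mu(f(X)>t)\le e^{-\lambda t}Z(\lambda)\le e^{c\lambda^2/4-\lambda t}$, and optimizing over $\lambda$ (at $\lambda=2t/c$) gives $\mathbb{P}_\mu(f(X)>t)\le e^{-t^2/c}$. Applying the same bound to $-f$ (again $1$-Lipschitz with zero mean) and a union bound gives $\mathbb{P}_\mu(|f(X)-\mathbb{E}_\mu[f(X)]|>t)\le 2e^{-t^2/c}$, which with $c=80\,n\,(7\varphi^{-1}(n^{-3/2}))^{2D}$ is exactly the claimed tail.

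\emph{Main difficulty.} There is no genuine obstacle here: given \autoref{thm:localind}, the corollary is a specialization together with the textbook transportation-cost-to-concentration argument. The one point that requires attention is the reduction of \autoref{def:indistinguiclass} to \autoref{def:indistingui} in the classical case — keeping careful track of which coordinates are conditioned on at each stage in $E_{YZ}\circ(E_{XYZ}-E_{XY})$ and confirming that the distance $\ell$ enters both formulations in the same way — but this is a routine calculation with conditional probability mass functions.
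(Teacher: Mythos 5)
Your proposal is correct and follows exactly the route the paper intends: the paper states this corollary without proof, relying on the specialization of \autoref{thm:localind} to the classical setting with $V_0=V$ (preceded by the remark that \autoref{def:indistinguiclass} implies \autoref{def:indistingui}) together with the Bobkov--G\"otze equivalence between $\operatorname{TC}(c)$ and the sub-Gaussian tail, which is only cited in the introduction. Your explicit verification of the constant $c=(2\sqrt{20})^2\,n\,(7\varphi^{-1}(n^{-3/2}))^{2D}=80\,n\,(7\varphi^{-1}(n^{-3/2}))^{2D}$ and your Herbst-type argument for the tail bound are both sound.
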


\section{Gaussian concentration}\label{concentration}

As mentioned before, the classical transportation cost inequalities for a measure $\mu$ is equivalent to the sub-Gaussian bounds on the tail probability of any Lipschitz function $f$ of a random variable $X$ drawn according to $\mu$. One way to see this is by using the variational formulation of the relative entropy in order to bound the Laplace transform of $f(X)$. In the non-commutative setting, this leads to the following characterization of the transportation cost constant $C(\omega)$:

\begin{prop}\label{prop:dual}
For any $\omega\in\mathcal{S}_V$,
\begin{equation}\label{eq:tildeC}
C(\omega) = 4\sup_{K\in\mathcal{O}_V}\frac{\ln\mathrm{Tr}\exp\left(K + \ln\omega\right) - \mathrm{Tr}\left[\omega\,K\right]}{\left\|K\right\|_L^2}\,,
\end{equation}
and the $\sup$ can be restricted to $K\in\mathcal{O}_V$ such that $\mathrm{Tr}\left[\omega\,K\right]=0$.
\begin{proof}
Let $\tilde{C}(\omega)$ be the right-hand side of \eqref{eq:tildeC}. On the one hand, let $K\in\mathcal{O}_V$ satisfy $\mathrm{Tr}\left[\omega\,K\right] = 0$, and let
\begin{equation}
\rho = \frac{\exp\left(K + \ln\omega\right)}{\mathrm{Tr}\exp\left(K + \ln\omega\right)} \in \mathcal{S}_V\,.
\end{equation}
We have
\begin{equation}
\ln\mathrm{Tr}\exp\left(K + \ln\omega\right) = \mathrm{Tr}\left[\rho\,K\right] - S(\rho\|\omega) \le \left\|\rho - \omega\right\|_{W_1}\left\|K\right\|_L - \frac{\left\|\rho - \omega\right\|_{W_1}^2}{C(\omega)} \le \frac{C(\omega)\left\|K\right\|_L^2}{4}\,,
\end{equation}
therefore $\tilde{C}(\omega)\le C(\omega)$.

On the other hand, let $\rho\in\mathcal{S}_V$, and let $K\in\mathcal{O}_V$ such that
\begin{equation}
\left\|K\right\|_L = \frac{2\left\|\rho - \omega\right\|_{W_1}}{\tilde{C}(\omega)}\,,\qquad \mathrm{Tr}\left[\omega\,K\right]=0\,,\qquad \mathrm{Tr}\left[\rho\,K\right] = \frac{2\left\|\rho - \omega\right\|_{W_1}^2}{\tilde{C}(\omega)}\,.
\end{equation}
We have
\begin{equation}
S(\rho\|\omega) \ge \mathrm{Tr}\left[\rho\,K\right] - \ln\mathrm{Tr}\exp\left(K + \ln\omega\right) \ge \frac{\left\|\rho - \omega\right\|_{W_1}^2}{\tilde{C}(\omega)}\,,
\end{equation}
where the last inequality follows from the definition of $\tilde{C}(\omega)$, therefore $C(\omega)\le \tilde{C}(\omega)$, and the claim follows.
\end{proof}
\end{prop}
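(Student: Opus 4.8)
The plan is to establish the two inequalities $\tilde C(\omega)\le C(\omega)$ and $C(\omega)\le\tilde C(\omega)$ separately, where $\tilde C(\omega)$ denotes the right-hand side of \eqref{eq:tildeC}, exploiting the variational (Gibbs) characterization of relative entropy: for any $K\in\mathcal{O}_V$ and any $\rho\in\mathcal{S}_V$,
\begin{align*}
\ln\mathrm{Tr}\exp(K+\ln\omega) = \sup_{\rho\in\mathcal{S}_V}\big(\mathrm{Tr}[\rho K] - S(\rho\|\omega)\big),
\end{align*}
with the supremum attained at $\rho = \exp(K+\ln\omega)/\mathrm{Tr}\exp(K+\ln\omega)$. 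This is the non-commutative analogue of the classical duality between the log-Laplace transform and relative entropy that underlies the Bobkov--G\"otze equivalence, so the overall structure mirrors the classical argument.

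For the direction $\tilde C(\omega)\le C(\omega)$: given $K$ with $\mathrm{Tr}[\omega K]=0$, set $\rho$ to be the associated Gibbs-perturbed state. Then $\ln\mathrm{Tr}\exp(K+\ln\omega) = \mathrm{Tr}[\rho K] - S(\rho\|\omega)$ exactly, and since $\mathrm{Tr}[\omega K]=0$ we may replace $\mathrm{Tr}[\rho K]$ by $\mathrm{Tr}[(\rho-\omega)K]$, which by the dual formulation of the $W_1$ norm is at most $\|\rho-\omega\|_{W_1}\|K\|_L$. Invoking $\operatorname{TC}(C(\omega))$ in the form $S(\rho\|\omega)\ge\|\rho-\omega\|_{W_1}^2/C(\omega)$ and then optimizing the scalar bound $\|\rho-\omega\|_{W_1}\|K\|_L - \|\rho-\omega\|_{W_1}^2/C(\omega)$ over the value of $\|\rho-\omega\|_{W_1}$ gives the bound $C(\omega)\|K\|_L^2/4$; dividing by $\|K\|_L^2$ and taking the supremum yields $\tilde C(\omega)\le C(\omega)$. (One should also note that perturbations $K$ with $\mathrm{Tr}[\omega K]\ne 0$ can be reduced to traceless-against-$\omega$ ones by subtracting $\mathrm{Tr}[\omega K]\,\mathbb{I}$, which leaves both numerator and $\|K\|_L$ unchanged, justifying the restriction of the sup.)

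For the reverse direction $C(\omega)\le\tilde C(\omega)$: fix $\rho\in\mathcal{S}_V$ and choose $K\in\mathcal{O}_V$ to be a scalar multiple of a Lipschitz-optimal observable for the pair $(\rho,\omega)$, normalized so that $\|K\|_L = 2\|\rho-\omega\|_{W_1}/\tilde C(\omega)$, $\mathrm{Tr}[\omega K]=0$, and $\mathrm{Tr}[\rho K] = 2\|\rho-\omega\|_{W_1}^2/\tilde C(\omega)$ — the latter being achievable because the dual formula gives an observable attaining $\mathrm{Tr}[(\rho-\omega)K]=\|\rho-\omega\|_{W_1}\|K\|_L$. Then the variational inequality $S(\rho\|\omega)\ge\mathrm{Tr}[\rho K] - \ln\mathrm{Tr}\exp(K+\ln\omega)$ combined with the definition of $\tilde C(\omega)$, which bounds $\ln\mathrm{Tr}\exp(K+\ln\omega)\le \tilde C(\omega)\|K\|_L^2/4 = \|\rho-\omega\|_{W_1}^2/\tilde C(\omega)$, yields $S(\rho\|\omega)\ge \|\rho-\omega\|_{W_1}^2/\tilde C(\omega)$, i.e. $\operatorname{TC}(\tilde C(\omega))$ holds, so $C(\omega)\le\tilde C(\omega)$.

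The one point requiring a little care — the main obstacle, such as it is — is the existence of an observable $K$ simultaneously realizing the three normalization conditions in the second part: one needs that the dual optimizer of the $W_1$ norm can be scaled and shifted (by a multiple of the identity, which does not affect $\|\cdot\|_L$) to have the prescribed Lipschitz norm, vanish in $\omega$-expectation, and pair with $\rho-\omega$ to the correct value; this is immediate from the fact that $\|\cdot\|_{W_1}$ and $\|\cdot\|_L$ are dual norms with the supremum in the definition of $\|\rho-\omega\|_{W_1}$ attained. Everything else is bookkeeping with the Gibbs variational principle.
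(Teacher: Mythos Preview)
Your proposal is correct and follows essentially the same route as the paper's proof: both directions rely on the Gibbs variational principle, with the specific perturbed state $\rho=\exp(K+\ln\omega)/\mathrm{Tr}\exp(K+\ln\omega)$ for $\tilde C(\omega)\le C(\omega)$ and a scaled dual optimizer of $\|\rho-\omega\|_{W_1}$ for the reverse inequality. Your write-up is slightly more explicit in justifying the restriction to $\mathrm{Tr}[\omega K]=0$ and the existence of the dual optimizer, but the argument is otherwise identical.
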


In the tracial setting \cite{Junge2014}, and more generally whenever $[K,\omega]=0$ the quantity $\tr\exp(K+\ln\omega)$ can be interpreted as the Laplace transform of $K$ in the state $\omega$, and therefore the equivalence between Gaussian concentration and the transportation cost inequality holds. However, this is no longer true when $K$ and $\omega$ do not commute, and the following bound can turn out to be strictly stronger to the transportation cost inequality as a consequence of the Golden-Thompson inequality: for any $K\in\mathcal{O}_V$ such that $\tr[\omega K]=0$,
\begin{align}\label{Laplacebound}
    \ln \tr\left[\omega\,e^K\right]\le  \,\frac{C'(\omega)}{4} \|K\|_L^2\,.
\end{align}
In other words, $C(\omega)'\ge C(\omega)$. Recently, bounds of the form of \eqref{Laplacebound} were obtained for some subclasses of Lipschitz observables $K$ (typically local observables) when the $\omega$ is the Gibbs state of a (possibly non-commuting) quasi-local Hamiltonian $H$ \cite{Kuwahara2020} using cluster expansion techniques. However, the existence of the Gaussian concentration inequality for general Lipschitz observables was left open.

Here instead, we pursue a different approach using our transportation cost inequality. In particular, we prove that \eqref{Laplacebound} can be approximately recovered for Gibbs states of commuting Hamiltonians for a larger class of Lipschitz observables than those considered in \cite{Kuwahara2020}. For this, we adapt the result of \cite[Theorem 8]{Rouz2019} which was written for $W_{1,\nabla}$ to the case of $W_1$. In this section, we denote by $X_R$, respectively $X_I$ the real, respectively imaginary parts of an operator $X\in\mathcal{B}_V$. Given an observable $O\in \mathcal{O}_V$ with spectral decomposition $O:=\sum_{\lambda}\lambda P_\lambda$, a state $\omega\in\mathcal{S}_V$ and a real number $r\in\mathbb{R}$, we denote by
\begin{align}
    \mathbb{P}_\omega(O\ge r):=\sum_{\lambda\ge r}\tr(\omega P_\lambda)
\end{align}
the probability of getting an eigenvalue $\lambda\ge r$ when measuring $O$ on the state $\omega$.

\begin{thm}\label{thmGaussiantail}
Assume that the full-rank state $\omega\in\mathcal{S}_V$ satisfies $\operatorname{TC}(c)$ for some $c>0$. Then, for any observable $O\in\mathcal{O}_V$,
\begin{align}
    \mathbb{P}_{\omega}\big(|O- \tr({\omega O})\,\mathbb{I}|\ge r \big)\le 2\,\exp\left( -\frac{r^2}{4c\max\big\{\|(\omega^{-\frac{1}{2}}O\omega^{\frac{1}{2}})_R\|_{L}^2,\|(\omega^{-\frac{1}{2}}O\omega^{\frac{1}{2}})_I\|_{L}^2\big\}}\right)\,.
\end{align}
Whenever $[O,\omega]=0$ the bound can be tightened into
\begin{align}\label{commutingcaseconcentration}
        \mathbb{P}_{\omega}\big(|O- \tr({\omega O})\,\mathbb{I}|\ge r \big)\le 2\,\exp\left( -\frac{r^2}{c \|O\|_{L}^2}\right)\,.
\end{align}
Therefore, whenever $\omega\in\mathcal{S}_V$ corresponds to the Gibbs state of a local commuting Hamiltonian on a hypergraph at inverse temperature $\beta$, the above bounds hold as long as $0<\beta< \beta_c$ where $\beta_c$ is defined in \eqref{boundbetac}. 
\end{thm}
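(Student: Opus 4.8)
The plan is to run the quantum analogue of Marton's original argument: bound a tail probability by conditioning $\omega$ on the relevant spectral subspace, control both the quantum $W_1$ distance and the relative entropy of the conditioned state to $\omega$, and insert these into $\operatorname{TC}(c)$. Write $\mu:=\tr[\omega O]$ and $O':=O-\mu\,\mathbb I$, so that $\tr[\omega O']=0$ and $O-\tr(\omega O)\,\mathbb I=O'$; fix $r>0$. Let $\Pi$ be the spectral projection of $O'$ onto $[r,\infty)$, set $p:=\tr[\omega\Pi]=\mathbb P_\omega(O'\ge r)$ (we may assume $p>0$, else there is nothing to prove), and define the conditioned state $\sigma:=\omega^{1/2}\Pi\,\omega^{1/2}/p\in\mathcal S_V$. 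The argument rests on two facts about $\sigma$: $(a)$ $S(\sigma\|\omega)\le-\ln p$; and $(b)$ the self-adjoint observable $Y:=\tfrac12\big(\omega^{-1/2}O'\omega^{1/2}+\omega^{1/2}O'\omega^{-1/2}\big)$, the modular symmetrisation of $O'$ (this is where the full-rank hypothesis on $\omega$ enters), satisfies $\tr[\omega Y]=0$, $\tr[\sigma Y]\ge r$, and $\|Y\|_L=\|(\omega^{-1/2}O\omega^{1/2})_R\|_L$.

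For $(a)$: since $0\le\Pi\le\mathbb I$ and $X\mapsto\omega^{1/2}X\omega^{1/2}$ is order-preserving, $\sigma\le\omega/p$; replacing $\sigma$ by $(1-\varepsilon)\sigma+\varepsilon\omega$ (still $\le\omega/p$ because $p\le1$, and now full rank) and using operator monotonicity of $\ln$ gives $\ln\big((1-\varepsilon)\sigma+\varepsilon\omega\big)\le\ln\omega+(\ln p^{-1})\mathbb I$, hence a relative entropy $\le-\ln p$; letting $\varepsilon\to0$ yields $(a)$. For $(b)$: $Y$ is manifestly self-adjoint, $\tr[\omega Y]=\tr[\omega O']=0$ by cyclicity, and since $\omega^{1/2}Y\omega^{1/2}=\tfrac12(O'\omega+\omega O')$ and $[\Pi,O']=0$,
\begin{equation*}
\tr[\sigma Y]=\frac1p\,\tr\big[\Pi\,\omega^{1/2}Y\omega^{1/2}\big]=\frac1p\,\tr\big[\Pi\,\tfrac12(O'\omega+\omega O')\big]=\frac1p\,\tr[\omega\,O'\Pi]\;\ge\;\frac{r}{p}\,\tr[\omega\Pi]=r,
\end{equation*}
the inequality being $O'\Pi\ge r\Pi$ together with $\omega\ge0$. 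Finally $Y=(\omega^{-1/2}O\omega^{1/2})_R-\mu\,\mathbb I$, and $\|\cdot\|_L$ is invariant under adding multiples of $\mathbb I$, so $\|Y\|_L=\|(\omega^{-1/2}O\omega^{1/2})_R\|_L$; note that when $[O,\omega]=0$ one simply has $Y=O'$ and $\|Y\|_L=\|O\|_L$.

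Combining: by the dual (Lipschitz) formulation of $\|\cdot\|_{W_1}$, $r\le\tr[(\sigma-\omega)Y]\le\|\sigma-\omega\|_{W_1}\|Y\|_L$, while $\operatorname{TC}(c)$ and $(a)$ give $\|\sigma-\omega\|_{W_1}\le\sqrt{c\,S(\sigma\|\omega)}\le\sqrt{-c\ln p}$; hence $r^2\le -c\,\|Y\|_L^2\ln p$, i.e.\ $\mathbb P_\omega(O'\ge r)=p\le\exp\big(-r^2/(c\,\|(\omega^{-1/2}O\omega^{1/2})_R\|_L^2)\big)$. Applying this to $-O'$ (whose modular symmetrisation is $-Y$, with the same $\|\cdot\|_L$) and taking a union bound over the disjoint events $\{O'\ge r\}$ and $\{-O'\ge r\}$ gives $\mathbb P_\omega\big(|O-\tr(\omega O)\mathbb I|\ge r\big)\le 2\exp\big(-r^2/(c\,\|(\omega^{-1/2}O\omega^{1/2})_R\|_L^2)\big)$. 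Since $\|(\omega^{-1/2}O\omega^{1/2})_R\|_L^2\le 4\max\{\|(\omega^{-1/2}O\omega^{1/2})_R\|_L^2,\|(\omega^{-1/2}O\omega^{1/2})_I\|_L^2\}$, this implies the first claimed inequality (in fact with a better constant), and when $[O,\omega]=0$, where $Y=O'$, it becomes exactly \eqref{commutingcaseconcentration}. I do not expect a serious obstacle here: the delicate points are $(a)$, a by-now-standard bound on the relative entropy of a state conditioned on a projector, and the realisation in $(b)$ that, because $\sigma$ is \emph{not} supported on the range of $\Pi$ in the non-commuting case, one must test against the modular-symmetrised observable $Y$ rather than against $O'$ itself; everything else is cyclicity of the trace and the dual formulation of $W_1$.

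For the final assertion: when $\omega=\omega_\beta$ is the Gibbs state of a local commuting Hamiltonian on a hypergraph at inverse temperature $0<\beta<\beta_c$, with $\beta_c$ as in \eqref{boundbetac}, \autoref{prop:curvaturebetac} furnishes $\kappa(\beta)>0$ for which the contraction \eqref{contraction2} holds, and \autoref{thmTCcurvature} then gives $C(\omega_\beta)\le 2nN^2/(1-e^{-\kappa})^2<\infty$; as Gibbs states are full rank, $\omega_\beta$ satisfies $\operatorname{TC}(c)$ with this finite $c$, so the first part of the theorem applies and both concentration bounds follow.
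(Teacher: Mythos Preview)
Your proof is correct, and it takes a genuinely different route from the paper's. The paper proceeds via a Herbst--Bobkov--G\"otze argument: starting from the bound $|\tr(\rho X)|\le 2\sqrt{c\,S(\rho\|\omega)}$ for $X$ with $\|X_R\|_L,\|X_I\|_L\le 1$, it passes through the maximal divergence $\widehat S(\rho\|\omega)$, specialises to $\rho=\omega^{1/2}e^{\theta O}\omega^{1/2}/\tr(\omega e^{\theta O})$ and $X=\sqrt2\,\omega^{-1/2}O\omega^{1/2}$, and arrives at a log--Laplace bound $\ln\tr(\omega e^{\theta O})\le \frac{c}{2}\theta^2$; the tail bound then follows by Chernoff and optimisation over $\theta$. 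Your argument is instead Marton's original one: condition $\omega$ on the spectral event $\{O'\ge r\}$ via the state $\sigma=\omega^{1/2}\Pi\omega^{1/2}/p$, bound $S(\sigma\|\omega)\le -\ln p$, and insert into $\operatorname{TC}(c)$ after testing against the modular-symmetrised observable $Y=(\omega^{-1/2}O\omega^{1/2})_R-\mu\mathbb I$.

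What each buys: your approach is shorter and more elementary (no detour through $\widehat S$, no $\theta$-optimisation), and in fact yields the sharper bound $2\exp\big(-r^2/(c\,\|(\omega^{-1/2}O\omega^{1/2})_R\|_L^2)\big)$, involving only the real part and no factor $4$; the paper's stated inequality is a weakening of this. The paper's approach, on the other hand, produces as a by-product the moment-generating-function bound $\ln\tr(\omega e^{\theta O})\le \frac{c}{2}\theta^2$, which is a stronger functional inequality than the tail bound alone and can be reused elsewhere (e.g.\ as done in \autoref{coro:equivalence}). Both routes handle the commuting case $[O,\omega]=0$ identically, recovering \eqref{commutingcaseconcentration}, and both invoke \autoref{prop:curvaturebetac} plus \autoref{thmTCcurvature} for the final assertion about commuting Gibbs states.
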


\begin{proof}
Given $X\in\mathcal{B}_V$, we denote by $X:=X_R+iX_I$ its decomposition onto real and imaginary parts. We also assume that $\tr(\omega X)=0$ and $\|X_R\|_L,\|X_I\|_L\le 1$. By assumption, we have that for any $\rho\in\mathcal{S}_V$
\begin{align}
    \left|\tr(\rho X)\right|\le \left|\tr(\rho X_R)\right|+\left|\tr(\rho X_I)\right|\le 2\left\|\rho-\omega\right\|_{W_1}\le 2\sqrt{c\,S(\rho\|\omega)}\,.
\end{align}
 Then, since $\inf_{\theta>0}\Big(\frac{a}{\theta}+\frac{b\theta}{2}\Big)=\sqrt{2ab}$ for any $a,b\ge 0$, we have that for all $\theta>0$:
 \begin{align}
     \big|\tr(\rho X)\big|\le \sqrt{2}\,\frac{S(\rho\|\omega)}{\theta}+\frac{c\,\theta}{\sqrt{2}}\qquad\Leftrightarrow\qquad \theta\big|\tr(\rho X)\big|- \frac{c}{\sqrt{2}}\,\theta^2\le \sqrt{2}\,S(\rho\|\omega)\,.
 \end{align}
Next, we further upper bound the relative entropy in terms of the maximal divergence $\widehat{S}(\rho\|\omega):=\tr\big[\omega\big(\omega^{-\frac{1}{2}}\rho\omega^{-\frac{1}{2}}\big)\ln(\omega^{-\frac{1}{2}}\rho\omega^{-\frac{1}{2}}) \big]$ \cite{matsumoto2015new}. Choosing $\rho=\omega^{\frac{1}{2}}e^{\theta O}\omega^{\frac{1}{2}}/\tr(\omega e^{\theta O})$ for some observable $O\in\mathcal{O}_V$, we arrive at
\begin{align}
    \theta\big|\tr(\rho X)\big|- \frac{c}{\sqrt{2}}\,\theta^2\le \sqrt{2}\,\theta \frac{\tr(\omega e^{\theta O}O)}{\tr(\omega e^{\theta O})}-\sqrt{2}\ln\big(\tr(\omega e^{\theta O})\big)\,.
\end{align}
Next, we choose $X=\sqrt{2}\omega^{-\frac{1}{2}}O\omega^{\frac{1}{2}}$, so that the previous inequality reduces to 
\begin{align}
\ln\big(\tr(\omega e^{\theta O})\big)\le \frac{c}{2}\,\theta^2\,.
\end{align}
The above inequality can be interpreted as a bound on the log-Laplace transform of the non-commutative variable $O$ in the state $\omega$. By a use of Markov's inequality followed by an optimization over the variable $\theta>0$, we finally get
\begin{align}
    \mathbb{P}_\omega\big(\big|O\big|\ge r\big)\le 2e^{-\frac{r^2}{2c}}\,.
\end{align}
The result follows after simple rescalings. The tightening in the case of an observable commuting with $\omega$ can be found by following the same steps as the ones above. 
\end{proof}

In general, there is no way to precisely relate the Lipschitz constants of the real and imaginary parts of  $\omega^{-\frac{1}{2}}O\omega^{\frac{1}{2}}$ to the Lipschitz constant of $O$ when $[O,\omega]\ne 0$. In the next result, we however prove that the constants have similar scalings in the case of a commuting Gibbs measure $\omega$ of a local Hamiltonian. 

\begin{lem}
Let $O=\sum_{A\subseteq V} \lambda_A\,O_A\otimes \mathbb{I}_{A^c}$ be the decomposition of an observable $O$ in $\mathcal{O}_V$, where for each subregion $A$, $O_A$ is exactly supported in $A$ with $\|O_A\|_\infty\le 1$, and $\lambda_A\in\mathbb{R}$. Let further $\omega$ be the Gibbs state of a geometrically $k$-local, commuting Hamiltonian $H_V:=\sum_{B\subset V}h_B\otimes\mathbb{I}_{B^c}$ at inverse temperature $\beta$. Then,
\begin{align}
    \|(\omega^{-\frac{1}{2}}O\omega^{\frac{1}{2}})_{R}\|_{L}\,,~~   \|(\omega^{-\frac{1}{2}}O\omega^{\frac{1}{2}})_{I}\|_{L}\le 4\max_{i\in V}\,\sum_{\substack{A\subset V\\A_{\partial k}\ni i}}\,|\lambda_A|\,\exp\Big(\beta\sum_{B\cap A\ne \emptyset}\|h_B\|_\infty\Big)\,,
\end{align}
where $A_{\partial k}:=\{j\in V:\, \operatorname{dist}(j,A)\le k\}$ denotes the $k$-enlargement of $A$. In particular, whenever the state $\omega$ satisfies $\operatorname{TC}(c)$ with $c=O(n)$, any local observable $O$ gives rise to a sub-Gaussian random variable variance $O(n)$ when measured in the state $\omega$.
\end{lem}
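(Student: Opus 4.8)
The plan is to use the commutativity of $H_V$ to show that $\omega^{-1/2}O\,\omega^{1/2}$ is again a sum of operators each strictly supported on a small set, and then to bound its Lipschitz constant term by term.

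\emph{Step 1: reduction and localization.} Since $\omega \propto e^{-\beta H_V}$, the normalization cancels and $\omega^{-1/2}O\,\omega^{1/2} = e^{\beta H_V/2}\,O\,e^{-\beta H_V/2} = \sum_{A\subseteq V}\lambda_A\,e^{\beta H_V/2}(O_A\otimes\mathbb{I}_{A^c})e^{-\beta H_V/2}$. Fix $A$ and split $H_V = H_A^{(1)} + H_A^{(2)}$ with $H_A^{(1)} := \sum_{B:\,B\cap A\neq\emptyset}h_B$ and $H_A^{(2)} := \sum_{B:\,B\cap A=\emptyset}h_B$. Because all $h_B$ commute, $e^{\pm\beta H_V/2} = e^{\pm\beta H_A^{(1)}/2}e^{\pm\beta H_A^{(2)}/2}$; and since every $h_B$ occurring in $H_A^{(2)}$ is supported in $A^c$, $e^{\beta H_A^{(2)}/2}$ commutes with $O_A\otimes\mathbb{I}_{A^c}$ and cancels against $e^{-\beta H_A^{(2)}/2}$, leaving
\[
e^{\beta H_V/2}(O_A\otimes\mathbb{I}_{A^c})e^{-\beta H_V/2} = e^{\beta H_A^{(1)}/2}(O_A\otimes\mathbb{I}_{A^c})e^{-\beta H_A^{(1)}/2} =: T_A\,.
\]
By geometric $k$-locality every hyperedge $B$ meeting $A$ lies in $A_{\partial k}$, so $H_A^{(1)}$, and hence $T_A$, is supported in $A_{\partial k}$.

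\emph{Step 2: norm and Lipschitz bounds.} From $\|O_A\|_\infty\le1$ and $\|e^{\pm\beta H_A^{(1)}/2}\|_\infty\le \exp\!\big(\tfrac{\beta}{2}\sum_{B\cap A\neq\emptyset}\|h_B\|_\infty\big)$ one gets $\|T_A\|_\infty\le\exp\!\big(\beta\sum_{B\cap A\neq\emptyset}\|h_B\|_\infty\big)$. Passing to Hermitian/anti-Hermitian parts preserves both the support in $A_{\partial k}$ and this operator-norm bound, since $T_A$ and $T_A^\dagger$ have the same support: $(T_A)_R,(T_A)_I$ are self-adjoint, supported in $A_{\partial k}$, with $\|(T_A)_R\|_\infty,\|(T_A)_I\|_\infty\le\|T_A\|_\infty$. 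Now $(\omega^{-1/2}O\,\omega^{1/2})_R = \sum_A\lambda_A(T_A)_R$, and for any site $i$ the operator $\sum_{A:\,A_{\partial k}\not\ni i}\lambda_A(T_A)_R$ does not act on $i$, so by the definition of $\|\cdot\|_L$,
\[
\big\|(\omega^{-1/2}O\,\omega^{1/2})_R\big\|_L \le 2\max_{i\in V}\Big\|\sum_{A:\,A_{\partial k}\ni i}\lambda_A(T_A)_R\Big\|_\infty \le 2\max_{i\in V}\sum_{A:\,A_{\partial k}\ni i}|\lambda_A|\exp\!\Big(\beta\!\!\sum_{B\cap A\neq\emptyset}\!\!\|h_B\|_\infty\Big)\,,
\]
and the same bound holds for the imaginary part; since $2\le4$ this is the claim. (One could instead subtract $\tfrac{\mathbb{I}_i}{d}\otimes\mathrm{Tr}_i$ of the truncated sum and invoke \autoref{prop15}, but the crude choice above already suffices.)

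\emph{Step 3: the concentration statement, and the main difficulty.} Combining with \autoref{thmGaussiantail}: if $\omega$ obeys $\operatorname{TC}(c)$ with $c=O(n)$ then $\mathbb{P}_\omega(|O-\mathrm{Tr}(\omega O)\mathbb{I}|\ge r)\le 2\exp\!\big(-r^2/(4c\,\ell_O^2)\big)$, where $\ell_O:=\max\{\|(\omega^{-1/2}O\,\omega^{1/2})_R\|_L,\|(\omega^{-1/2}O\,\omega^{1/2})_I\|_L\}$; for a genuinely local $O$ (boundedly many $A$'s, all inside a bounded region, $\sum_A|\lambda_A|=O(1)$) and fixed $\beta$, Step 2 yields $\ell_O=O(1)$, whence the tail is $2\exp(-r^2/O(n))$, i.e.\ $O$ is sub-Gaussian with variance $O(n)$ under $\omega$. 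The only step that is not routine is the localization in Step 1: it is exactly the commutativity of $H_V$ that forces the conjugation $e^{\beta H_V/2}(\cdot)e^{-\beta H_V/2}$ to collapse onto $A_{\partial k}$ with the explicit exponential weight $\exp(\beta\sum_{B\cap A\neq\emptyset}\|h_B\|_\infty)$; for a non-commuting Hamiltonian the conjugated operator is only \emph{quasi}-local (Lieb--Robinson), and neither the strict support nor the clean exponential bound would survive. A secondary point, trivial here but worth noting, is that taking real and imaginary parts does not enlarge either the support or the operator norm.
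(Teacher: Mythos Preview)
Your proof is correct and follows essentially the same approach as the paper: the key localization step---using commutativity of the $h_B$ to reduce $\omega^{-1/2}O_A\,\omega^{1/2}$ to a conjugation by $e^{\beta H_A^{(1)}/2}$ supported in $A_{\partial k}$, with norm bounded by $\exp(\beta\sum_{B\cap A\neq\emptyset}\|h_B\|_\infty)$---is identical to the paper's. The only difference is in how you bound $\|\cdot\|_L$: you take $H^{(i)}=\sum_{A:\,A_{\partial k}\not\ni i}\lambda_A(T_A)_R$ directly and get constant $2$, whereas the paper subtracts $\frac{\mathbb{I}_i}{d}\otimes\mathrm{Tr}_i$ (the route you mention parenthetically) and then applies the triangle inequality, picking up an extra factor of $2$ to land at the stated constant $4$.
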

\begin{proof}
We prove the bound for the real part of $(\omega^{-\frac{1}{2}}O\omega^{\frac{1}{2}})_{R}$ since the proof for the imaginary part follows the exact same reasoning. First, by \autoref{prop15}, since for any $A\subset V$, $(\sigma^{-\frac{1}{2}}O_A\sigma^{\frac{1}{2}})_R$ is supported in region $A_{\partial k}$, we have that
\begin{align}
     \|(\omega^{-\frac{1}{2}}O\omega^{\frac{1}{2}})_{R}\|_{L}&\le 2\,\max_{i\in V}\,\Big\|\sum_{\substack{A\subset V\\ A_{\partial k}\ni i}}  \lambda_A\,\Big[(\omega^{-\frac{1}{2}}O_A\omega^{\frac{1}{2}})_R-\tr_i\big[(\omega^{-\frac{1}{2}}O_A\omega^{\frac{1}{2}})_R \big]\otimes \frac{\mathbb{I}_d}{d} \Big]\Big\|_\infty\\
     &\le 4\,\max_{i\in V}\sum_{\substack{A\subset V\\A_{\partial k\ni i}}}\,|\lambda_A|\,\|\omega^{-\frac{1}{2}}O_A\omega^{\frac{1}{2}}\|_\infty\\
     &\le 4\,\max_{i\in V}\sum_{\substack{A\subset V\\A_{\partial k\ni i}}}\,|\lambda_A|\,\|e^{\beta\sum_{B\cap A\ne \emptyset}h_B}O_Ae^{-\beta\sum_{B\cap A\ne \emptyset}h_B}\|_\infty\\
  &   \le 4\,\max_{i\in V}\sum_{\substack{A\subset V\\A_{\partial k\ni i}}}\,|\lambda_A|\,e^{\beta\sum_{B\cap A\ne \emptyset}\|h_B\|_\infty}\,.
\end{align}
The result follows. 
\end{proof}

\subsection{Comparison to previous tail bounds}

Our main result can be compared to other recently derived concentration bounds for quantum Gibbs states: in \cite[Corollary 5.4]{Anshu2016}, the authors consider a product state $\rho=\bigotimes_{v\in V}\rho_v\in \mathcal{S}_V$ as well as a Hamiltonian $H=\sum_{A\in E_{k,m}}h_A$, where the set $E_{k,m}$ of subsets of $V$ has the following properties: for any $A\in E_{k,m}$, 
\begin{itemize}
\item[(i)] $|A|\le k$; 
\item[(ii)] $\big|\{A'\in E_{k,m}:\,A\cap A'\ne\emptyset\}\big|\le m$.
\end{itemize}
With these conditions, he was able to prove that 
\begin{align*}
    \mathbb{P}_{\rho}\big(|H-\tr(\rho H)|\ge r\big)\le 2e^{-\frac{r^2}{4eN^3kn}}\,,
\end{align*}
where number $N:=\max_{v\in V}\sum_{A\in E_{k,m}|v\in A}\,1$ is the number of local terms acting non-trivially on spin $v$. A similar bound was previously derived by Kuwahara \cite[Theorem 7]{Kuwahara2016}, under a notion of $g$-extensivity: a local Hamiltonian $H$ is said to be $g$-extensive if for every spin $v$, $\sum_{A\in E_{k,m}|\,v\in A} \|h_A\|_\infty\le g$. Under this condition, he shows that 
\begin{align*}
    \mathbb{P}_\rho\big( |H-\tr(\rho H)|\ge r\big)=\mathcal{O}(1)\,e^{-\frac{r^2}{cn\log(\frac{r}{\sqrt{n}})}}\,,
\end{align*}
where $c$ is a $\mathcal{O}(1)$ constant which depends only on $k$ and $g$. Although these results recover the Gaussian tails of our \autoref{thmGaussiantail} (up to logarithmic overheads), they only work for tensor product states and a subclass of Lipschitz observables. In particular, the tails become trivial whenever the Hamiltonian is a sum of terms acting on non-intersecting regions $A$ of arbitrary size. In contrast, our bound is still non-trivial for this class of observables, since their Lipschitz constant is still $\mathcal{O}(1)$. 

More recently, Kuwahara and Saito derived new concentration bounds for Gibbs states of interacting Hamiltonians in order to study the problem of equivalence of quantum statistical ensembles \cite{kuwahara2020eigenstate,Kuwahara2020} (see \autoref{ETH}): in \cite{kuwahara2020eigenstate} first, the authors consider a Gibbs state $\omega$ of a local Hamiltonian on a $D$-dimensional regular lattice $\mathbb{Z}^D$. They further assume the following $(r_0,\xi)$ clustering: for any operators $O_A,O_B$ supported on the subsets $A$ and $B$,
\begin{align}\label{clustering}
    |\tr(\omega O_AO_B)-\tr(\omega O_A)\tr(\omega O_B)|\le \|O_A\|_\infty\,\|O_B\|_\infty\,e^{-\operatorname{dist}(A,B)/\xi},
\end{align}
whenever $\operatorname{dist}(A,B)\ge r_0$. Under this condition, they were able to show in Equation (S.17) (see also \cite[Theorem 4.2]{Anshu2016} for a similar bound)
\begin{align*}
    \mathbb{P}_\omega\big(|O-\tr(\omega O)|\ge r\big)\le \min\big\{1,(e+3e\xi)\max\big(e^{-(r^2/(cn))^{\frac{1}{D+1}}},\,e^{-\frac{r^2}{c' \ell^Dn}}\big)\big\}\,,
\end{align*}
for some constants $c,c'$ further depending on $\xi$ and $D$, and where $\ell$ denotes the locality of the observable $O$. Therefore, and although the clustering of correlations is known to hold at high enough temperature \cite{Kliesch14}, the bound is suboptimal for two reasons: firstly, whenever $r$ is small enough, the exponent has the worse scaling $r^{2/(D+1)}$. Secondly, the bounds baddly dependence on the locality of $O$, and becomes trivial whenever $O$ is a sum of highly non-local terms. This second limitation also holds for the Gaussian concentration bound found in \cite[Corollary 1]{Kuwahara2020} for 
high-temperature Gibbs states of Hamiltonians with long-range interactions. In comparison to the works cited above, our bound always provides better dependence of the tail on the locality of the observable, albeit under the condition that the Hamiltonian is made of local commuting terms.

\section{Equivalence of statistical mechanical ensembles}\label{ETH}
The three main ensembles employed in quantum statistical mechanics to compute the equilibrium properties of quantum systems are the canonical ensemble, the microcanonical ensemble and the diagonal ensemble.
The quantum state associated to the canonical ensemble is the Gibbs state, which describes the physics of a system that is at thermal equilibrium with a large bath at a given temperature.
The diagonal and microcanonical ensembles both describe the physics of an isolated quantum system, and the associated states are convex combinations of the eigenstates of the Hamiltonian.
The microcanonical ensemble assumes a uniform probability distribution for the energy in a given energy shell.
The diagonal ensemble includes all the states that are diagonal in the eigenbasis of the Hamiltonian, and in particular it includes the eigenstates themselves.

For many quantum systems, the microcanonical and canonical ensembles give the same expectation values for local observables if the corresponding states have approximately the same average energy.
A lot of effort has been devoted to determining conditions under which the two ensembles are equivalent \cite{lima1971equivalence,lima1972equivalence,muller2015thermalization,brandao_equivalence_2015}.
The most prominent among such conditions are short ranged interactions and a finite correlation length, but analytical proofs can be obtained only in the case of regular lattices \cite{brandao_equivalence_2015}.
The situation is more complex for the diagonal ensemble.
The condition under which this ensemble is equivalent to the microcanonical and canonical ensembles is called Eigenstate Thermalization Hypothesis (ETH) \cite{deutsch1991quantum,srednicki_chaos_1994,gogolin2016equilibration,reimann2018dynamical,de2015necessity}, and states that the expectation values of local observables on the eigenstates of the Hamiltonian are a smooth function of the energy, \emph{i.e.}, for any given local observable, \emph{any} two eigenstates with approximately the same energy yield approximately the same expectation value.
The ETH is an extremely strong condition on the Hamiltonian and several quantum systems, including all integrable systems, do not satisfy it.
A weak version of the ETH has been formulated \cite{biroli2010effect,kuwahara2020eigenstate}, stating that for any given local observable, \emph{most} eigenstates in an energy shell yield approximately the same expectation value, or, more precisely, that the fraction of eigenstates yielding expectation values far from the Gibbs state with the same average energy vanishes in the thermodynamical limit.
The weak ETH implies the equivalence between the canonical and microcanonical ensembles, but is not sufficient to prove their equivalence with the diagonal ensemble.
Under the hypothesis of finite correlation length in the Gibbs state, an analytical proof of the weak ETH is available only for regular lattices \cite{kuwahara2020eigenstate}.

A connection between a transportation cost inequality and the ETH was made by one of the authors in the case of a regular lattice and a nearest neighbour Hamiltonian \cite{capel2020modified}.
Here we look at the general problem of the equivalence of the statistical mechanical ensembles and of the weak ETH from the perspective of optimal mass transport, and show that such equivalence can be formulated as closeness of the respective states in the $W_1$ distance.
The closeness in the $W_1$ distance implies closeness of the expectation values of all Lipschitz observables, which constitute a significantly larger class than local observables.
Therefore, the perspective of optimal mass transport can significantly extend the previous results.
Moreover, we will show that the equivalence of the ensembles is intimately linked to the constant of the transportation cost inequality for the Gibbs states.

As in the rest of the paper, we consider a quantum system made by $n$ qudits located at the vertices of a graph with vertex set $V$.
Let us assume that a Gibbs state $\omega\in\mathcal{S}_V$ satisfies the transportation cost inequality with a constant
\begin{equation}
    C(\omega) \le n\,C\,,
\end{equation}
where $C$ does not depend on $n$.
This condition is satisfied under the hypotheses of \autoref{thm:Markov}, \autoref{thmTCcurvature} or \autoref{thm:LS}.
We stress that, contrarily to the results of Refs. \cite{brandao_equivalence_2015,kuwahara2020eigenstate}, the condition does not require us to restrict to regular lattices, since \autoref{thmTCcurvature} does not need this hypothesis.
The following \autoref{prop:equiv} implies that any state $\rho\in\mathcal{S}_V$ is close in $W_1$ distance to the Gibbs state $\omega$ with the same average energy, provided that $\rho$ and $\omega$ have approximately the same entropy, \emph{i.e.},
\begin{equation}
    S(\omega) - S(\rho) \ll n\,.
\end{equation}
Moreover, under the same hypothesis, the average reduced states over one qudit of $\rho$ and $\omega$ are close in trace distance.
\begin{prop}\label{prop:equiv}
Let $\omega\in\mathcal{S}_V$ be a Gibbs state for the Hamiltonian $H\in\mathcal{O}_V$.
Then, any quantum state $\rho\in\mathcal{S}_V$ with the same average energy as $\omega$ satisfies
\begin{equation}
    \frac{1}{n}\left\|\rho-\omega\right\|_{W_1} \le \sqrt{C\,\frac{S(\omega) - S(\rho)}{n}}\,.
\end{equation}
Moreover, let $\Lambda:\mathcal{O}_V\to\mathcal{O}(\mathbb{C}^d)$ be the quantum channel that computes the average marginal state over one qudit, \emph{i.e.}, for any $\rho\in\mathcal{S}_V$,
\begin{equation}
    \Lambda(\rho) = \frac{1}{n}\sum_{v\in V}\rho_v\,.
\end{equation}
Then,
\begin{equation}
    \left\|\Lambda(\rho) - \Lambda(\omega)\right\|_1 \le 2\sqrt{C\,\frac{S(\omega) - S(\rho)}{n}}\,.
\end{equation}
\end{prop}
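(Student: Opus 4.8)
The plan is to reduce the statement to two facts already available: the variational characterization of the Gibbs state as the maximizer of entropy at fixed energy, which turns $S(\rho\|\omega)$ into an entropy difference, and the assumed transportation cost inequality $C(\omega)\le nC$; for the second claim we additionally invoke the single-site lower bound on $W_1$ from \autoref{cor1}.

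First I would record the relative-entropy identity for Gibbs states. Writing $\omega = e^{-\beta H}/Z$ with $Z=\mathrm{Tr}\,e^{-\beta H}$, we have $\ln\omega = -\beta H - (\ln Z)\,\mathbb{I}$, hence for any $\rho\in\mathcal{S}_V$
\begin{equation}
S(\rho\|\omega) = -S(\rho) + \beta\,\mathrm{Tr}[\rho H] + \ln Z\,.
\end{equation}
Taking $\rho=\omega$ gives $0 = -S(\omega) + \beta\,\mathrm{Tr}[\omega H] + \ln Z$, and subtracting yields
\begin{equation}
S(\rho\|\omega) = S(\omega) - S(\rho) + \beta\big(\mathrm{Tr}[\rho H] - \mathrm{Tr}[\omega H]\big)\,.
\end{equation}
In particular, when $\rho$ has the same average energy as $\omega$ the last term vanishes (for any $\beta$), so $S(\rho\|\omega) = S(\omega)-S(\rho)$. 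Plugging this into the hypothesis $\left\|\rho-\omega\right\|_{W_1}\le\sqrt{C(\omega)\,S(\rho\|\omega)}$ with $C(\omega)\le nC$ and dividing by $n$ gives the first bound.

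For the second claim, observe that $\Lambda(\rho)-\Lambda(\omega) = \tfrac1n\sum_{v\in V}(\rho_v-\omega_v)$, so by the triangle inequality for the trace norm $\left\|\Lambda(\rho)-\Lambda(\omega)\right\|_1 \le \tfrac1n\sum_{v\in V}\left\|\rho_v-\omega_v\right\|_1$. By \autoref{cor1} the right-hand side is at most $\tfrac{2}{n}\left\|\rho-\omega\right\|_{W_1}$, and combining with the first bound finishes the proof. There is essentially no obstacle here: the only nonroutine point is the entropy identity, and the rest is the triangle inequality plus the two cited results; the argument is robust to whichever inverse temperature $\beta$ is attached to $\omega$, since the equal-energy hypothesis kills the $\beta$-dependent term.
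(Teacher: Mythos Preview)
Your proof is correct and follows essentially the same approach as the paper: both use the identity $S(\rho\|\omega)=S(\omega)-S(\rho)$ for equal-energy states (the paper phrases this as $\mathrm{Tr}[\rho\ln\omega]=\mathrm{Tr}[\omega\ln\omega]$), apply the transportation cost inequality with $C(\omega)\le nC$, and then invoke \autoref{cor1} together with the triangle inequality for the second claim. Your write-up is just slightly more explicit in deriving the entropy identity.
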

\begin{proof}
We have from the transportation cost inequality
\begin{equation}
    \left\|\rho-\omega\right\|_{W_1} \le \sqrt{C(\omega)\,S(\rho\|\omega)} = \sqrt{C(\omega)\left(S(\omega) - S(\rho)\right)}\,,
\end{equation}
where the last equality follows since $\mathrm{Tr}\left[\rho\ln\omega\right] = \mathrm{Tr}\left[\omega\ln\omega\right]$.

We have from \autoref{cor1}
\begin{align}
    \left\|\Lambda(\rho) - \Lambda(\omega)\right\|_1 \le \frac{1}{n}\sum_{v\in V}\left\|\rho_v - \omega_v\right\|_1 \le \frac{2}{n}\left\|\rho - \omega\right\|_{W_1} \le \frac{2}{n}\sqrt{C(\omega)\left(S(\omega) - S(\rho)\right)}\,,
\end{align}
and the claim follows.
\end{proof}

Choosing $\rho$ to be diagonal in the eigenbasis of the Hamiltonian, \autoref{prop:equiv} implies that any convex combination of a sufficiently large number of eigenstates is close in $W_1$ distance to the Gibbs state with the same average energy.
Such number of eigenstates can even be an exponentially small fraction of the total number of eigenstates appearing in a microcanonical state, since the uniform superposition of a fraction $e^{-n\epsilon}$ of the eigenstates decreases the entropy by $\epsilon\,n$.
Therefore, \autoref{prop:equiv} constitutes an exponential improvement over the weak ETH.

A natural question is whether also the strong ETH can be captured by the $W_1$ distance.
Unfortunately the answer is negative.
Indeed, proving the strong ETH via optimal mass transport would mean to prove that all the eigenstates of the Hamiltonian are close in $W_1$ distance to the Gibbs states with the corresponding average energy.
However, \autoref{thm1} implies that any state with low entropy, and in particular any pure state, is far from any state with large entropy, and in particular from a Gibbs state with temperature $\Omega(1)$.
More precisely, for any two states $\rho,\,\omega\in\mathcal{S}_V$,
\begin{equation}\label{eq:S}
    \left\|\rho-\omega\right\|_{W_1} \ge \frac{S(\omega) - S(\rho) - \ln\left(n+1\right) - 1}{\ln\left(d^2n\right)}\,.
\end{equation}
Equation \eqref{eq:S} also implies that any quantum state which is close in $W_1$ distance to the Gibbs state with the same average energy must have approximately also the same entropy, and in this sense \autoref{prop:equiv} is optimal.

\subsection{Comparison with previous results}
To make our result more easily comparable to the literature, let us introduce more formally the microcanonical ensemble: given the decomposition $H=\sum_E EP(E)$, we define the microcanonical ensemble state 
\begin{align*}
    \omega_{E,\Delta}:=\frac{P(E,\Delta)}{\tr(P(E,\Delta))}\,,
\end{align*}
where $P(E,\Delta)$ corresponds to the projection onto the subspace spanned by the eigenvectors whose energy belongs to the interval $(E-\Delta,E]$.

\begin{cor}\label{coro:equivalence}
Assume the Gibbs state $\omega$ satisfies $C(\omega) \le Cn$. Then for any Lipschitz observable $O$,
\begin{align*}
\frac{1}{n}\,\big|\tr(\omega O)-\tr(\omega_{E,\Delta} O)\big|\le \|O\|_L\,o_{n\to\infty}(1)\,.
\end{align*}

\end{cor}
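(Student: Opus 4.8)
The plan is to combine the dual (Lipschitz) formulation of the $W_1$ distance with the transportation cost inequality, reducing everything to the \emph{equivalence of entropies} between the microcanonical and canonical ensembles. Throughout, fix $\beta<\beta_c$, write $\omega=\omega_\beta$, $Z_\beta=\tr e^{-\beta H}$, $E_\beta=\tr(\omega H)$ and $\mathcal N_{E,\Delta}=\tr P(E,\Delta)$, and recall that $(E,\Delta)$ is taken in the regime in which the statement is meant: $\Delta$ sub-extensive, and $E$ chosen so that $\omega_{E,\Delta}$ has the same energy density as $\omega$ up to $o(1)$ (which forces $|E-E_\beta|=O(\Delta)$). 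First I would pass to the $W_1$ distance: by the dual formulation $\left\|\rho-\omega\right\|_{W_1}=\max\{\tr[(\rho-\omega)K]:\|K\|_L\le1\}$ recalled in Section~2, for any observable $O$ one has $|\tr[(\omega-\omega_{E,\Delta})O]|\le\|O\|_L\left\|\omega-\omega_{E,\Delta}\right\|_{W_1}$. Applying the transportation cost inequality $\operatorname{TC}(Cn)$ for $\omega$ (valid since $\beta<\beta_c$ by \autoref{thmTCcurvature}, and likewise under the hypotheses of \autoref{thm:Markov} and \autoref{thm:LS}) gives
\[
\frac1n\big|\tr[(\omega-\omega_{E,\Delta})O]\big|\ \le\ \|O\|_L\,\sqrt{\frac{C}{n}\,S(\omega_{E,\Delta}\|\omega)}\,,
\]
so it suffices to show the equivalence of entropies $S(\omega_{E,\Delta}\|\omega)=o(n)$.

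To bound $S(\omega_{E,\Delta}\|\omega)$ I would write it out explicitly: since $\ln\omega_{E,\Delta}=-\ln\mathcal N_{E,\Delta}$ on its support and $\ln\omega=-\beta H-\ln Z_\beta$,
\[
S(\omega_{E,\Delta}\|\omega)=-\ln\mathcal N_{E,\Delta}+\beta\,\tr(\omega_{E,\Delta}H)+\ln Z_\beta\ \le\ -\ln\mathcal N_{E,\Delta}+\beta E+\ln Z_\beta\,,
\]
using $\tr(\omega_{E,\Delta}H)\le E$. It remains to prove $\ln\mathcal N_{E,\Delta}\ge\ln Z_\beta+\beta E-o(n)$, which I would obtain from concentration of the energy in the Gibbs state. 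Applying \autoref{thmGaussiantail} to the commuting observable $O=H$ — whose Lipschitz constant is $O(1)$, since each site belongs to $O(1)$ bounded interaction terms — yields $\mathbb{P}_\omega(|H-E_\beta|\ge t)\le 2\exp(-t^2/(Cn\|H\|_L^2))$. Hence, for $|E-E_\beta|=O(\Delta)$ and $\Delta$ of order $\sqrt n$ or larger, the shell $(E-\Delta,E]$ carries Gibbs weight $\tr(\omega\,P(E,\Delta))\ge\tfrac12$; since $e^{-\beta\epsilon_m}\le e^{-\beta(E-\Delta)}$ whenever $\epsilon_m\in(E-\Delta,E]$ and $\beta\ge0$, this gives $\tfrac12\le Z_\beta^{-1}\,e^{-\beta(E-\Delta)}\mathcal N_{E,\Delta}$, i.e. $\ln\mathcal N_{E,\Delta}\ge\ln Z_\beta+\beta E-\beta\Delta-\ln2$. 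Substituting back,
\[
S(\omega_{E,\Delta}\|\omega)\ \le\ \beta\Delta+\ln2\ =\ o(n)
\]
because $\Delta=o(n)$, which proves the claim.

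The only place where any real work is needed is the lower bound on $\ln\mathcal N_{E,\Delta}$ — equivalently, ruling out that the microcanonical entropy falls short of the canonical one by an extensive amount. The energy-concentration argument above disposes of this cleanly whenever $\Delta\gtrsim\sqrt n$. If one wants to cover thinner shells, down to $\Delta=\Theta(1)$, the fix is standard: partition the spectrum into $O(n/\Delta)$ windows of width $\Delta$, bound $Z_\beta\le e^{\beta\Delta}\,\tfrac{O(n)}{\Delta}\max_k\mathcal N_{E_k,\Delta}e^{-\beta E_k}$ by the pigeonhole principle, so that for the maximizing window $\ln\mathcal N_{E_k,\Delta}\ge\ln Z_\beta+\beta E_k-\beta\Delta-\ln(O(n)/\Delta)$, losing only the harmless term $\ln(O(n)/\Delta)=O(\log n)=o(n)$; one then checks, again via the Gaussian concentration of $H$ under $\omega$, that the admissible shells (those with energy density equal to that of $\omega$) lie within $O(\sqrt{n\log n})$ of $E_\beta$, where $\tr(\omega P(E,\Delta))$ is not super-polynomially small, and the same estimate $S(\omega_{E,\Delta}\|\omega)\le\beta\Delta+o(n)=o(n)$ goes through. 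Combining with the first display yields $\tfrac1n|\tr(\omega O)-\tr(\omega_{E,\Delta}O)|\le\|O\|_L\,o_{n\to\infty}(1)$, as claimed.
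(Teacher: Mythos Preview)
Your proposal is correct and follows essentially the same route as the paper: both reduce via the transportation cost inequality to $S(\omega_{E,\Delta}\|\omega)=o(n)$, write the relative entropy as $\beta E+\ln Z_\beta-\ln\mathcal N_{E,\Delta}$, and lower-bound $\ln\mathcal N_{E,\Delta}$ by combining the Gaussian concentration of $H$ under $\omega$ (\autoref{thmGaussiantail}) with a pigeonhole over $\Delta$-windows. The only organizational difference is that the paper first localizes to the $O(\sqrt n)$ concentration interval around $E_\beta$ before partitioning (giving $O(\sqrt n/\Delta)$ windows rather than your $O(n/\Delta)$), but both counts give $\ln(\#\text{windows})=o(n)$ for any $\Delta=e^{-o(n)}$, so the paper's thin-shell conclusion also follows from your argument despite your more conservative remark ``down to $\Delta=\Theta(1)$''.
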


\begin{proof}
In view of \autoref{prop:equiv}, it suffices to control the relative entropy between the microcanonical and canonical ensemble states. Then, 
\begin{align}\label{eq:relativeentropy}
    S(\omega_{E,\Delta}\|\omega)=\ln\frac{\tr(e^{-\beta H})}{\tr(P(E,\Delta))}+\beta\tr\Big[H\,\frac{P(E,\Delta)}{\tr(P(E,\Delta))}\Big]&\le \beta E+\ln\Big[\frac{\tr(e^{-\beta H})}{\tr(P(E,\Delta))}\Big]\,.
    \end{align}
    Next, we control the ration $\frac{\tr(e^{-\beta H})}{\tr(P(E,\Delta))}$. For this, we use an argument which was already used in \cite[Equation (S.56)]{kuwahara2020eigenstate}:
   First, we have found in \eqref{commutingcaseconcentration} that 
   \begin{align*}
       \mathbb{P}_\omega(|H-\tr(\omega H)\mathbb{I}|\ge r)\le 2\,e^{-\frac{r^2}{Cn\|H\|_L^2}}\,.
   \end{align*}
   Therefore, choosing the interval $\tilde{\Delta}:=(\tr(\omega H)-\sqrt{Cn\ln(4)}\|H\|_L,\tr(\omega H)+\sqrt{Cn\ln(4)}\|H\|_L]$, we have
   \begin{align*}
       \tr\Big[\sum_{E\in \tilde{\Delta}}\,\frac{e^{-\beta E}}{\tr(e^{-\beta H})} P(E)\Big]\ge \frac{1}{2} 
   \end{align*}
  Next, we define 
\begin{align} \label{eq:ZZtilde}
\tilde{Z}:=\tr\Big[\sum_{E\in\tilde{\Delta}} e^{-\beta E}P(E)\Big]\ge \frac{\tr(e^{-\beta H})}{2}
\end{align}
    Choosing a slightly extended interval $\tilde{\Delta}':=(\tr(\omega H)-\sqrt{Cn\ln(4)}\|H\|_L-\Delta,\tr(\omega H)+\sqrt{Cn\ln(4)}\|H\|_L+\Delta]$, we have 
    \begin{align}\label{eq:Ztildebound}
        \tilde{Z}\le \sum_{\nu\in\mathbb{Z}:\,\nu\Delta\in \tilde{\Delta}'}\tr(P(\nu\Delta,\Delta))\,e^{-\beta\Delta(\nu-1)}
    \end{align}
    Now, for $E:=\operatorname{argmax} e^{-\beta E}\tr(P(E,\Delta))$, we have 
    \begin{align*}
        \tr(\nu\Delta,\Delta)e^{-\beta \delta(\nu-1)}\le e^{\beta\Delta}e^{-\beta E}\,\tr(P(E,\Delta))\,.
    \end{align*}
    Replacing in \eqref{eq:Ztildebound}, we have that 
    \begin{align*}
        \tilde{Z}\le e^{\beta\Delta}(2+2\sqrt{Cn\ln(4)}\|H\|_L/\Delta)e^{-\beta E}\tr(P(E,\Delta))
    \end{align*}
    Finally, using the lower bound \eqref{eq:ZZtilde}, we have that
    \begin{align*}
        \ln\Big[\frac{\tr(e^{-\beta H})}{\tr(P(E,\Delta))}\Big]\le \beta(\Delta-E)+\ln(4+4\sqrt{Cn\ln(4)}\|H\|_L/\Delta)
    \end{align*}
    Therefore, plugging this last bound into \eqref{eq:relativeentropy}, we have found that
    \begin{align*}
        S(\omega_{E,\Delta}\|\omega)\le \beta\Delta+\ln(4+4\sqrt{Cn\ln(4)}\|H\|_L/\Delta)\,.
    \end{align*}
    Therefore, $ S(\omega_{E,\Delta}\|\omega)=o(n)$ whenever $\Delta=e^{-o(n)}$, and the result follows.
    
    \end{proof}
In \cite[Theorem 2]{kuwahara2020eigenstate}, it is showed that, under the $(r_0,\xi)$-clustering of correlations \eqref{clustering}, for any observable $O:=\sum_{v}O_v$ where each $O_v$ acts on spin $v$ as well as other spins $w$ with $\operatorname{dist}(v,w)\le \ell$ and has $\|O_v\|_\infty\le 1$, 
\begin{align*}
  \frac{1}{n}\,  \big|\tr(\omega_{E,\Delta} O)-\tr(\omega O) \big|\le \frac{1}{\sqrt{n}}\, \max(c_1B_1,c_2B_2),
\end{align*}
where $B_1:=\log(\sqrt{n}/\Delta)^{\frac{d+1}{2}}$, $B_2:=(\ell^D\log(\sqrt{n}/\Delta))^{\frac{1}{2}}$, and the constants $c_1$ and $c_2$ depend on $D,r_0,\xi$ and the locality $k$ of $H$. Therefore, as long as the energy shell $\Delta$ is chosen as $\Delta\sim e^{-\mathcal{O}(n^{\frac{1}{D+1}})}$ the averages of the operator density $\frac{O}{n}$ in the canonical and microcanonical ensemble states converge to the same number as $n\to\infty$. Similar bounds were also derived in \cite[Corollary 3]{Kuwahara2020} for larger classes of non-local Hamiltonians and observables above some threshold temperature. \autoref{coro:equivalence} constitutes an improvement over these results in two senses: Firstly, it applies to a more general class of Lipschitz observables. Secondly, it allows for a smaller energy shell $\Delta=e^{-o(n)}$. However, the condition $\operatorname{TC}(Cn)$ is currently only known to hold for the smaller class of local commuting Hamiltonians.

\section*{Acknowledgements}

The research of CR has been supported by project QTraj (ANR-20-CE40-0024-01) of the French National Research Agency (ANR) and by  a  Junior  Researcher  START  Fellowship  from  the MCQST. 

\bibliographystyle{unsrt}
\bibliography{markov}

\begin{thebibliography}{10}

\bibitem{raginsky2012concentration}
Maxim Raginsky and Igal Sason.
\newblock Concentration of measure inequalities in information theory,
  communications and coding.
\newblock {\em arXiv preprint arXiv:1212.4663}, 2012.

\bibitem{boucheron2013concentration}
St{\'e}phane Boucheron, G{\'a}bor Lugosi, and Pascal Massart.
\newblock {\em Concentration inequalities: A nonasymptotic theory of
  independence}.
\newblock Oxford university press, 2013.

\bibitem{kontorovich2017concentration}
Aryeh Kontorovich and Maxim Raginsky.
\newblock Concentration of measure without independence: a unified approach via
  the martingale method.
\newblock In {\em Convexity and Concentration}, pages 183--210. Springer, 2017.

\bibitem{marton1986simple}
Katalin Marton.
\newblock A simple proof of the blowing-up lemma (corresp.).
\newblock {\em IEEE Transactions on Information Theory}, 32(3):445--446, 1986.

\bibitem{bobkov1999exponential}
Sergej~G Bobkov and Friedrich G{\"o}tze.
\newblock Exponential integrability and transportation cost related to
  logarithmic sobolev inequalities.
\newblock {\em Journal of Functional Analysis}, 163(1):1--28, 1999.

\bibitem{dobrushin1970prescribing}
Roland~L Dobrushin.
\newblock Prescribing a system of random variables by conditional
  distributions.
\newblock {\em Theory of Probability \& Its Applications}, 15(3):458--486,
  1970.

\bibitem{marton2019logarithmic}
Katalin Marton.
\newblock Logarithmic sobolev inequalities in discrete product spaces.
\newblock {\em Combinatorics, Probability \& Computing}, 28(6):919--935, 2019.

\bibitem{dobrushin1987completely}
Roland~L Dobrushin and Senya~B Shlosman.
\newblock Completely analytical interactions: constructive description.
\newblock {\em Journal of Statistical Physics}, 46(5):983--1014, 1987.

\bibitem{kulske2003concentration}
Christof K{\"u}lske.
\newblock Concentration inequalities for functions of gibbs fields with
  application to diffraction and random gibbs measures.
\newblock {\em Communications in mathematical physics}, 239(1):29--51, 2003.

\bibitem{Junge2014}
Marius Junge and Qiang Zeng.
\newblock Noncommutative martingale deviation and poincar{\'{e}} type
  inequalities with applications.
\newblock {\em Probability Theory and Related Fields}, 161(3-4):449--507,
  February 2014.

\bibitem{Tropp2015}
Joel~A. Tropp.
\newblock An introduction to matrix concentration inequalities.
\newblock {\em Foundations and Trends{\textregistered} in Machine Learning},
  8(1-2):1--230, 2015.

\bibitem{Huang2021}
De~Huang and Joel~A. Tropp.
\newblock From poincar{\'{e}} inequalities to nonlinear matrix concentration.
\newblock {\em Bernoulli}, 27(3), May 2021.

\bibitem{Rouz2019}
Cambyse Rouz{\'{e}} and Nilanjana Datta.
\newblock Concentration of quantum states from quantum functional and
  transportation cost inequalities.
\newblock {\em Journal of Mathematical Physics}, 60(1):012202, January 2019.

\bibitem{de2020quantum}
Giacomo De~Palma, Milad Marvian, Dario Trevisan, and Seth Lloyd.
\newblock The {Q}uantum {W}asserstein {D}istance of {O}rder 1.
\newblock {\em IEEE Transactions on Information Theory}, 67(10):6627--6643,
  2021.

\bibitem{DePalma2021}
Giacomo~De Palma and Dario Trevisan.
\newblock Quantum {O}ptimal {T}ransport with {Q}uantum {C}hannels.
\newblock {\em Annales Henri Poincar{\'{e}}}, 22:3199–3234, 2021.

\bibitem{Golse2016}
Fran{\c{c}}ois Golse, Cl{\'{e}}ment Mouhot, and Thierry Paul.
\newblock On the mean field and classical limits of quantum mechanics.
\newblock {\em Communications in Mathematical Physics}, 343(1):165--205,
  January 2016.

\bibitem{cole2021quantum}
Sam Cole, Micha{\l} Eckstein, Shmuel Friedland, and Karol {\.Z}yczkowski.
\newblock Quantum optimal transport.
\newblock {\em arXiv preprint arXiv:2105.06922}, 2021.

\bibitem{carlen2017gradient}
Eric~A Carlen and Jan Maas.
\newblock Gradient flow and entropy inequalities for quantum markov semigroups
  with detailed balance.
\newblock {\em Journal of Functional Analysis}, 273(5):1810--1869, 2017.

\bibitem{bardet2021entropy}
Ivan Bardet, Ángela Capel, Li~Gao, Angelo Lucia, David Pérez-García, and
  Cambyse Rouzé.
\newblock Entropy decay for davies semigroups of a one dimensional quantum
  lattice, 2021.

\bibitem{bardet2021rapid}
Ivan Bardet, Ángela Capel, Li~Gao, Angelo Lucia, David Pérez-García, and
  Cambyse Rouzé.
\newblock Rapid thermalization of spin chain commuting hamiltonians, 2021.

\bibitem{nielsen2002quantum}
Michael~A Nielsen and Isaac Chuang.
\newblock Quantum computation and quantum information, 2002.

\bibitem{wilde2013quantum}
Mark~M Wilde.
\newblock {\em Quantum information theory}.
\newblock Cambridge University Press, 2013.

\bibitem{holevo2012quantum}
Alexander~S Holevo.
\newblock {\em Quantum systems, channels, information: a mathematical
  introduction}, volume~16.
\newblock Walter de Gruyter, 2012.

\bibitem{donald1986relative}
Matthew~J Donald.
\newblock On the relative entropy.
\newblock {\em Communications in Mathematical Physics}, 105(1):13--34, 1986.

\bibitem{petz1986sufficient}
D{\'e}nes Petz.
\newblock Sufficient subalgebras and the relative entropy of states of a von
  neumann algebra.
\newblock {\em Communications in mathematical physics}, 105(1):123--131, 1986.

\bibitem{hiai1991proper}
Fumio Hiai and D{\'e}nes Petz.
\newblock The proper formula for relative entropy and its asymptotics in
  quantum probability.
\newblock {\em Communications in mathematical physics}, 143(1):99--114, 1991.

\bibitem{berta2017variational}
Mario Berta, Omar Fawzi, and Marco Tomamichel.
\newblock On variational expressions for quantum relative entropies.
\newblock {\em Letters in Mathematical Physics}, 107(12):2239--2265, 2017.

\bibitem{junge2018universal}
Marius Junge, Renato Renner, David Sutter, Mark~M Wilde, and Andreas Winter.
\newblock Universal recovery maps and approximate sufficiency of quantum
  relative entropy.
\newblock In {\em Annales Henri Poincar{\'e}}, volume~19, pages 2955--2978.
  Springer, 2018.

\bibitem{sutter2017multivariate}
David Sutter, Mario Berta, and Marco Tomamichel.
\newblock Multivariate trace inequalities.
\newblock {\em Communications in Mathematical Physics}, 352(1):37--58, October
  2016.

\bibitem{hayden2004structure}
Patrick Hayden, Richard Jozsa, Denes Petz, and Andreas Winter.
\newblock Structure of states which satisfy strong subadditivity of quantum
  entropy with equality.
\newblock {\em Communications in mathematical physics}, 246(2):359--374, 2004.

\bibitem{tomamichel2015quantum}
M.~Tomamichel.
\newblock {\em Quantum Information Processing with Finite Resources:
  Mathematical Foundations}.
\newblock SpringerBriefs in Mathematical Physics. Springer International
  Publishing, 2015.

\bibitem{Ollivier2009}
Yann Ollivier.
\newblock Ricci curvature of {M}arkov chains on metric spaces.
\newblock {\em Journal of Functional Analysis}, 256(3):810--864, February 2009.

\bibitem{Eldan2017}
Ronen Eldan, James~R. Lee, and Joseph Lehec.
\newblock Transport-entropy inequalities and curvature in discrete-space markov
  chains.
\newblock In {\em A Journey Through Discrete Mathematics}, pages 391--406.
  Springer International Publishing, 2017.

\bibitem{Griffiths1967}
Robert~B. Griffiths.
\newblock Correlations in ising ferromagnets. {III}.
\newblock {\em Communications in Mathematical Physics}, 6(2):121--127, June
  1967.

\bibitem{Kastoryano2013}
Michael~J. Kastoryano and Kristan Temme.
\newblock Quantum logarithmic sobolev inequalities and rapid mixing.
\newblock {\em Journal of Mathematical Physics}, 54(5):052202, May 2013.

\bibitem{gao2021spectral}
Li~Gao and Cambyse Rouz{\'e}.
\newblock Complete entropic inequalities for quantum {M}arkov chains.
\newblock {\em arXiv preprint arXiv:2102.04146}, 2021.

\bibitem{Carlen_2019}
Eric~A. Carlen and Jan Maas.
\newblock Non-commutative calculus, optimal transport and functional
  inequalities in dissipative quantum systems.
\newblock {\em Journal of Statistical Physics}, 178(2):319--378, nov 2019.

\bibitem{capel2020modified}
{\'A}ngela Capel, Cambyse Rouz{\'e}, and Daniel~Stilck Fran{\c{c}}a.
\newblock The modified logarithmic {S}obolev inequality for quantum spin
  systems: classical and commuting nearest neighbour interactions.
\newblock {\em arXiv preprint arXiv:2009.11817}, 2020.

\bibitem{cubitt2015stability}
Toby~S Cubitt, Angelo Lucia, Spyridon Michalakis, and David Perez-Garcia.
\newblock Stability of local quantum dissipative systems.
\newblock {\em Communications in Mathematical Physics}, 337:1275--1315, 2015.

\bibitem{kastoryano2016quantum}
Michael~J Kastoryano and Fernando~GSL Brandao.
\newblock Quantum gibbs samplers: the commuting case.
\newblock {\em Communications in Mathematical Physics}, 344(3):915--957, 2016.

\bibitem{brandao2019finite}
Fernando~GSL Brand{\~a}o and Michael~J Kastoryano.
\newblock Finite correlation length implies efficient preparation of quantum
  thermal states.
\newblock {\em Communications in Mathematical Physics}, 365(1):1--16, 2019.

\bibitem{junge2019stability}
Marius Junge, Nicholas LaRacuente, and Cambyse Rouz{\'e}.
\newblock Stability of logarithmic sobolev inequalities under a noncommutative
  change of measure.
\newblock {\em arXiv preprint arXiv:1911.08533}, 2019.

\bibitem{Kuwahara2020}
Tomotaka Kuwahara and Keiji Saito.
\newblock Gaussian concentration bound and ensemble equivalence in generic
  quantum many-body systems including long-range interactions.
\newblock {\em Annals of Physics}, 421:168278, October 2020.

\bibitem{matsumoto2015new}
Keiji Matsumoto.
\newblock A new quantum version of f-divergence.
\newblock In {\em Nagoya Winter Workshop: Reality and Measurement in Algebraic
  Quantum Theory}, pages 229--273. Springer, 2015.

\bibitem{Anshu2016}
Anurag Anshu.
\newblock Concentration bounds for quantum states with finite correlation
  length on quantum spin lattice systems.
\newblock {\em New Journal of Physics}, 18(8):083011, August 2016.

\bibitem{Kuwahara2016}
Tomotaka Kuwahara.
\newblock Connecting the probability distributions of different operators and
  generalization of the chernoff{\textendash}hoeffding inequality.
\newblock {\em Journal of Statistical Mechanics: Theory and Experiment},
  2016(11):113103, November 2016.

\bibitem{kuwahara2020eigenstate}
Tomotaka Kuwahara and Keiji Saito.
\newblock Eigenstate thermalization from the clustering property of
  correlation.
\newblock {\em Physical review letters}, 124(20):200604, 2020.

\bibitem{Kliesch14}
M.~Kliesch, C.~Gogolin, M.~J. Kastoryano, A.~Riera, and J.~Eisert.
\newblock Locality of temperature.
\newblock {\em Phys. Rev. X}, 4:031019, Jul 2014.

\bibitem{lima1971equivalence}
R~Lima.
\newblock Equivalence of ensembles in quantum lattice systems.
\newblock {\em Annales de l'IHP Physique th{\'e}orique}, 15(1):61--68, 1971.

\bibitem{lima1972equivalence}
R~Lima et~al.
\newblock Equivalence of ensembles in quantum lattice systems: states.
\newblock {\em Communications in Mathematical Physics}, 24(3):180--192, 1972.

\bibitem{muller2015thermalization}
Markus~P M{\"u}ller, Emily Adlam, Llu{\'\i}s Masanes, and Nathan Wiebe.
\newblock Thermalization and canonical typicality in translation-invariant
  quantum lattice systems.
\newblock {\em Communications in Mathematical Physics}, 340(2):499--561, 2015.

\bibitem{brandao_equivalence_2015}
Fernando G. S.~L. Brand{\~a}o and Marcus Cramer.
\newblock Equivalence of {Statistical} {Mechanical} {Ensembles} for
  {Non}-{Critical} {Quantum} {Systems}.
\newblock {\em arXiv:1502.03263 [cond-mat, physics:quant-ph]}, February 2015.
\newblock arXiv: 1502.03263.

\bibitem{deutsch1991quantum}
Josh~M Deutsch.
\newblock Quantum statistical mechanics in a closed system.
\newblock {\em Physical Review A}, 43(4):2046, 1991.

\bibitem{srednicki_chaos_1994}
Mark Srednicki.
\newblock Chaos and quantum thermalization.
\newblock {\em Physical Review E}, 50(2):888--901, August 1994.

\bibitem{gogolin2016equilibration}
Christian Gogolin and Jens Eisert.
\newblock Equilibration, thermalisation, and the emergence of statistical
  mechanics in closed quantum systems.
\newblock {\em Reports on Progress in Physics}, 79(5):056001, 2016.

\bibitem{reimann2018dynamical}
Peter Reimann.
\newblock Dynamical typicality approach to eigenstate thermalization.
\newblock {\em Physical review letters}, 120(23):230601, 2018.

\bibitem{de2015necessity}
Giacomo De~Palma, Alessio Serafini, Vittorio Giovannetti, and Marcus Cramer.
\newblock Necessity of eigenstate thermalization.
\newblock {\em Physical Review Letters}, 115(22):220401, 2015.

\bibitem{biroli2010effect}
Giulio Biroli, Corinna Kollath, and Andreas~M L{\"a}uchli.
\newblock Effect of rare fluctuations on the thermalization of isolated quantum
  systems.
\newblock {\em Physical review letters}, 105(25):250401, 2010.

\end{thebibliography}

\end{document}